\newtheorem{prop}{Proposition}
\newtheorem{lem}[prop]{Lemma}
\newtheorem{corollary}{Corollary}
\newtheorem{prop1}{Proposition}[section]
\newtheorem{theorem}[prop1]{Theorem}
\newtheorem{rmk1}{Remark}
\newcommand{\ba}{\begin{eqnarray}}
\newcommand{\ea}{\end{eqnarray}}
\newcommand{\bi}{\begin{itemize}}
\newcommand{\ei}{\end{itemize}}
\newcommand{\ben}{\begin{enumerate}}
\newcommand{\een}{\end{enumerate}}
\newcommand{\blem}{\begin{lem}}
\newcommand{\elem}{\end{lem}}
\newcommand{\bteo}{\begin{theorem}}
\newcommand{\eteo}{\end{theorem}}
\newcommand{\bcor}{\begin{corollary}}
\newcommand{\ecor}{\end{corollary}}
\newcommand{\be}{\begin{equation}}
\newcommand{\ee}{\end{equation}}
\newcommand{\df}{\text{d}}
\newcommand{\Perm}{\text{P}}
\begin{document}

	\title{\textbf{ {\Large A Lucas Critique Compliant SVAR model with Observation-driven Time-varying Parameters}}
		\author{Giacomo Bormetti\thanks{%
				University of Bologna, Italy E-mail: giacomo.bormetti@unibo.it} \and
			Fulvio Corsi\thanks{%
				University of Pisa, Italy E-mail: fulvio.corsi@unipi.it} 
		}
		\date{\today}
	}
	\maketitle
	
	\begin{abstract}
		We propose an observation-driven time-varying SVAR model where, in agreement with the Lucas Critique, structural shocks drive both the evolution of the macro variables and the dynamics of the VAR parameters. Contrary to existing approaches where parameters follow a stochastic process with random and exogenous shocks, our observation-driven specification allows the evolution of the parameters to be driven by realized past structural shocks, thus opening the possibility to gauge the impact of observed shocks and hypothetical policy interventions on the future evolution of the economic system.	
		
		\vspace{2cm}
		\noindent \textbf{Keywords}: Time-varying VAR models, Independent Component Analysis, Score-driven models\\
		\noindent \textbf{JEL}: C14, C32, C51
		
	\end{abstract}

	\newpage
	
	\section{Introduction}\label{section:introduction}

	{\emph{``The behavioral parameters $\theta$ vary systematically with the
			parameters $\lambda$ governing policy and other ``shocks". The econometric problem
			in this context is that of estimating the function $\theta(\lambda)$. [...]  A
			change in policy (in $\lambda$) affects the behavior of the system in two ways: first by
			altering the time series behavior of $x_t$; second by leading to modification of the
			behavioral parameters $\theta(\lambda)$ governing the rest of the system."}}
	
	\vspace{-0.3cm}
	\begin{flushright}
		{--- Lucas 1976, page 40.}
	\end{flushright}

	\vspace{0.5cm}
	Methods based on vector autoregressive models (VAR) are widely used to model the dynamics of aggregate macroeconomic variables. However, as forcefully argued by \cite{Lucas_1976}, rational agents will adapt to the new conditions expected to prevail after a shock by changing their behavior, thereby inducing variability in the parameters of the econometric models. Largely in response to this momentous critique, time varying parameter models have received a great and increasing attention in the macroeconometric literature. \cite{Cox} divides time-series models with time-varying parameters into parameter-driven (PD) led by latent random shocks and observation-driven (OD) models led by past observations. To accommodate parameter time variation in VAR models, two main alternative PD approaches have been proposed in the literature. The first one employs models formulated as state space models with unobserved stochastic components. This strand of literature started with \cite{primiceri2005time} and \cite{cogley2005drifts} and has been growing at a fast pace (see \citealt{lubik2015time} for a survey). The second one, is based on nonparametric methods for the estimation of the coefficient and variance processes in a time varying linear regression setting which can be extended to a general local likelihood framework, see e.g. \cite{giraitis2014inference} and \cite{giraitis2018inference}. However, neither the state space approach nor the nonparametric local likelihood one, allow the parameters to directly react in response to structural shocks and policy interventions.
	
	In this paper, we propose an OD time-varying VAR model where, in agreement with the reported Lucas quote, structural shocks drive both the evolution of the macro variables and the dynamics of the VAR parameters. 
	Contrary to PD approaches which assume that parameters are driven by unobserved random and exogenous shocks, OD models allow the dynamics of time varying parameters to be a function of the past structural shocks. Therefore, OD models represent a natural framework to allow the parameters of the VAR model to respond to past structural shocks, and thus to gauge the impact of hypothetical policy interventions. This opens the possibility for developing a new econometric venue to comply with the profound Lucas Critique on the use of macroeconometric models for policy evaluations.
	
	The first step in obtaining a Lucas Critique compliant Structural VAR (SVAR) model is the identification of the structural shocks in the context of time-varying parameters. In the SVAR  literature, identification is typically achieved by resorting to some sort of short-run, long-run or sign economic restrictions (which need to be assumed), exogenous instrumental variables (which need to be found) or by exploiting statistical features of the data such as the heteroscedasticity of the covariance matrix of innovations (under the assumption of a constant mixing matrix). 
	In our framework, we need an identification scheme robust to the presence of time-variation in all the parameters in the model, what we call `dynamic identification'. To this purpose, we exploit and extend the recent strand of literature on the identification of non-Gaussian SVAR models popularized by the work of \cite{gourieroux2017statistical}, \cite{lanne2017identification}\footnote{This literature  has undergone a fast growth in the recent years: \cite{lanne2010structural, hyvarinen2010estimation, moneta2013causal, capasso2016macroeconomic, herwartz2016macroeconomic, herwartz2018hodges,  bernoth2021exchange, coad2019firm, herwartz2019long,  puonti2019data,  tank2019identifiability, cordoni2019identification, bekaert2021macro,  bekaert2020aggregate, guay2020identification,   gourieroux2020identification,  maxand2020identification, lanne2021gmm}. } by allowing time-variation in all the parameters (including the ones in the mixing matrix) obtaining dynamic identification of the structural shocks. 
	
	The second step, as prescribed by Lucas (1976) in the reported quote, is the estimation of the reaction function of the parameters $\theta_{t+1}$ to the structural shocks $\epsilon_t$, i.e. $\theta_{t+1} = g_0(\theta_t, \epsilon_t)$ in our notation. In principle, this function is the result of a long chain of events. The structural shocks, by altering the economic environment, changes the expectations of each  agent in the economy. Changed expectations will induce modifications in the  behavior of individual agents. The aggregation of the new individual behaviors will determine new types of dependences among the macroeconomic variables which will be finally reflected in different values of the parameters of the SVAR model. The vast majority of the economic literature traditionally followed the approach of trying to model in detail each step of this complex chain reaction. Instead, following what seems to be the direction also suggested by Lucas himself in the last section of his famous 1976 paper, we propose to adopt recently introduced econometric tools to directly approximate the reaction function $g_0(\theta_t, \epsilon_t)$ in a reduced form framework. 
To this purpose, we employ the general score-driven (SD) methodology to update the time-varying parameters in OD models introduced by \cite{GAS1} and \cite{Harvey_2013}\footnote{Also known as Dynamic Conditional Score (DCS) or Generalized Autoregressive Score (GAS) models.}.  
	We build upon \cite{blasques2015information}, who characterize the necessary and sufficient conditions under which the parameter update is successful in reducing the Kullback-Leibler divergence between the true and the model implied conditional densities at each time step. We show, by means of Monte Carlo simulations that SD models provide, indeed, effective approximations to the unobserved evolution of time-varying parameters under several settings. The main consequence of such results is that,  %, under mild regularity assumptions, 
	from an information theoretic perspective within a reduced form approach, it is optimal to approximate the reaction function $\theta_{t+1} = g_0(\theta_t, \epsilon_t)$ that links the future dynamics of the parameters in response to a given structural shock or policy intervention with a function proportional to the score of the conditional density.
	
	In addition to these theoretical features which make OD models particularly apt to the context of the Lucas Critique, they also possess considerable computational advantages compared to PD models.  Although highly flexible and widespread, PD models suffer from the shortcoming that their likelihood function is rarely available in closed form. Apart from the case of linear Gaussian models, in which the Kalman approach provides a formidable tool to filtering, the estimation of PD models requires the evaluation of complex multidimensional integrals. The standard approach is to approximate these integrals through Monte Carlo techniques, which are notoriously computationally demanding. 
	In contrast, the likelihood of SD models can always be written in closed form via prediction-error decomposition, allowing for straightforward maximum likelihood estimation.	
	Moreover, models with time-varying parameters typically suffer from the so-called curse of dimensionality. This problem is particularly relevant when considering VAR models, which are richly parametrized even for a small number of endogenous variables and lags. SD models, through a parsimonious specification of the SD updating equations for the model parameters, enable to readily achieve a dramatic reduction in the number of parameters, in principle, even in high-dimensional models. Hence, with respect to competitor PD models, the SD approach will benefit from the ease of estimation of the low dimensional vector of static parameters. Due to these computational advantages, SD dynamics have been already employed in macroeconometric models by \cite{DelleMonache2016common, DelleMonache2016adaptive, DelleMonache2017adaptive, DelleMonache2021adaptive}, \cite{angelini2018dsge}, \cite{blazsek2019co}, \cite{gorgi2021vector}. However, none of these works considers the problem of the identification of structural shocks and the theoretical properties of the OD models in relation to the Lucas Critique.  
	
  In this paper, we introduce an OD SVAR  model for the dynamics of macroeconomic variables. The auto-regressive coefficients, the mixing matrix and the covariance matrix of reduced residuals are time-varying. Building on the independent component analysis of~\cite{gourieroux2017statistical}, we derive the closed-form set of recursive equations filtering the unobserved parameter dynamics. We also prove that no issues concerning the identification of static parameters arise. Section~\ref{sec:MC} presents an extensive simulation analysis testing the ability of the approximate recursions to recover the unobserved parameter dynamics under different settings. Specifically, we provide convincing evidence that our approach is robust with respect to misspecification. Section~\ref{sec:realdata} discusses an application to a data sample of US macro time-series which includes inflation, economic activity, and interest rates on a monthly basis. We enlighten a significant heteroscedasticity of the variance of structural shocks and time-variation of the auto-regressive coefficients, confirming previous evidence from different streams of literature. However, our approach does not require the assumption of any identification restriction. Then, crucially, all our conclusions are purely data driven. We can show in an unprecedented way that the orthogonal matrix, which possibly mixes the shocks, does not vary with time. Moreover, at monthly frequency, it is not statistically distinguishable from the identity matrix. We also report the conditional impulse response functions. Their computation is performed following a standard Monte Carlo approach. What is essentially new is that, in our framework, a future structural shock will change both the evolution of the macro variables and of the time-varying parameters. The shape of the impulse response functions then reflects, by construction, both effects thus allowing, in principle, to analyze the impact of policy interventions.  
 	
	\section{The General Macroeconometric Setting}\label{section:model}
	
	There has been a long debate on whether the microfounded DSGE approach is the only possible modeling framework coherent with the Lucas Critique (see \citealt{hendry2018future} and \citealt{sergi2021dsge} for a recent review of this debate). Indeed, Lucas himself, in the last section of his paper where he presents his ``positive" prescriptions on how economic models should be built, seems to suggest that his critique could also be tackled in a reduced form econometric framework. Lucas first describes ``what kind of structure would be at once consistent with the theoretical considerations raised" (page 40) in the previous sections. He suggests that macroeconometric models should be represented by a system of two difference equations: 
	\begin{eqnarray}
	y_{t+1} &=& F(y_t, x_t, \theta(\lambda), \varepsilon_t)  \label{Lucas_F}\\
	x_t &=& G(y_t, \lambda, \eta_t)  \label{Lucas_G}
	\end{eqnarray}    
	where $y_t$ being the endogenous state variables, $x_t$ the exogenous observable variables, $\theta$ the behavioral parameter, $\lambda$ the parameter determining the government policies and other shocks, $\varepsilon_t$ and $\eta_t$ i.i.d. disturbances.
	
	The Lucas Critique called in to question models that use static behavioral parameters
	$\theta$, instead of a dynamic reaction function $\theta(\lambda)$, taking into account the changes in individual behavior in response to the expected evolution of the environment.
	Lucas maintains that ``a change in policy (in $\lambda$) affects the behavior of the system in two ways: first by  altering the time series behavior of $x_t$; second by leading to modification of the  behavioral parameters $\theta(\lambda)$ governing the rest of the system"(page 40).
	Hence, Lucas explicitly states that ``The econometric problem
	in this context is that of estimating the function $\theta(\lambda)$" (page 40).
	He even went on discussing circumstances under which ``there is some
	hope that the resulting structural changes can be forecast on the basis of estimation
	from past data of $\theta(\lambda)$" (page 41), in relation to the ways the new policies are announced. Concluding, in his final remarks, that ``conditional
	forecasting under the alternative structure (\ref{Lucas_F}) and (\ref{Lucas_G}) is, while scientifically more demanding, entirely operational" (page 42).
	
	Our framework is conceptually similar to that dictated by Lucas but, following the traditional SVAR modelling, we do not distinguish between endogenous state variables and exogenous observable variables and the whole dynamical system is solely driven by the structural shocks without the addition of other disturbances.
	Importantly, however, as prescribed by Lucas, we consider a dynamic evolution of the parameter vector $\theta_t$ driven by the vector of structural shocks, $\epsilon_t$ in our notation, which also drives the dynamics of the $n$-dimensional vector of macro variables  $y_t$. 
	
	Hence, in our framework, the macroeconometric description of the dynamical system could be represented as, 
	\begin{eqnarray}
	y_t|\theta_t &\sim& p(y_t(\epsilon_t)|\theta_t) \label{eq: y_t} \\ 
	\theta_{t+1} &=&  g_0(\theta_t,y_t(\epsilon_t))   \label{eq: theta_gen}
	\end{eqnarray}
	which is, in fact, a general representation of a generic OD model, see \cite{blasques2015information}.  Equation (\ref{eq: y_t}) could be interpreted as resulting from the substitution of equation (\ref{Lucas_G}) in (\ref{Lucas_F}), while equation (\ref{eq: theta_gen}) as the dynamic version of the Lucas reaction function $\theta(\lambda)$.
	
	The true reaction function of the time-varying parameter to the structural shocks $g_0(\theta_t,\epsilon_t)$ is not known to the Econometrician and therefore, as suggested by Lucas, needs to be econometrically estimated. 
	In order to pursue a reduced form approximation approach of the unknown reaction function, we first express, without loss of generality, equation (\ref{eq: theta_gen}) as
	\begin{eqnarray}
	\theta_{t+1} &=& \omega + \beta \theta_t + \tilde g_0(\theta_t, y_t(\epsilon_t))\,. \label{eq: theta_rw}
	\end{eqnarray}
	The previous equation is the updating relation which governs the autoregressive dynamics of the time-varying parameters. The function $\tilde g_0(\theta_t, y_t(\epsilon_t))$ links the new $\theta_{t+1}$ to the current observation $y_t$ and the current filtered time-varying parameter $\theta_t$. Following \cite{blasques2015information}, we specify the function $\tilde g_0$ in a way that possesses optimality properties from an information-theoretic point of view. Specifically, we set it equal to the scaled score of the conditional observation density, where the positive scaling possibly depends on the filtered time-varying parameters and the static ones. This specification is the sole one which guarantees that the Kullback-Leibler divergence between the true and the model implied conditional densities decreases at each updating step.
	
	\section{Score-driven Time-varying SVARs}\label{section:model}
	
In the following framework, equation (\ref{eq: y_t}) will take the form of a SVAR model. Although equation (\ref{eq: y_t}) could be in principle highly nonlinear, \cite{blasques2020nonlinear} showed that general nonlinear autoregressive models can be equivalently represented as linear autoregressive models with time-varying parameters.	
	
%%%%%%%%%%%%%%%%%%%%%%%%%%%%%%%%%%%%%%%%%%%% Parte Giacomo %%%%%%%%%%%%%%%%%%%%%%%%%5

Let $n$ be the dimension of the vector $y_t$ of macroeconomic variables with time index $t$ ranging from one to $T$. Our SD time-varying SVAR specification of order $p$ of the evolution of $y_t$ is,   
\begin{equation}\label{eq:SD_SVAR}
	y_t = \Phi^1_{t} y_{t-1} + \Phi^2_{t} y_{t-2} + \dots \Phi^p_{t} y_{t-p} + C_t \epsilon_t\,,
\end{equation}
where $\Phi^\ell_{t}$ is the $\ell$-th $n\times n$ auto-regressive coefficient matrix, for $\ell=1,\ldots,p$, and $\epsilon_t$ is a vector of $n$ independent unobserved shocks, whose components have zero mean and unit variance. The mixing matrix $C_t$ is invertible. It is convenient to represent it as the product between a strictly positive lower triangular matrix and an orthogonal matrix as follows
\begin{equation*}
	C_t = \Sigma_t O_t\,.
\end{equation*}
We write $\Sigma_t$ as $\mathrm{e}^{S_t}$, where $S_t$ is a real lower triangular matrix. For any invertible real matrix, there exists a unique real logarithm $S_t$ whose eigenvalues have an imaginary part in $]-\pi,\pi[$, named the principal logarithm~\cite{arsigny2007geometric}. The relation between $\Sigma_t$ and its principal logarithm $S_t$ is then one-to-one. As in~\cite{gourieroux2017statistical}, we parametrise $O_t$ according to the Cayley's representation of any orthogonal matrix with no eigenvalue equal to minus one. The parametrisation involves a skew-symmetric matrix $A_t$, i.e. a matrix such that $A_t^\intercal=-A_t$, and reads
\begin{equation*}
	O_t(A_t) = (\mathbb{I}+A_t)(\mathbb{I}-A_t)^{-1}\,.
\end{equation*}
$A_t$ is in a one-to-one relation with $O_t$. In our framework, all matrices are time-dependent. Dropping the time dependence, it is well-known that under normality assumption for the $\epsilon_t$, two couples $(S,A)$ and $(S^*,A^*)$ are observationally equivalent (o.e.) if $\mathrm{e}^S O(A) O(A)^\intercal \mathrm{e}^{S^\intercal}=\mathrm{e}^{S^*} O(A^*) {O(A^*)}^\intercal \mathrm{e}^{S^{*\intercal}}$. To solve the identification issue, ~\cite{gourieroux2017statistical} show that it is sufficient to assume that i) the shocks $\epsilon_t$ are i.i.d., zero-mean, with covariance equal to the identity matrix, ii) the components of $\epsilon_t$ are mutually independent, and iii) are distributed according to different non Gaussian and asymmetric distributions. Under mild regularity conditions on the pseudo probability density functions  (PDFs) for the components of $\epsilon_t$, \cite{gourieroux2017statistical} prove the existence and consistency of a pseudo maximum likelihood (PML) estimator of the model parameters. The asymptotic accuracy of the PML estimator depends on the choice of the pseudo densities. If they are chosen equal to the true densities the accuracy is maximal. Therefore, selecting a pseudo PDF as different as possible from a Gaussian distribution in order to ensure an easier identification could in principle reduce the accuracy. \cite{gourieroux2017statistical} discuss a two-step estimation approach, where one first estimates the model with a non-efficient PML. In a second step, the PML is re-applied with a new set of pseudo PDF fitted to the approximated residuals from the first step. In our setting, the second step can be achieved by targeting the skewness and kurtosis of the residuals from the non-efficient PML (see Appendix~\ref{ap:appendix1} for the details). We assume that each component of $\epsilon_t$ is described by a skew Student's $t$ pseudo-PDF (\citealt{azzalini2003distributions}) characterised by different tail and asymmetry parameters
\begin{eqnarray}\label{eq:skewt}
	p_{\epsilon_i}(\epsilon_{i,t};&&\mu_i,\sigma_i,\delta_i,\nu_i)  =\nonumber\\ 
	&&\frac{2c(\nu_i)}{\sigma_i v(\delta_i,\nu_i)}\left(1+\frac{\left(\epsilon_{i,t}-\mu_i+m(\delta_i,\nu_i)\sigma_i v(\delta_i,\nu_i)\right)^2}{\sigma_i^2v(\delta_i,\nu_i)^2\nu_i}\right)^{-\frac{1+\nu_i}{2}}T_1(x(\mu_i,\sigma_i,\delta_i,\nu_i);\nu_i+1)\,,\nonumber\\
	&&
\end{eqnarray}
where
\[
c(\nu_i)=\frac{\Gamma\left(\frac{\nu_i+1}{2}\right)}{\Gamma\left(\frac{\nu_i}{2}\right)\sqrt{\nu_i\pi}}\,,\quad
v(\delta_i,\nu_i)=\frac{1}{\sqrt{\frac{\nu_i}{\nu_i-2}-\frac{\delta_i^2\nu_i}{\pi}\left(\frac{\Gamma\left(\frac{\nu_i-1}{2}\right)}{\Gamma\left(\frac{\nu_i}{2}\right)}\right)^2}}\,,\quad
m(\delta_i,\nu_i) = \delta_i \sqrt{\frac{\nu_i}{\pi}}\frac{\Gamma\left(\frac{\nu_i-1}{2}\right)}{\Gamma\left(\frac{\nu_i}{2}\right)}\,,
\]
\[
x(\epsilon_{i,t};\mu_i,\sigma_i,\delta_i,\nu_i) = \frac{\delta_i}{\sqrt{1-\delta_i^2}}\frac{\epsilon_{i,t}-\mu_i+m(\delta_i,\nu_i)\sigma_i v(\delta_i,\nu_i)}{\sqrt{(\epsilon_{i,t}-\mu_i+m(\delta_i,\nu_i)\sigma_i v(\delta_i,\nu_i))^2+\nu_i\sigma_i^2v(\delta_i,\nu_i)^2}}\sqrt{\nu_i+1}\,,
\]
and $T_1(\cdot,\nu_i+1)$ denotes the scalar Student's $t$ distribution function with $\nu_i+1$ degrees of freedom.
To force a zero-mean and unit-variance, we set $\mu_i=0$ and $\sigma_i^2=1$; $\delta_i\in(-1,1)$ is the asymmetry parameter. For $\delta_i=0$, we obtain $m(0,\nu_i) = 0$, $v^2(0,\nu_i)=(\nu_i-2)/\nu_i$, and $T(0,\nu_i+1)=1/2$ and one recovers the usual standard Student's $t$ distribution with $\nu_i>2$ degrees of freedom. Consistently with the independence of components assumption, the log-likelihood of $\epsilon_t$ can be expressed as the sum over the log-likelihood of each component
\begin{equation*}
\log p_\epsilon(\epsilon_t;\delta,\nu)=\sum_{i=1}^n \log p_{\epsilon_i}(\epsilon_{i,t};0,1,\delta_i,\nu_i)\,,
\end{equation*}
where $\delta_i$ and $\nu_i$ correspond to the $i$-th entries of the $n$-dimensional vectors $\delta$ and $\nu$. The previous conditions i) and iii) are satisfied by setting $\delta_i\neq 0$ for all $i=1,\ldots,n$, $\delta_i\neq\delta_j$, $2<\nu_i<+\infty$, and $\nu_i\neq\nu_j$ for each $i\neq j$. The parameters $\delta$ and $\nu$ are constant parameters which can be estimated by means of the two-step procedure commented before.\\ 

In our approach, all matrices $\Phi^\ell$ for $\ell=1,\ldots,p$, $S$, and $A$ are time-varying. Naming $\theta_t$ the vector which collects all time-varying parameters, we have that:
\begin{itemize}
\item[a)] the first $n\times (n+1)/2$ components of $\theta_t$ correspond to the entries of the lower triangular matrix $S_t$;
\item[b)] the subsequent $n\times (n-1)/2$ components correspond to the entries of the upper triangular part of $A_t$. $A_t$ is skew-symmetric and so the diagonal is identically zero while the lower triangular part is equal to the opposite of the upper triangular part;
\item[c)] the final components are the $p\times n^2$ elements of the matrices $\Phi^1_{t},\ldots,\Phi^p_{t}$.
\end{itemize} 
The vector $\theta_t$ belongs to $\mathbb{R}^{\df}$ with $\df=n\times (n+1)/2+n\times (n-1)/2+p\times n^2$. It is worth to stress once more that, thanks to our model specification, all components of the vector $\theta_t$ are unrestricted and can take any real value. Every realization of $\theta_t$ uniquely identifies the matrices $\Sigma_t$, $O_t$ and $\Phi^1_{t},\ldots,\Phi^p_{t}$. The vice-versa also holds true.\\
To proceed, we need to specify the mechanism driving the evolution of the time-varying parameters. Recently, \cite{GAS1} and \cite{Harvey_2013} have proposed a general methodology to introduce time-variation in any parameter of a generic statistical model. The idea is to use the score of the conditional density function as a driving force in the update of time-varying parameters. The SD methodology encompasses several existing OD models, such as the popular GARCH \citep{engle1982autoregressive,bollerslev1986generalized}, the Autoregressive Conditional Duration model~\citep{engle1998autoregressive}, and the Multiplicative Error Model~\citep{engle2002new}.
SD models have been extensively used in the financial econometric literature. To mention just a few examples, \cite{creal2011dynamic} developed a multivariate dynamic model for volatilities and correlations using fat tailed distributions, \cite{harvey2014filtering} described a new framework for filtering with heavy tails, while \cite{oh2017modeling} introduced high-dimensional factor copula models based on score-driven dynamics for systemic risk assessment. Compared to other OD models, as anticipated in the Introduction, SD models are locally optimal from an information theoretic perspective, as shown by~\cite{blasques2015information}. The asymptotic properties of the maximum likelihood estimator for score-driven models have been studied by~\citep{Harvey_2013, blasques2021maximum} while conditions for stationarity and ergodicity for univariate models have been analyzed by~\cite{blasques2014stationarity}. \cite{koopman2016predicting} showed that misspecified score-driven models have similar forecasting performance as correctly specified parameter-driven models.

In our specification, the score of the conditional pseudo-likelihood acts as a driving force. The logarithm of the conditional pseudo-observation density can be expressed as
\begin{equation}
	\log\ell(y_t;\mathcal{F}_{t-1},\theta_t,\delta,\nu)=-\text{tr}{S_t}+\sum_{i=1}^n \log p_{\epsilon_i}(e_i^\intercal O_t(A_t)^\intercal\mathrm{e}^{-S_t}(y_t-\sum_{\ell=1}^p\Phi^\ell_{t}y_{t-\ell});0,1,\delta_i,\nu_i)\,,\label{eq:obs_loglike}
\end{equation}
where $\mathcal{F}_{t-1}$ is the information set available at time $t-1$, $e_i$ is the $i$-th element of the standard basis of $\mathbb{R}^n$, and thus $\epsilon_{i,t}=e_i^\intercal \epsilon_t$.

We assume that the time-varying parameters, $\theta_t$, entering the model~\eqref{eq:SD_SVAR} follow the predictive recursion
\begin{equation}\label{eq:ft_SD}
	\theta_{t+1} = \omega + \beta \theta_{t} + \alpha s_t \,,
\end{equation}
where $s_t=\text{S}_t \nabla \theta_t$ is the score $\nabla \theta_t=\partial \log\ell(y_t;\mathcal{F}_{t-1},\theta_t,\delta,\nu)/\partial \theta_t$ scaled by a properly chosen matrix $\text{S}_t$, and $\omega\in\mathbb{R}^{\df}$ and $\alpha,\beta\in\mathbb{R}^{\df\times\df}$ are constant matrices. In SD modeling literature, the scaling matrix $\text{S}_t$ is usually given by an $a$-power of the Fisher information matrix $\mathcal{I}_{t}^a=\mathbb{E}[\nabla_t^\intercal \nabla_t|\mathcal{F}_{t-1}]^a$ or by a diagonal specification $\mathbb{E}[\mathrm{diag}(\nabla_t^\intercal \nabla_t)|\mathcal{F}_{t-1}]^a$. Common choices for $a$ are 0, -1, and $-1/2$. We set $a =0 $ so that $\text{S}_t$ equals the identity matrix. We verified in numerous simulation studies that different specifications mildly affect the quality of the filtered time-varying parameters. In the application with real data, the choice of the scaling has minor impact on the filtered series and on the shape of the impulse-response functions. The highly non linear dependence of the scores on $y_t$ does not make possible to provide a closed-form expression for the Fisher information. However, it can be approximated by sampling $y_t$ from the conditional observation density. This can be performed efficiently because $y_t$ is readily obtained from $\epsilon_t$ by means of an affine transform. To sample the components of $\epsilon_t$ one simply needs to draw $n$ independent scalar skew Student's $t$ random variates. The latter admit a stochastic representation in terms of a skew-Normal and a scaled Gamma distribution~\cite{azzalini2003distributions}. The specification based on the identity matrix does not require to compute $\mathcal{I}_{t}$ and thus has a clear computational advantage over alternative choices. A different setting which avoids this issue is discussed in~\cite{buccheri2021filtering} and employs the Hessian matrix in place of the information matrix in the recursive equations defining the scaling matrix. We did not test it but refer to the cited paper for implementation details.\\
The following theorem is the first main technical result of this paper. 
\begin{theorem}\label{th:score}
For the log-likelihood specification in~(\ref{eq:obs_loglike}) and time-varying parameter vector $\theta_t$ specified as above, the SD filtering recursions are given by
\begin{align}\label{eq:SD_dynamics}
S_{ij,t+1} &= \omega_{S_{ij}} + \beta_{S_{ij}}S_{ij,t} + \alpha_{S_{ij}} \nabla_{S_{ij,t}}\,,\quad\text{for } n\geq j\geq i=1,\ldots,n\notag\\
A_{ij,t+1} &= \omega_{A_{ij}} + \beta_{A_{ij}} A_{ij,t} + \alpha_{A_ij} \nabla_{A_{ij,t}}\,,\quad\text{for } n\geq j > i=1,\ldots,n\notag\\
\Phi^\ell_{ij,t+1} &= \omega_{\Phi^\ell_{ij}} + \beta_{\Phi^\ell_{ij}} \Phi^\ell_{ij,t} + \alpha_{\Phi^\ell_{ij}} \nabla_{\Phi^\ell_{ij,t}}\,,\quad\text{for } \ell=1,\ldots,p\,\text{ and }\,i,j=1,\ldots,n,  
\end{align}
with
\begin{align*}
\nabla_{S_{ij,t}} &=\sum_{i=1}^n e_i^\intercal O_t^\intercal\frac{\partial \mathrm{e}^{-S_t}}{\partial S_{ij,t}}(y_t-\sum_{\ell=1}^p\Phi^\ell_{t}y_{t-\ell})G(\epsilon_{i,t}(y_t);\delta_i,\nu_i)-\mathrm{tr}~\frac{\partial S_t}{\partial S_{ij,t}} S_t\,,\\
\nabla_{A_{ij,t}} &=-\sum_{i=1}^n e_i^\intercal \left(O_t^\intercal \frac{\partial A_t}{\partial A_{ij,t}} (\mathbb{I}+A_t)^{-1}+\frac{\partial A_t }{\partial A_{ij,t}} (\mathbb{I}-A_t)^{-1} O_t^\intercal\right)\mathrm{e}^{-S_t}(y_t-\sum_{\ell=1}^p\Phi^\ell_{t}y_{t-\ell})G(\epsilon_{i,t}(y_t);\delta_i,\nu_i)\,,\\
\nabla_{\Phi^\ell_{ij,t}} &= -\sum_{i=1}^n e_i^\intercal O_t^\intercal\mathrm{e}^{-S_t}\frac{\partial \Phi^\ell_{t}}{\partial \Phi^\ell_{ij,t}}\Phi^\ell_{t}~y_{t-\ell}G(\epsilon_{i,t}(y_t);\delta_i,\nu_i)\,,
\end{align*}
where $\epsilon_{i,t}(y_t)=e_i^\intercal O_t^\intercal\mathrm{e}^{-S_t}(y_t-\sum_{\ell=1}^p\Phi^\ell_{t}y_{t-\ell})$,
\[
  G(\epsilon_{i,t}(y_t);\delta_i,\nu_i) = \frac{\epsilon_{i,t}(y_t)+m v}{(\epsilon_{i,t}(y_t)+m v)^2+\nu_i v^2}\left(\frac{t_1(x(\epsilon_{i,t}(y_t));\nu_i+1)}{T_1(x(\epsilon_{i,t}(y_t));\nu_i+1)}\frac{\nu_i v^2}{(\epsilon_{i,t}(y_t)+ m v)^2}x(\epsilon_{i,t}(y_t))-(1+\nu_i)\right)\,,
\]
and $x(\epsilon_{i,t}(y_t))=x(\epsilon_{i,t}(y_t);0,1,\delta_i,\nu_i)$; $t_1$ is the density whose cumulative function is $T_1$~\footnote{For readability, we dropped the dependence of $m$ and $v$ from $\delta_i$ and $\nu_i$.}.
\end{theorem}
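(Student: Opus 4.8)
\emph{Proof plan.} The statement is in essence a closed-form gradient computation, so the plan is to reduce it to the chain rule applied to the conditional pseudo-log-likelihood \eqref{eq:obs_loglike}. Since the scaling exponent has been fixed to $a=0$, the scaling matrix $\text{S}_t$ in \eqref{eq:ft_SD} is the identity, hence the scaled score is simply the gradient $\nabla\theta_t=\partial\log\ell(y_t;\mathcal{F}_{t-1},\theta_t,\delta,\nu)/\partial\theta_t$; once this gradient is written coordinate by coordinate, the recursions \eqref{eq:SD_dynamics} are just \eqref{eq:ft_SD} unpacked into one scalar equation per entry of $\theta_t$. I would first fix notation: write $u_t=y_t-\sum_{\ell=1}^{p}\Phi^\ell_t y_{t-\ell}$ for the reduced-form residual and $\epsilon_{k,t}(y_t)=e_k^\intercal O_t^\intercal\mathrm{e}^{-S_t}u_t$ for the $k$-th structural residual, and record two facts about \eqref{eq:obs_loglike}: the term $-\mathrm{tr}\,S_t$ is the log-Jacobian $-\log|\det C_t|$ (because $\det O_t=1$ and $\det\mathrm{e}^{S_t}=\mathrm{e}^{\mathrm{tr}\,S_t}$), and each summand $\log p_{\epsilon_k}(\,\cdot\,;0,1,\delta_k,\nu_k)$ depends on $\theta_t$ only through the scalar $\epsilon_{k,t}(y_t)$. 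Consequently, for any coordinate $\theta_{j,t}$ of $\theta_t$,
\[
\frac{\partial\log\ell}{\partial\theta_{j,t}}=\sum_{k=1}^{n}\frac{\partial\log p_{\epsilon_k}}{\partial\epsilon_{k,t}}\,\frac{\partial\epsilon_{k,t}}{\partial\theta_{j,t}}-\frac{\partial\,\mathrm{tr}\,S_t}{\partial\theta_{j,t}}\,,
\]
and the work splits into computing the ``outer'' factor $\partial\log p_{\epsilon_k}/\partial\epsilon$ and the three families of ``inner'' derivatives $\partial\epsilon_{k,t}/\partial S_{ij,t}$, $\partial\epsilon_{k,t}/\partial A_{ij,t}$, $\partial\epsilon_{k,t}/\partial\Phi^\ell_{ij,t}$.

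For the outer factor I would take the logarithm of the skew Student's $t$ density \eqref{eq:skewt} at $\mu_i=0$, $\sigma_i=1$ (so the normalising constant $2c(\nu_i)/v(\delta_i,\nu_i)$ drops out), differentiate the power term $-\tfrac{1+\nu_i}{2}\log\!\big(1+(\epsilon+mv)^2/(v^2\nu_i)\big)$ directly, and differentiate $\log T_1(x(\epsilon);\nu_i+1)$ through the chain rule using $T_1'=t_1$. The one manipulation that makes the answer collapse to the stated form is the identity $x'(\epsilon)=\frac{\nu_i v^2}{(\epsilon+mv)\big((\epsilon+mv)^2+\nu_i v^2\big)}\,x(\epsilon)$, which follows from $x(\epsilon)\propto(\epsilon+mv)\big((\epsilon+mv)^2+\nu_i v^2\big)^{-1/2}$; after substituting it and factoring $\frac{\epsilon+mv}{(\epsilon+mv)^2+\nu_i v^2}$ out of the two remaining terms one obtains exactly $\partial\log p_{\epsilon_i}/\partial\epsilon=G(\epsilon;\delta_i,\nu_i)$.

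For the inner derivatives I would handle the three blocks separately. For $S_{ij,t}$ only the factor $\mathrm{e}^{-S_t}$ inside $\epsilon_{k,t}$ depends on the parameter, so $\partial\epsilon_{k,t}/\partial S_{ij,t}=e_k^\intercal O_t^\intercal\,(\partial\mathrm{e}^{-S_t}/\partial S_{ij,t})\,u_t$, leaving the Fréchet derivative of the matrix exponential unevaluated (it is the one object in the statement not reduced to elementary operations, and is computed numerically), while the Jacobian contributes the extra term $-\partial(\mathrm{tr}\,S_t)/\partial S_{ij,t}$. For $\Phi^\ell_{ij,t}$, linearity of $u_t$ in $\Phi^\ell_t$ gives $\partial\epsilon_{k,t}/\partial\Phi^\ell_{ij,t}=-e_k^\intercal O_t^\intercal\mathrm{e}^{-S_t}\,(\partial\Phi^\ell_t/\partial\Phi^\ell_{ij,t})\,y_{t-\ell}$. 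For $A_{ij,t}$ with $i<j$ I would start from the classical derivative of the Cayley transform, $\partial O_t/\partial A_{ij,t}=2(\mathbb{I}-A_t)^{-1}(\partial A_t/\partial A_{ij,t})(\mathbb{I}-A_t)^{-1}$, transpose it, use that $\partial A_t/\partial A_{ij,t}=e_ie_j^\intercal-e_je_i^\intercal$ is itself skew-symmetric to obtain $\partial O_t^\intercal/\partial A_{ij,t}=-2(\mathbb{I}+A_t)^{-1}(\partial A_t/\partial A_{ij,t})(\mathbb{I}+A_t)^{-1}$, and then rewrite this via $O_t^\intercal=(\mathbb{I}+A_t)^{-1}(\mathbb{I}-A_t)=(\mathbb{I}-A_t)(\mathbb{I}+A_t)^{-1}$ together with the algebraic identity $2(\mathbb{I}+A)^{-1}X(\mathbb{I}+A)^{-1}=O^\intercal X(\mathbb{I}+A)^{-1}+X(\mathbb{I}-A)^{-1}O^\intercal$ (valid for any $X$), which produces the symmetric two-term shape appearing in $\nabla_{A_{ij,t}}$.

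Finally I would assemble: multiply each inner derivative by $G(\epsilon_{k,t}(y_t);\delta_k,\nu_k)$, sum over $k=1,\dots,n$, add the Jacobian term in the $S$-block only, and substitute into \eqref{eq:ft_SD} with $\text{S}_t=\mathbb{I}$, reading off \eqref{eq:SD_dynamics} coordinate by coordinate. A short regularity remark closes the argument: the skew Student's $t$ density is strictly positive and $C^\infty$ in its argument for $\nu_i>2$ (in particular $T_1>0$ everywhere), so $G$ is well defined on all of $\mathbb{R}$; $\mathbb{I}-A_t$ is invertible because a real skew-symmetric matrix has only purely imaginary eigenvalues and hence never the eigenvalue $1$; and $\mathrm{e}^{-S_t}$ is invertible for every real $S_t$, so all differentiations above are legitimate. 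I expect the main obstacle to be purely computational: beyond keeping the $\delta_i,\nu_i$-dependent constants $m$ and $v$ and the ratio $t_1/T_1$ straight in the skew-$t$ differentiation, the genuinely non-mechanical step is the rewriting that turns the bare Cayley-transform derivative $-2(\mathbb{I}+A_t)^{-1}(\partial A_t/\partial A_{ij,t})(\mathbb{I}+A_t)^{-1}$ into the two-term expression $-\big(O_t^\intercal(\partial A_t/\partial A_{ij,t})(\mathbb{I}+A_t)^{-1}+(\partial A_t/\partial A_{ij,t})(\mathbb{I}-A_t)^{-1}O_t^\intercal\big)$ of the statement.
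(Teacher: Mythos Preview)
Your proposal is correct and follows essentially the same route as the paper: both proofs are a chain-rule decomposition of $\partial\log\ell/\partial\theta_t$ into the outer factor $\partial\log p_{\epsilon_k}/\partial\epsilon_{k,t}=G(\epsilon_{k,t};\delta_k,\nu_k)$ and the inner derivatives $\partial\epsilon_{k,t}/\partial\theta_t$, followed by assembly into \eqref{eq:ft_SD} with identity scaling. The only difference is in the handling of $\partial O_t^\intercal/\partial A_{ij,t}$: the paper simply invokes relation~(25) of \cite{magnus2019matrix} to obtain the two-term form directly, whereas you first compute the bare Cayley-transform derivative $-2(\mathbb{I}+A_t)^{-1}(\partial A_t/\partial A_{ij,t})(\mathbb{I}+A_t)^{-1}$ and then rewrite it via the algebraic identity you state; your identity is correct and yields the same expression, so this is a matter of presentation rather than a genuinely different argument.
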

\begin{proof}
The proof is given in Appendix~\ref{ap:appendix2}.
\end{proof} 
The intuition behind the recursive relations~(\ref{eq:SD_dynamics}) is that the time-varying parameters of the VAR react 
to the arrival of new information, i.e. the structural shocks $\epsilon_t$.
Specifically, their value is updated by the score of the pseudo-likelihood in a way that is locally optimal from an information theoretic point of view~\cite{blasques2015information}. They evolve in a predictable way following the steepest ascent direction which, by definition, locally maximizes the variation of the pseudo-likelihood. From an economic perspective, the parameters of the model adjust to the revealing information. In our specification there is no need of imposing any identification restriction. Thus, the time-varying parameters and the structural shocks can be filtered in a natural way. Crucially, this implies that in our model each observation of the macro variables corresponds to a unique realization of the structural shocks. Since the parameter updating rules~(\ref{eq:SD_dynamics}) are observation driven, in our approach this is equivalent to state that the evolution of the parameters is directly driven by the structural shocks. The latter fact, at least from an econometric perspective, partially answers to the Lucas Critique. The specification of the pseudo-PDFs determines the precise functional dependence of the scores from the structural shocks. In our model, the dependence is highly non linear. Alternative choices for the pseudo-PDF are possible. For instance, \cite{gourieroux2017statistical} suggest to work with a mixture of Normals. The present approach can be easily modified to deal with their setting. Even in this case, we expect a non trivial and strongly non linear dependence of the scores from the structural shocks. The optimal specification, from the perspective of the asymptotic accuracy of the PML estimator, could be further investigated.\\

Comparing with Eq.~(\ref{eq:ft_SD}), it is clear that in the SD recursions~(\ref{eq:SD_dynamics}) we restricted the coefficients $\omega$, $\alpha$, and $\beta$ in such a way that the score of the likelihood w.r.t.  each component of $\theta_t$ drives the evolution solely of the corresponding component of $\theta_t$. This assumption can be relaxed but it is quite common in SD models.
The partial derivatives appearing in the expressions of the scores require a bit of explanation. For instance, $\partial S_t/\partial S_{ij,t}$ is the derivative of $S_t$ taken with respect to $S_{t,ij}$. Its value corresponds to the selection matrix $E_{ij}$ if $j\geq i$, whose elements are all zero but the entry in position $ij$, which is equal to one. A similar reasoning applies to $\partial \Phi^\ell_{t} /\partial \Phi^\ell_{ij,t}$ and $\partial A_t /\partial A_{ij,t}$. In the latter case, $A_t$ is skew-symmetric and the partial derivatives have to be taken only with respect to the off-diagonal elements. We obtain $\partial A_{t}/\partial A_{ij,t} = E_{ij}-E_{ji}$ if $j>i$. The matrix $\partial \mathrm{e}^{-S_t}/\partial S_{ij,t}$ deserves special attention. The derivative of the matrix exponential can be readily computed if $S_t$ is diagonal, that is $S_{ij,t}=0$ for $j\neq i$. We obtain $\partial \mathrm{e}^{-S_t}/\partial S_{ii,t}=-S_{ii}\mathrm{e}^{-S_t}$. If this is not the case, the formal solution is reported in~\citet[Chapter 8, ex. 9]{magnus2019matrix}. To compute it in practice we need to Automatic Differentiate (AD)~\citep{dwyer1948symbolic} the Pad\'e approximation with scaling and squaring~\citep{giles2008collected}. We refer the reader to the paper by Giles for an overview of AD results. \cite{zamojski2019} discusses an interesting application of AD techniques to SD filtering of time-varying parameters.\\

The paper by~\cite{gourieroux2017statistical} have crucial implications for the model specified by equations (\ref{eq:SD_SVAR}) and (\ref{eq:SD_dynamics}). The probabilistic results from~\cite{comon1994independent} and~\cite{eriksson2004identifiability}, reviewed in the introduction of~\cite{gourieroux2017statistical}, ensure that under the assumptions i), ii), and iii) for the shocks $\epsilon_t$, the parameters $S_t$, $A_t$, and $\Phi^\ell_t$ can be identified. This means that it is not possible to have two distinct sets of parameters, $\{\tilde{S}_t,\tilde{A}_t,\tilde{\Phi}^\ell_{t}\}$ and $\{\bar{S}_t,\bar{A}_t,\bar{\Phi}^\ell_{t}\}$, which are observationally equivalent. The relations~(\ref{eq:SD_dynamics}) allow to filter recursively the time-varying parameters following a PML approach, see~\cite{creal2011dynamic,blasques2018feasible}. A possible identification issue may arise concerning the estimation of the static parameters of the model, i.e. $\Theta=\{\omega_{S_{ij}},\alpha_{S_{ij}},\beta_{S_{ij}},\omega_{A_{kl}},\alpha_{A_{kl}},\beta_{A_{kl}},\omega_{\Phi^\ell_{rs}},\alpha_{\Phi^\ell_{rs}},\beta_{\Phi^\ell_{rs}}\}$ for $n\geq j\geq i=1,\ldots,n$, $n\geq k>l=1,\ldots,n$, $r,s=1,\ldots,n$, and $\ell=1,\ldots,p$. The following result shows that this can never be the case.  
\begin{theorem}[Identifiability of static parameters]\label{th:identif}
If $\tilde{\Theta}\neq\bar{\Theta}$ holds component-wise, then the associated filtered time-series $\{\tilde{S}_t,\tilde{A}_t,\tilde{\Phi}^\ell_{t}\}_{t=1,\ldots,T}$ and $\{\bar{S}_t,\bar{A}_t,\bar{\Phi}^\ell_{t}\}_{t=1,\ldots,T}$ for $\ell=1,\ldots,p$ are not observationally equivalent.
\end{theorem}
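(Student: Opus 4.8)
The plan is to prove the contrapositive in disguise: show that the map from static parameters to filtered trajectories is injective by exhibiting a point in the sample at which two distinct static parameter vectors must produce distinct filtered values. Concretely, suppose $\tilde\Theta \neq \bar\Theta$ component-wise. I would first invoke Theorem~\ref{th:score}, which gives the explicit recursions~\eqref{eq:SD_dynamics}: each scalar time-varying parameter, call it generically $\vartheta_{t}$, evolves as $\vartheta_{t+1} = \omega_\vartheta + \beta_\vartheta \vartheta_t + \alpha_\vartheta \nabla_{\vartheta_t}$, where the static triple $(\omega_\vartheta,\beta_\vartheta,\alpha_\vartheta)$ is precisely the component of $\Theta$ attached to that coordinate, and the score $\nabla_{\vartheta_t}$ is a known (highly nonlinear but deterministic) function of $\mathcal F_{t-1}$, the current filtered vector $\theta_t$, and the static shape parameters $\delta,\nu$.

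Next I would argue by induction on $t$, using the fact that the two filters are initialized identically (the initialization $\theta_1$ is fixed, not part of $\Theta$; or, if one prefers, one can take $\theta_1 = (\mathbb{I}-\beta)^{-1}\omega$, but the cleanest argument fixes a common deterministic seed). Let $t^*$ be the first time index at which $\tilde\theta_{t^*}\neq\bar\theta_{t^*}$; if no such index exists then the two filtered trajectories coincide on the whole sample, and I must still derive a contradiction. This is the crux of the argument: suppose $\tilde\theta_t = \bar\theta_t$ for all $t=1,\ldots,T$. Then for every coordinate $\vartheta$ and every $t<T$ the score $\nabla_{\vartheta_t}$ takes the same value under both parameterizations (it depends only on $y_t$, $y_{t-\ell}$, $\theta_t$ and $\delta,\nu$, none of which differ), so the recursion forces
\[
\omega_\vartheta^{\tilde{}} + \beta_\vartheta^{\tilde{}}\,\vartheta_t + \alpha_\vartheta^{\tilde{}}\,\nabla_{\vartheta_t} \;=\; \omega_\vartheta^{\bar{}} + \beta_\vartheta^{\bar{}}\,\vartheta_t + \alpha_\vartheta^{\bar{}}\,\nabla_{\vartheta_t}
\qquad\text{for all } t=1,\ldots,T-1 .
\]
That is, the affine function $(\delta\omega_\vartheta) + (\delta\beta_\vartheta)\,u + (\delta\alpha_\vartheta)\,w$ vanishes at all the sample pairs $(u,w) = (\vartheta_t,\nabla_{\vartheta_t})$. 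Since $\tilde\Theta\neq\bar\Theta$ component-wise, for at least one coordinate $\vartheta$ the vector $(\delta\omega_\vartheta,\delta\beta_\vartheta,\delta\alpha_\vartheta)$ is nonzero, so the pairs $(\vartheta_t,\nabla_{\vartheta_t})_{t=1}^{T-1}$ are forced to lie on a single line in $\mathbb R^2$ (or to all share the same value of $\vartheta_t$, or the same value of $\nabla_{\vartheta_t}$, in the degenerate sub-cases). This is a non-generic, measure-zero coincidence: because the score $G(\epsilon_{i,t}(y_t);\delta_i,\nu_i)$ and its matrix prefactors depend nonlinearly and non-affinely on the data, for generic $y_{1:T}$ the scatter $\{(\vartheta_t,\nabla_{\vartheta_t})\}$ is not collinear once $T$ is large enough (three or four non-degenerate observations suffice), giving the contradiction; hence some $\tilde\theta_{t}\neq\bar\theta_{t}$, i.e. the trajectories are not observationally equivalent.

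Finally I would tie the trajectory-level inequality back to observational non-equivalence in the sense used earlier in the paper: by the one-to-one correspondence between $\theta_t$ and the triple $(\Sigma_t,O_t,\Phi^1_t,\dots,\Phi^p_t)$ established in Section~\ref{section:model}, and by the identification result of~\cite{gourieroux2017statistical} (via~\cite{comon1994independent} and~\cite{eriksson2004identifiability}) guaranteeing that distinct $(S_t,A_t,\Phi^\ell_t)$ give distinct conditional laws for $y_t$, a difference in the filtered series at some time $t^*$ produces a difference in the model-implied conditional density at $t^*$, which is exactly non-observational-equivalence. I expect the main obstacle to be making the ``generic non-collinearity'' step rigorous rather than heuristic: one wants either a clean genericity statement (the exceptional data configurations form a Lebesgue-null set, using real-analyticity of $G$ in its arguments and the fact that an identically-zero affine relation among analytic functions forces a polynomial identity that fails), or an explicit structural argument that the Jacobian of $(\vartheta_t,\nabla_{\vartheta_t})$ with respect to a perturbation of $y_{1:T}$ has full rank. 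The rest — the induction, the bookkeeping over the three blocks of coordinates in~\eqref{eq:SD_dynamics}, and the appeal to the earlier identification theorem — is routine.
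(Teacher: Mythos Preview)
Your approach is essentially the same as the paper's: both argue the contrapositive, invoke the identification results of \cite{comon1994independent}, \cite{eriksson2004identifiability} and \cite{gourieroux2017statistical} to pass between ``observationally equivalent'' and ``identical filtered trajectories'', and then read off the equality of the static coefficients from the affine recursion~\eqref{eq:SD_dynamics}. The only organizational difference is that the paper uses the identification step first (o.e.\ $\Rightarrow$ identical paths, hence $\tilde\Theta=\bar\Theta$), whereas you use it last (different paths $\Rightarrow$ not o.e.); these are the same argument.

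Where you are actually more careful than the paper is precisely the point you flag as the main obstacle. The paper's proof of this step is one line: it asserts that the relation
\[
\tilde\omega_{S_{11}} + \tilde\beta_{S_{11}} S_{11,t} + \tilde\alpha_{S_{11}} \nabla_{S_{11,t}} \;=\; \bar\omega_{S_{11}} + \bar\beta_{S_{11}} S_{11,t} + \bar\alpha_{S_{11}} \nabla_{S_{11,t}}
\]
``holds identically for each $S_{11,t}$ and $\nabla_{S_{11,t}}$'' and therefore the coefficients coincide, effectively treating the filtered value and its score as free variables rather than as specific numbers determined by the data. Your observation that this really requires the sample scatter $\{(\vartheta_t,\nabla_{\vartheta_t})\}_{t}$ not to lie on a single affine line---and your proposed genericity/real-analyticity justification---is the honest way to complete that step; the paper simply does not address it. So your proposal is correct, follows the paper's route, and in fact fills a gap the paper leaves implicit.
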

\begin{proof}
The proof is given in Appendix~\ref{ap:appendix3}.
\end{proof}

\section{Monte Carlo analysis}\label{sec:MC}

In this Section, we examine, by Monte Carlo simulations, the performance of estimation, filtering, and smoothing under different specifications of the data generating process. We consider three distinct settings to investigate different aspects of the filtering problem. First, we simulate a VAR model where the evolution of the time-varying parameters is driven by the score of the conditional pseudo-likelihood. We filter the latent signal with a correctly specified SD model and check for the presence of a bias in the static coefficients $\Theta$. As a second test, we simulate a time series of macroeconomic variables from a VAR model whose parameters evolve in time following a non-stationary deterministic pattern. By means of the misspecified SD filter we verify the ability to recover the latent signal. Finally, we generate a time series of macro variables where the time-varying parameters are driven by the structural shocks. With the approximate SD filter we recover the latent signal and assess the reliability of the procedure by reporting the 68\% confidence bands for the absolute error. In view of the empirical application and inspired by a common specification in the macroeconometric literature for time-varying VAR models, we assume that all time-varying parameters follow a driftless random walk~\citep{primiceri2005time,cogley2005drifts}.

\subsection{Score-driven data generating process}

We first study the  finite sample properties of the pseudo maximum likelihood estimator through Monte Carlo simulations. We set $n = 3$, the same number of assets in the empirical application in Section~\ref{sec:realdata}. We set the number of auto-regressive components $p=2$. The length of the simulated time series corresponds to $T=750$, a number which is comparable with the length of the time series in the application to real data. The number of time-varying parameters is $3\times(3+1)/2+3\times(3-1)/2+2\times 9=27$. In the first experiment, we assume that the static parameters have the following structure:
\[
\omega_{S_{ij}}=\omega_{A_{kl}}=\omega_{\Phi^1_{rs}}=\omega_{\Phi^2_{rs}}=0\,,\quad 
\beta_{S_{ij}}=\beta_{A_{kl}}=\beta_{\Phi^1_{rs}}=\beta_{\Phi^2_{rs}}=1\,,
\] 
\[
\alpha_{S_{ij}}=\alpha_S=0.01\,,\quad \alpha_{A_{kl}}=\alpha_A=0.01\,,\quad \alpha_{\Phi^1_{rs}}=\alpha_{\Phi^1}=0.001\,,\quad \alpha_{\Phi^2_{rs}}=\alpha_{\Phi^2}=0.001\,.
\] 
The number of static parameters to be estimate is therefore equal to four.
The asymmetry and tail parameters of the three skew Student's $t$ are set as $\delta_1=-0.7$, $\delta_2=-0.6$, $\delta_3=0.7$, $\nu_1=5$, $\nu_2=6$, and $\nu_3=5.5$. The time-varying parameters initial value is set as:
\[
S_0=\log(0.1)~\mathbb{I}_{3\times 3}\,, 
A_0=\begin{pmatrix}
    0 & -0.11        & 0.23 \\
    0.11         & 0 & -0.03 \\
    -0.23        & 0.03      & 0
    \end{pmatrix}\,,
\Phi^1 = 0.3~\mathbb{I}_{3\times 3}\,,
\Phi^1 = 0.2~\mathbb{I}_{3\times 3}\,.        
\]
We generate 1.200 samples from Equations~\ref{eq:SD_SVAR} and~\ref{eq:SD_dynamics} and estimate the static parameters with the correctly specified SD filtering recursions. Figure~\ref{fig:corr_spec} shows the densities of the estimated static parameters (bold lines). The vertical dashed lines correspond to the true value. The maximum likelihood estimator properly recovers the true values of $\alpha_S$, $\alpha_A$, $\alpha_{\Phi^1}$, and $\alpha_{\Phi^2}$.
\begin{figure}
\centering
\includegraphics[scale=0.35]{./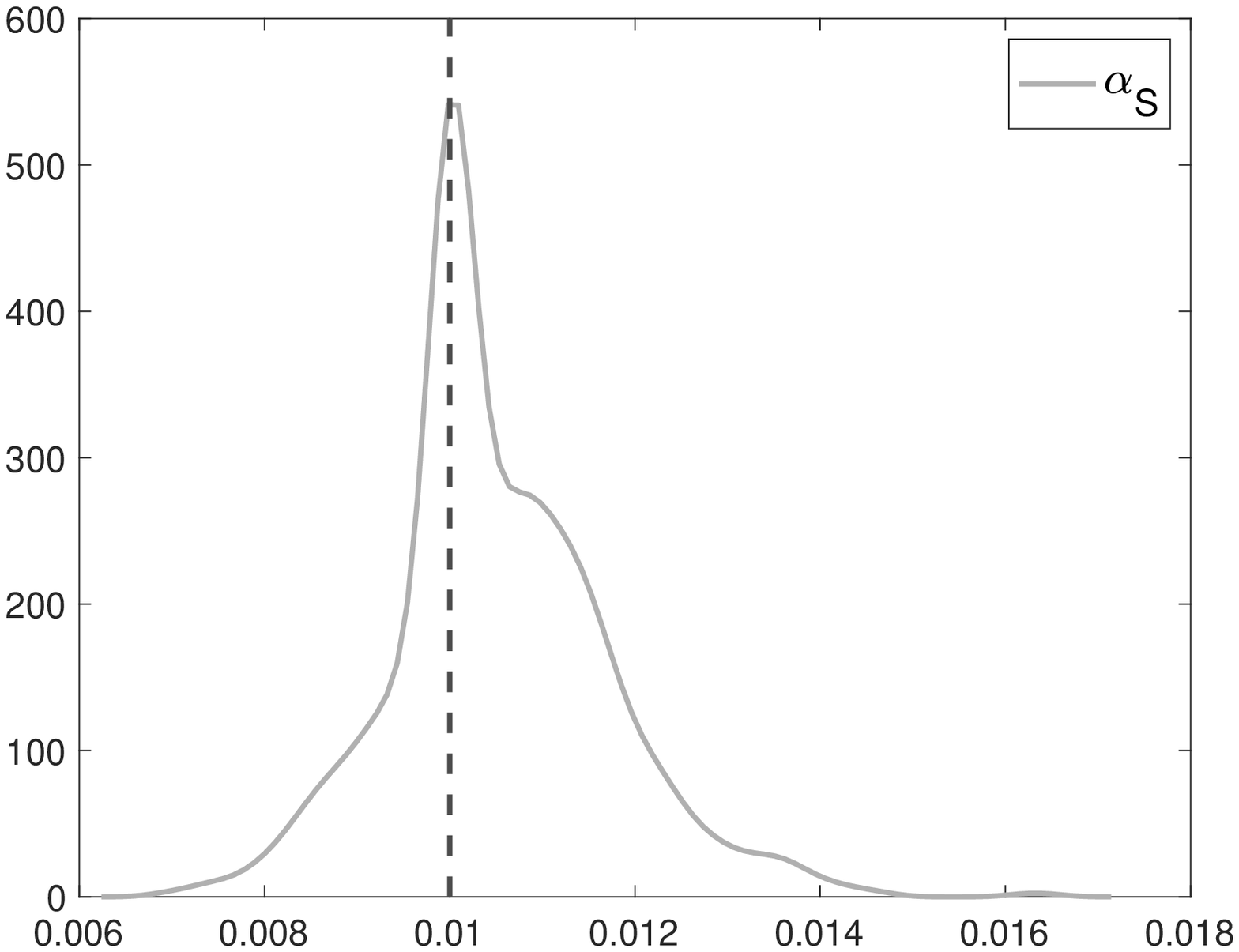}
\includegraphics[scale=0.35]{./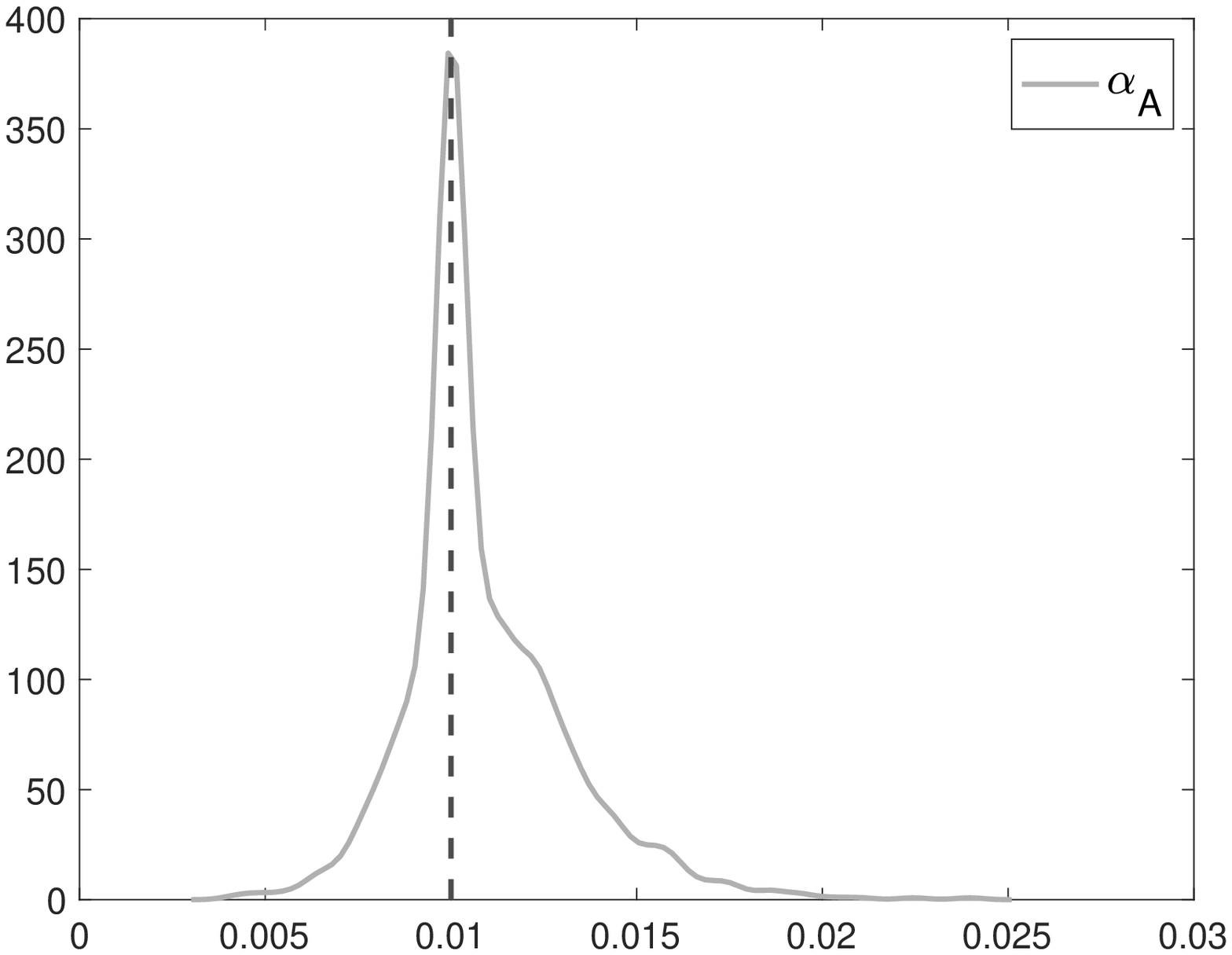}\\
\includegraphics[scale=0.35]{./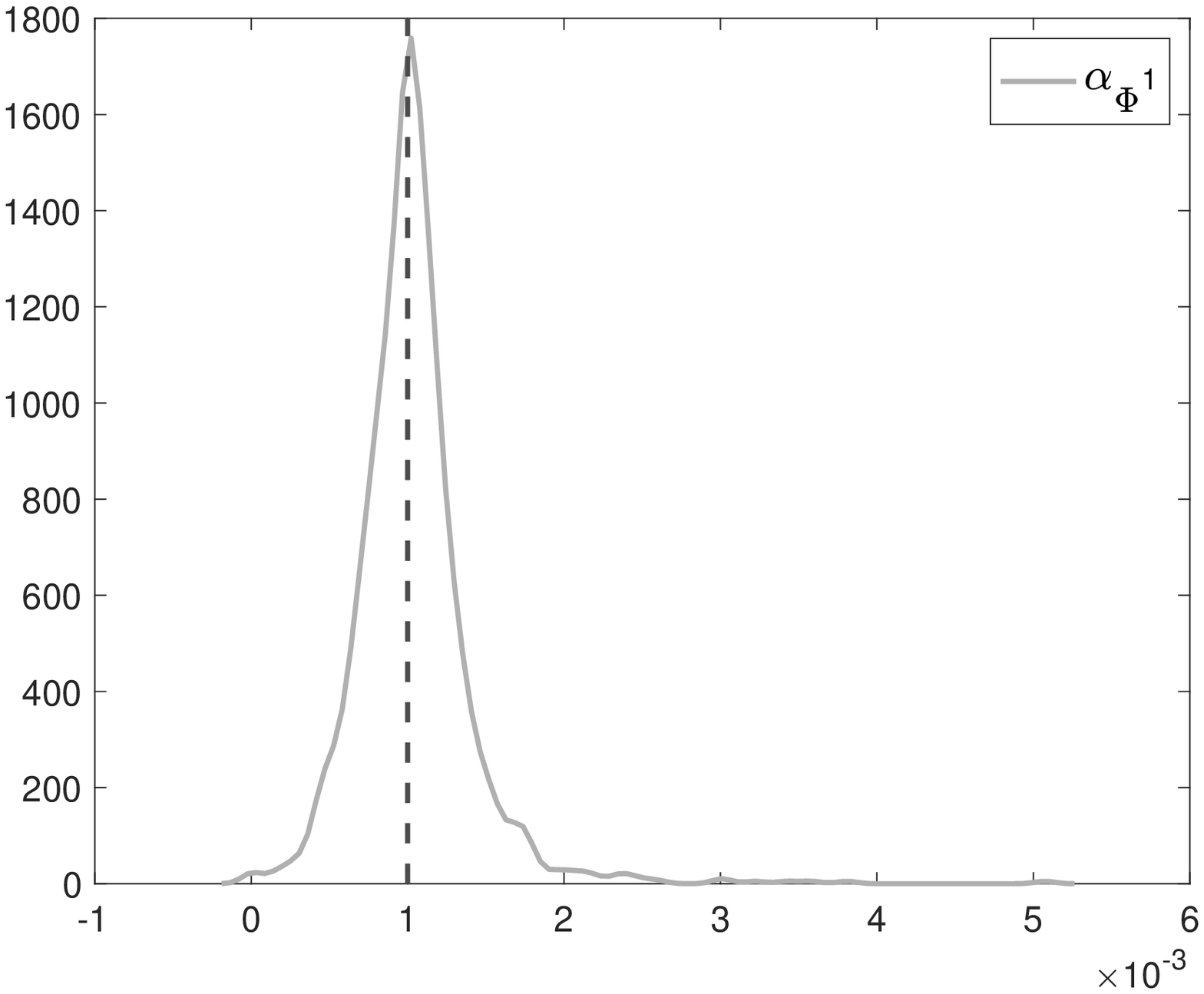}
\includegraphics[scale=0.35]{./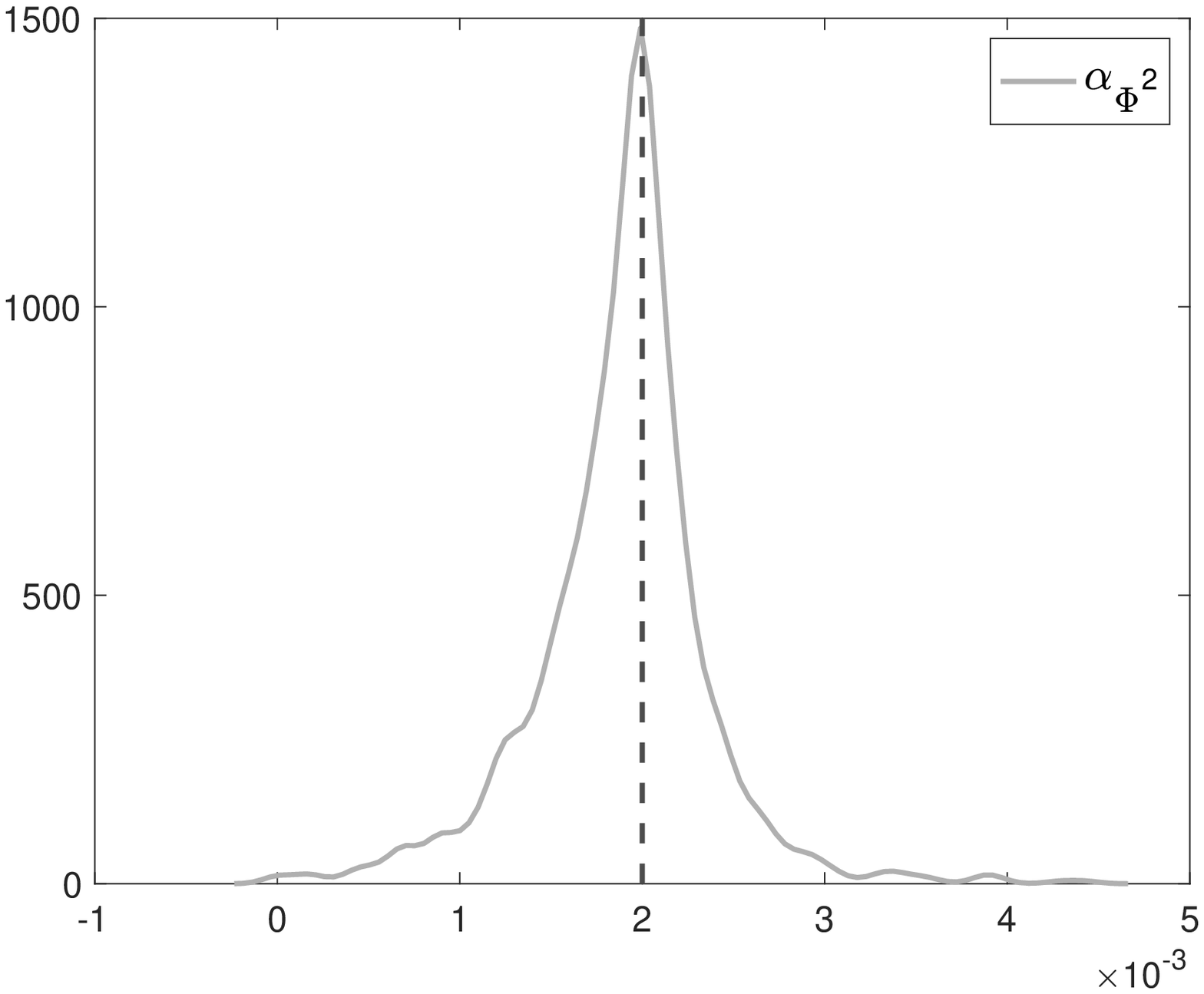}
\caption{\label{fig:corr_spec} Distributions of the maximum likelihood estimates of the static parameters $\alpha_S$, $\alpha_A$, $\alpha_{\Phi^1}$, and $\alpha_{\Phi^2}$ from 1.200 Monte Carlo samples.}
\end{figure}

\subsection{Deterministic pattern with misspecified filter}

In this section, we aim to assess the ability of the SD filtering recursions to perform as a misspecified filter when the time-varying parameters of the VAR model follow a deterministic pattern. For parsimony of space, we consider only one specification based on the sine function. Alternative possibilities based on the step and ramp functions are common choices in literature. We tested them finding very similar results. The static parameters for the skew Student's $t$ pseudo-densities are set as in the previous section. The time-varying parameters are initialized as before, too. The deterministic patterns for the time-varying coefficients are generated as follows:
\begin{align*}
S_t &=S_0 (1+0.25\sin(2\pi t/T))\,,\\
A_t &=A_0 (1+5\sin(2\pi (t+4T)/4T))\,,\\
\Phi^1_t &=\Phi^1_0 (1+0.95\sin(2\pi t/T))\,,\\
\Phi^2_t &=\Phi^2_0 (1+0.95\sin(2\pi t/T))\,.
\end{align*}
Figure~\ref{fig:det_patt} shows the deterministic evolution of the parameters $S_{11,t}$, $A_{12,t}$, $A_{13,t}$, $A_{23,t}$, $\Phi^1_{11,t}$, and $\Phi^2_{11,t}$. We do not report the results for the remaining entries of the matrices $S_t$, $\Phi^1_t$, $\Phi^2_t$ which behave in similar way to the selected component. The black dashed lines correspond to the true patterns, while the gray lines represent the median and 68\% bands filtered with the misspecified SD recursions. For each $t$, the 68\% bands correspond to the difference between the 84th and the 16th percentiles computed from the Monte Carlo realizations of the absolute errors. The true initial values of the time-varying parameters are not known to the optimizer. To set them, we follow two different strategies for the $S_t$, $\Phi^1_{t}$, and $\Phi^2_t$ parameters, on one side, and the elements of the matrix $A_t$, on the other side. For the former, we estimate a VAR(2) model with constant parameters via OLS. The OLS estimates are then used as initial values. For the latter, we set $A_0$ identically equal to zero. As an alternative, it is possible to estimate the initial values assuming constant parameter and using the PML estimators of~\cite{gourieroux2017statistical}. We tested this second alternative as well, finding very similar results. The SD filters recover quite faithfully the deterministic patterns. From the panels of Figure~\ref{fig:det_patt} we clearly see that, even though the initial value of time-varying parameters is not properly guessed, the SD filter is able to adjust and to revert to the true deterministic pattern. This ability is especially evident for the elements of the matrix $A_t$.  
\begin{figure}
\centering
\includegraphics[scale=0.29]{./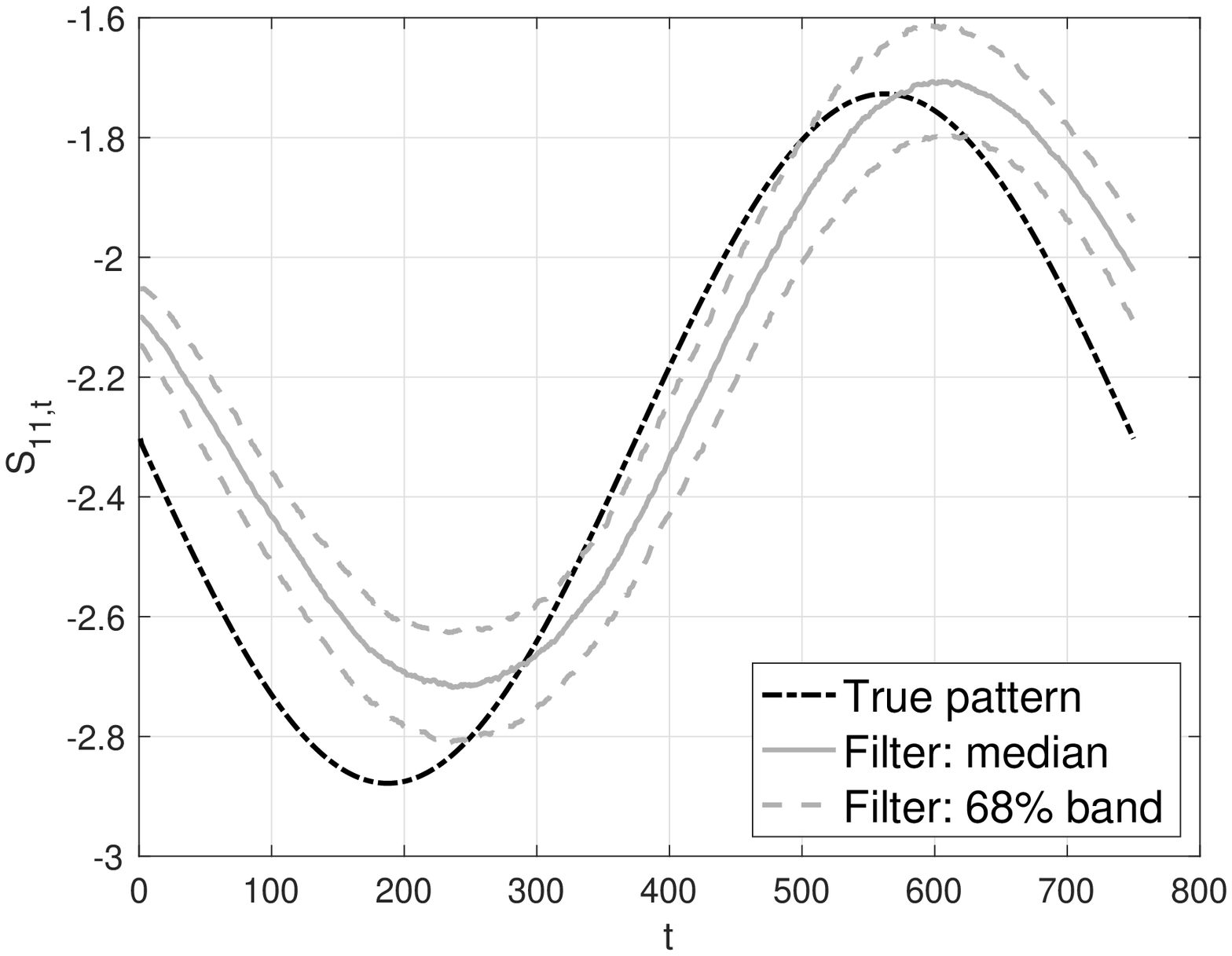}
\includegraphics[scale=0.29]{./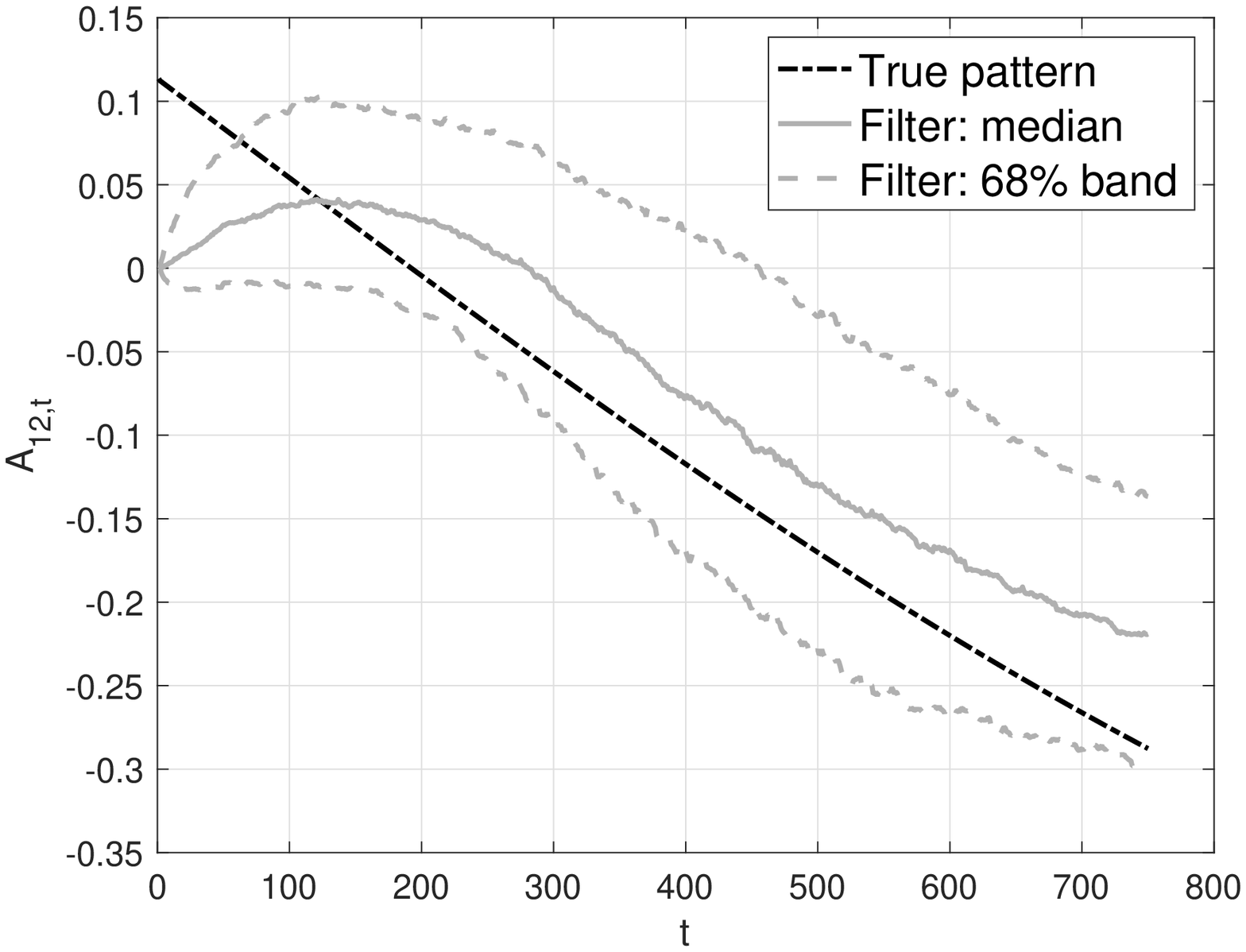}
\includegraphics[scale=0.29]{./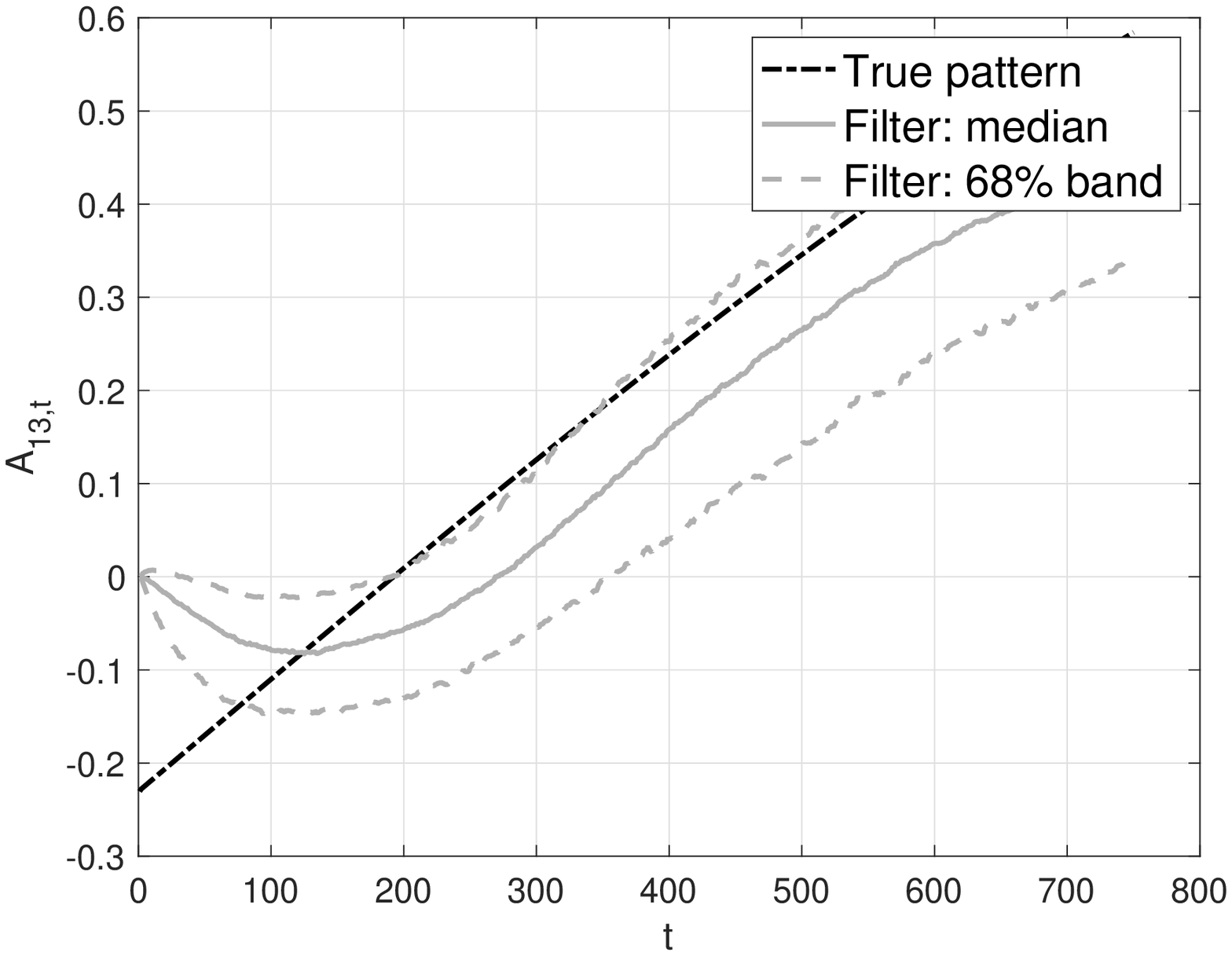}\\
\includegraphics[scale=0.29]{./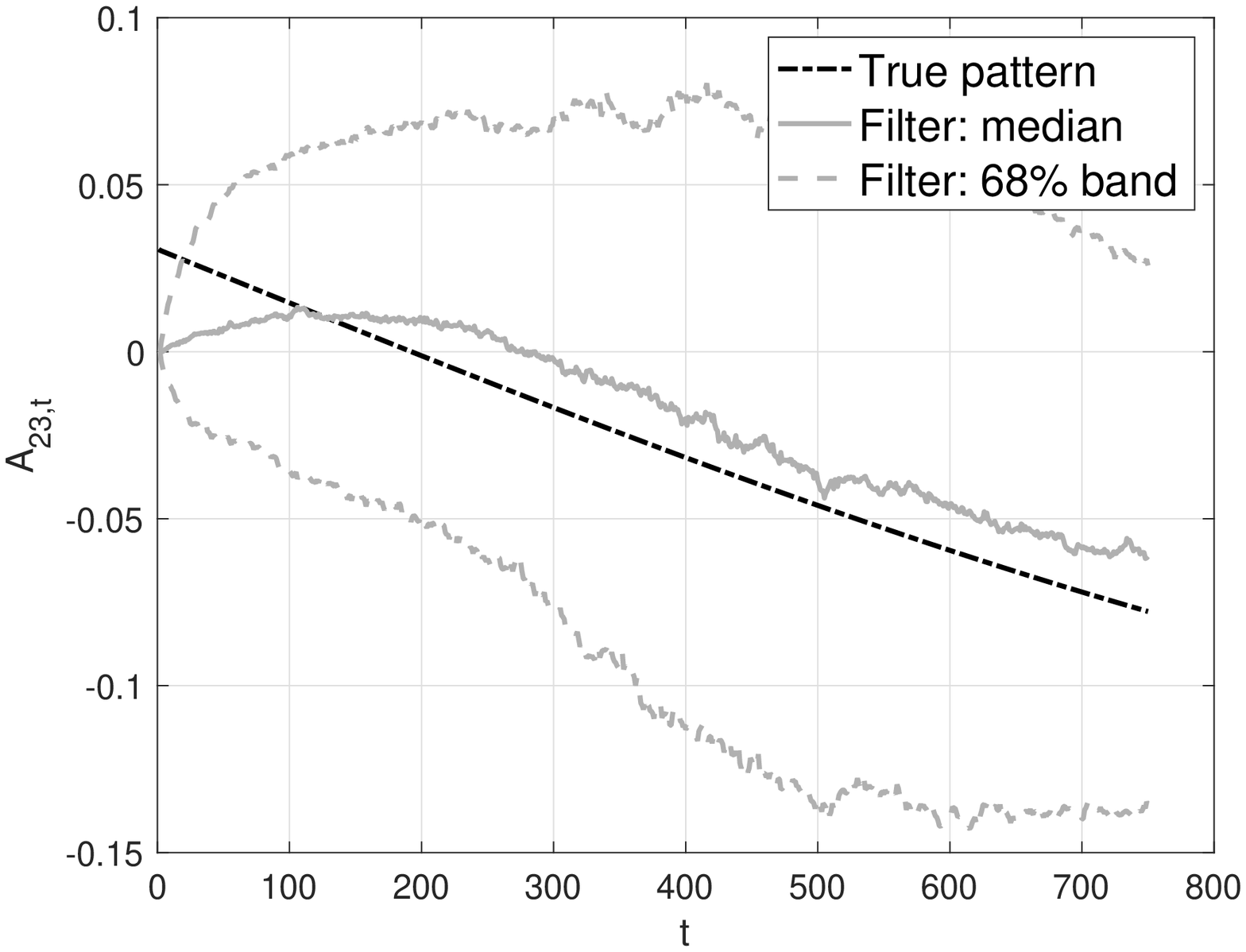}
\includegraphics[scale=0.29]{./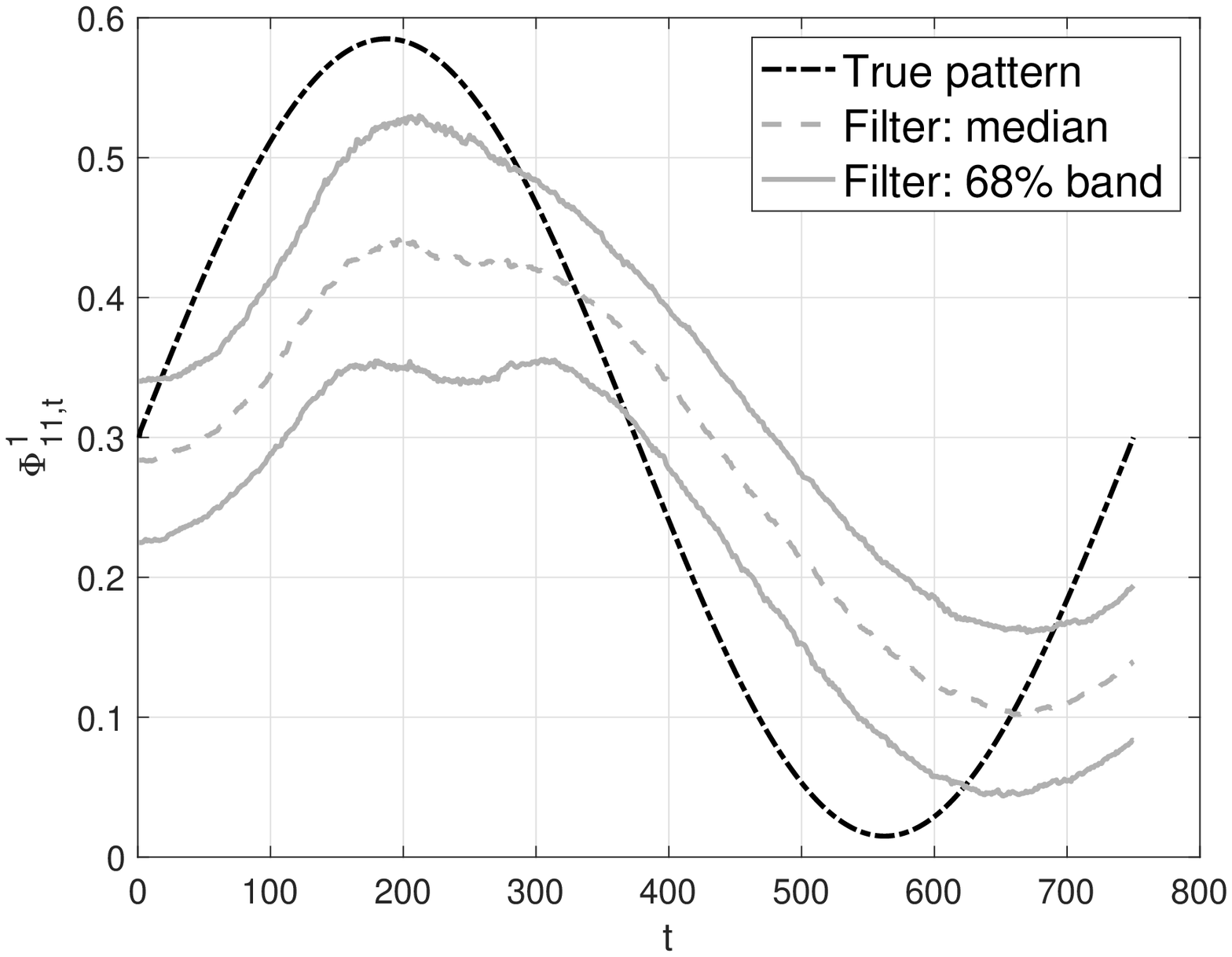}
\includegraphics[scale=0.29]{./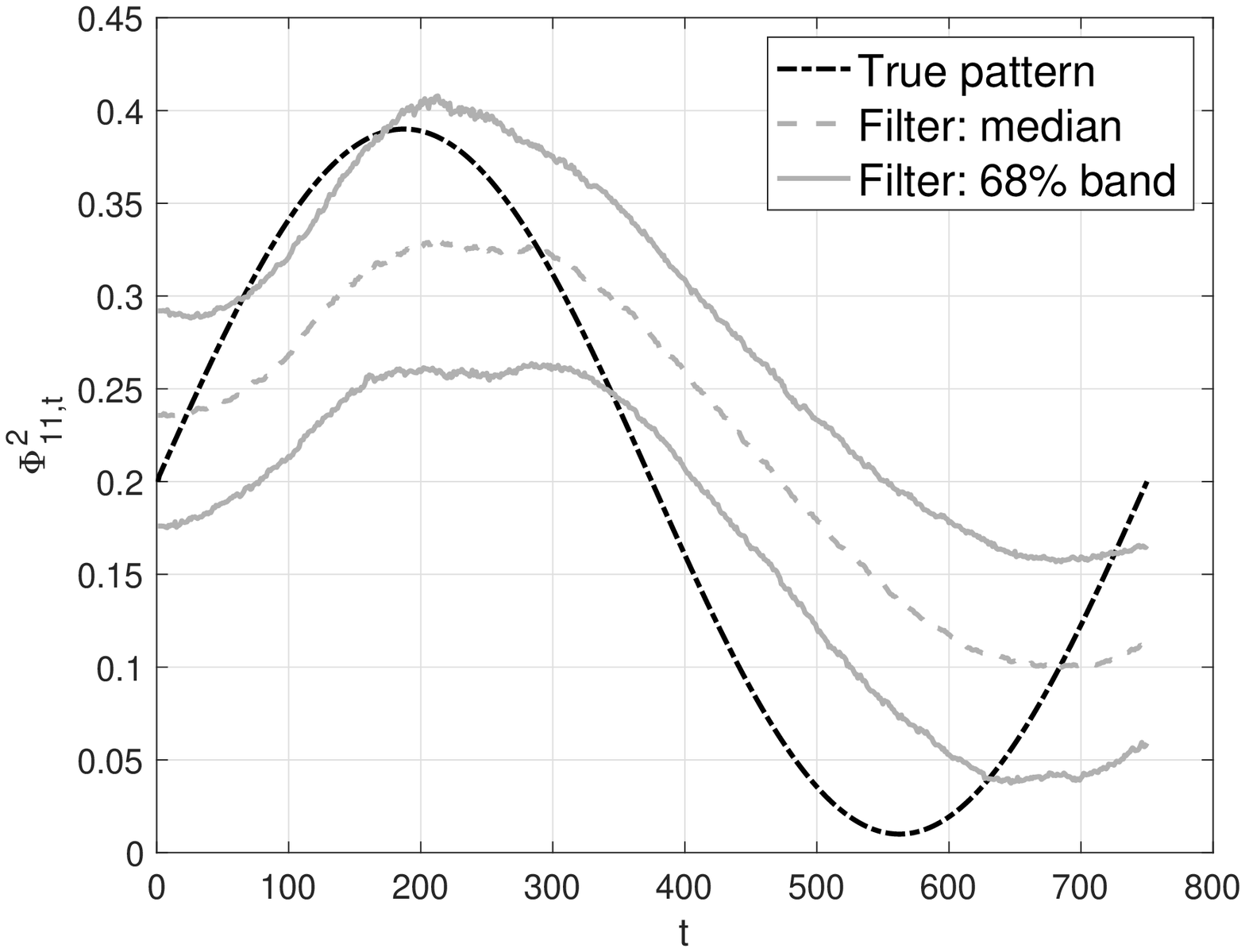}
\caption{\label{fig:det_patt} True deterministic pattern (black dashed lines), median of the SD filters (gray dashed lines) and 68\% bands of the time-varying parameters $S_{11,t}$, $A_{12,t}$, $A_{13,t}$, $A_{23,t}$, $\Phi^1_{11,t}$, and $\Phi^2_{11,t}$.}
\end{figure}

\subsection{Shock-driven parameter dynamics}

In this last section, the goal is to assess whether the SD filtering recursions are able to recover the latent dynamics of the VAR parameters, when the latter follow a driftless random walk whose residuals correspond to the structural shocks. This exercise is crucial to support the main message of the paper. Both the macro variables and the model parameters react to the structural shocks. They adjust to the new state of the world when new observations are made available. Since the correct specification of the reaction function is not accessible to the Econometrician, in a Monte Carlo setting we consider a dynamics where the structural shocks drive linearly the parameters. We then test whether a misspecified SD filter properly reconstructs the true trajectories of the time-varying parameters.\\
We simulate the following model:
\begin{align*}
y_t &= \Phi^1_{t} y_{t-1} + \Phi^2_{t} y_{t-2} + \mathrm{e}^{S_t} O_t(A_t) \epsilon_t\,,\\
\text{vec}(S_{t+1}) &= \text{vec}(S_{t}) + \alpha_{S} L_S \epsilon_t\,,\notag\\
\text{vec}(A_{t+1}) &= \text{vec}(A_{t}) + \alpha_{A} L_A \epsilon_t\,,\notag\\
\text{vec}(\Phi^1_{t+1}) &= \text{vec}(\Phi^1_{t}) + \alpha_{\Phi^1} L_{\Phi^1} \epsilon_t\,,\notag\\
\text{vec}(\Phi^2_{t+1}) &= \text{vec}(\Phi^2_{t}) + \alpha_{\Phi^2} L_{\Phi^2} \epsilon_t\,,\notag
\end{align*}   
where the static parameters are set as $\alpha_{S}=\alpha_{A}=\alpha_{\Phi^1}=\alpha_{\Phi^2}=0.01$. $L_S$, $L_A$, $L_{\Phi^1}$, and $L_{\Phi^2}$ are constant loading matrices set as
\[
L_S=\begin{bmatrix}
    0 & 0 & 1 \\
    -0.5 & 0 & 0 \\
    0 & 0.5 & 0 \\
    0 & 0 & 0 \\
    -1 & 0 & 0 \\
    0 & 0 & 0.5 \\
    0 & 0 & 0 \\
    0 & 0 & 0 \\
    0 & 1 & 0 
    \end{bmatrix}\!\!,
L_A=\begin{bmatrix}
    0 & 0 & 0 \\
    1 & 0 & 0 \\
    0 & 1 & 0 \\
    -1 & 0 & 0 \\
    0 & 0 & 0 \\
    0 & 0 & 1 \\
    0 & -1 & 0 \\
    0 & 0 & -1 \\
    0 & 0 & 0 
    \end{bmatrix}\!\!,
L_{\Phi^1}=\begin{bmatrix}
    0 & 0 & 0 \\
    1 & 0 & 0 \\
    0 & 1 & 0 \\
    -1 & 0 & 0 \\
    0 & 0 & 0 \\
    0 & 0 & 1 \\
    0 & -1 & 0 \\
    0 & 0 & -1 \\
    0 & 0 & 0 
    \end{bmatrix}\!\!,
L_{\Phi^2}=\begin{bmatrix}
    0 & 0 & 0 \\
    1 & 0 & 0 \\
    0 & 1 & 0 \\
    -1 & 0 & 0 \\
    0 & 0 & 0 \\
    0 & 0 & 1 \\
    0 & -1 & 0 \\
    0 & 0 & -1 \\
    0 & 0 & 0 
    \end{bmatrix}\!\!.
\]
The parameters of the skew Student's $t$ PDF associated with the components of $\epsilon_t$ are the same as for the previous two Monte Carlo investigations. Figure~\ref{fig:shock_drvn_rel} shows the median (bold lines) and 68\% bands (dashed lines) of the relative error between the shock-driven latent process and the filtered (gray) time series of the time-varying parameters $S_{11,t}$, $A_{12,t}$, $A_{13,t}$, $A_{23,t}$, $\Phi^1_{11,t}$, and $\Phi^2_{11,t}$. Figure~\ref{fig:shock_drvn_abs} presents a similar evidence but for the absolute error. The general conclusion is that, apart from a transient initial phase for the $A_t$ parameters, the horizontal thin black lines which correspond to zero relative and absolute errors are always well inside the 68\% bands of probability. The initialization with the OLS estimates is quite effective for the $S_t$, $\Phi^1_t$, and $\Phi^2_t$ time-varying parameters. As far as $A_t$ is concerned, the initial level is set equal to zero, which in general does not correspond to an optimal initialization. Nonetheless, the misspecified SD filter eventually recovers the correct trajectory of the latent process. In our exploration, the initial transient phase takes at most a hundred time steps (slightly less than ten years). The worst performance is found, not surprisingly, for the time-varying parameter $A_{13,t}$ whose initial value, 0.23, is the furthest from zero. Black lines in both Figures~\ref{fig:shock_drvn_rel} and~\ref{fig:shock_drvn_abs} correspond to the median (bold lines) and the 68\% bands (dashed lines) of the errors between the true latent process and what we name the smoothed time series of the time-varying parameters. The latter are computed by running backward the predictive relations, inverting the temporal order of the observation time series and initializing the unknown latent states with the filtered values. By doing so, we aim at partially mitigate the uncertainty associated with the fact the true initial value of the time-varying parameters is not accessible to the Econometrician. The improvement brought by this simple methodology is evident for all time-varying parameters. 
\begin{figure}
\centering
\includegraphics[scale=0.29]{./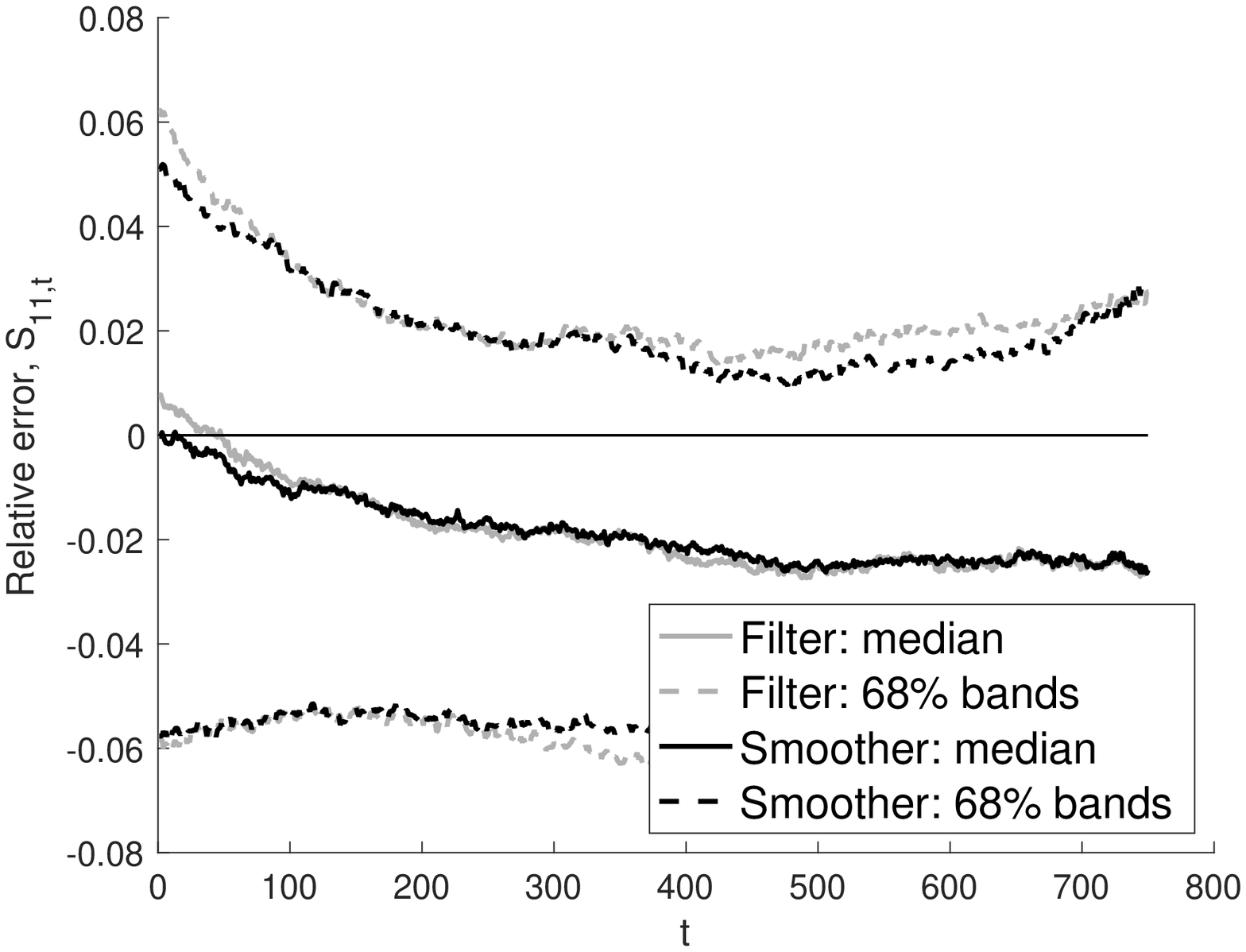}
\includegraphics[scale=0.29]{./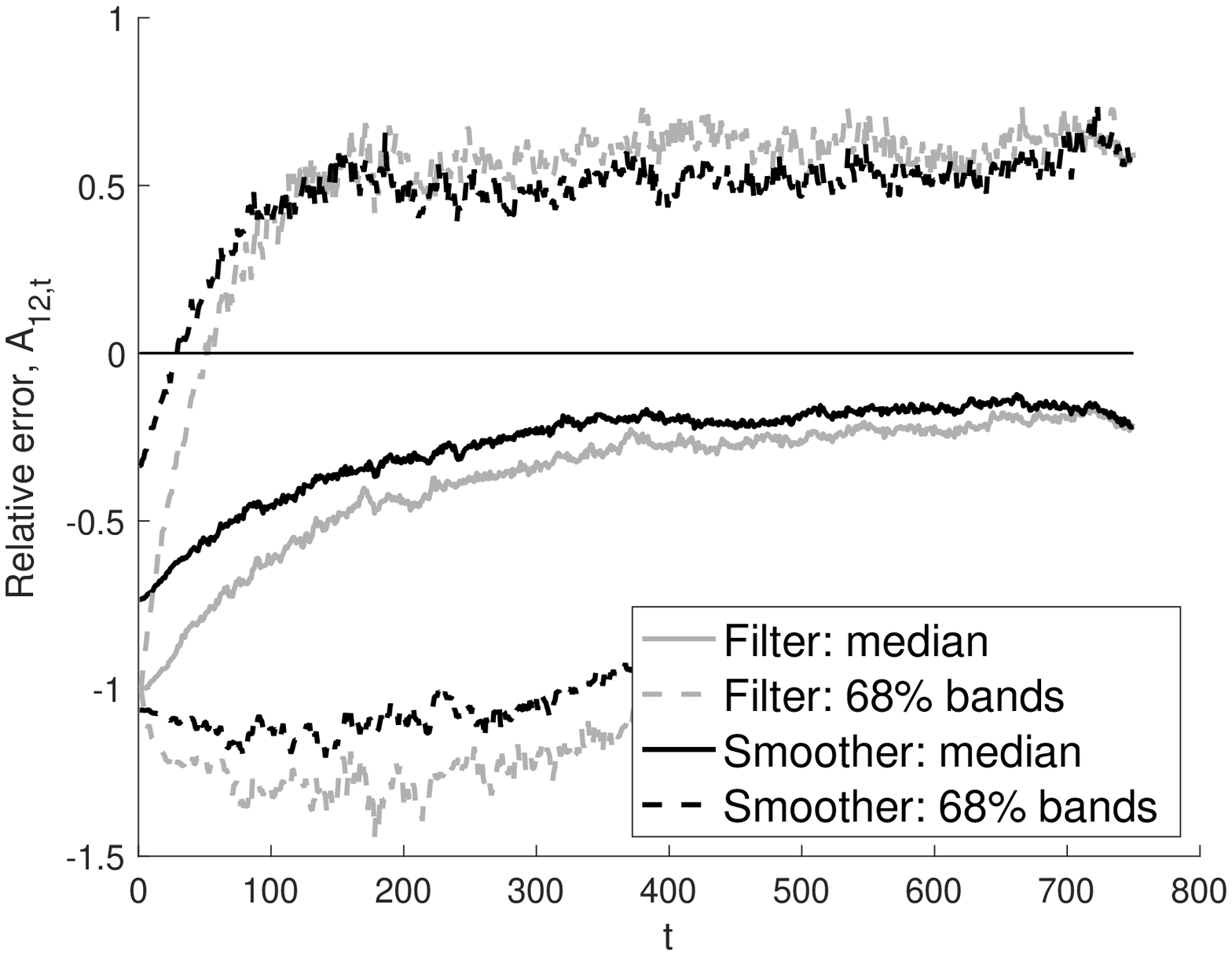}
\includegraphics[scale=0.29]{./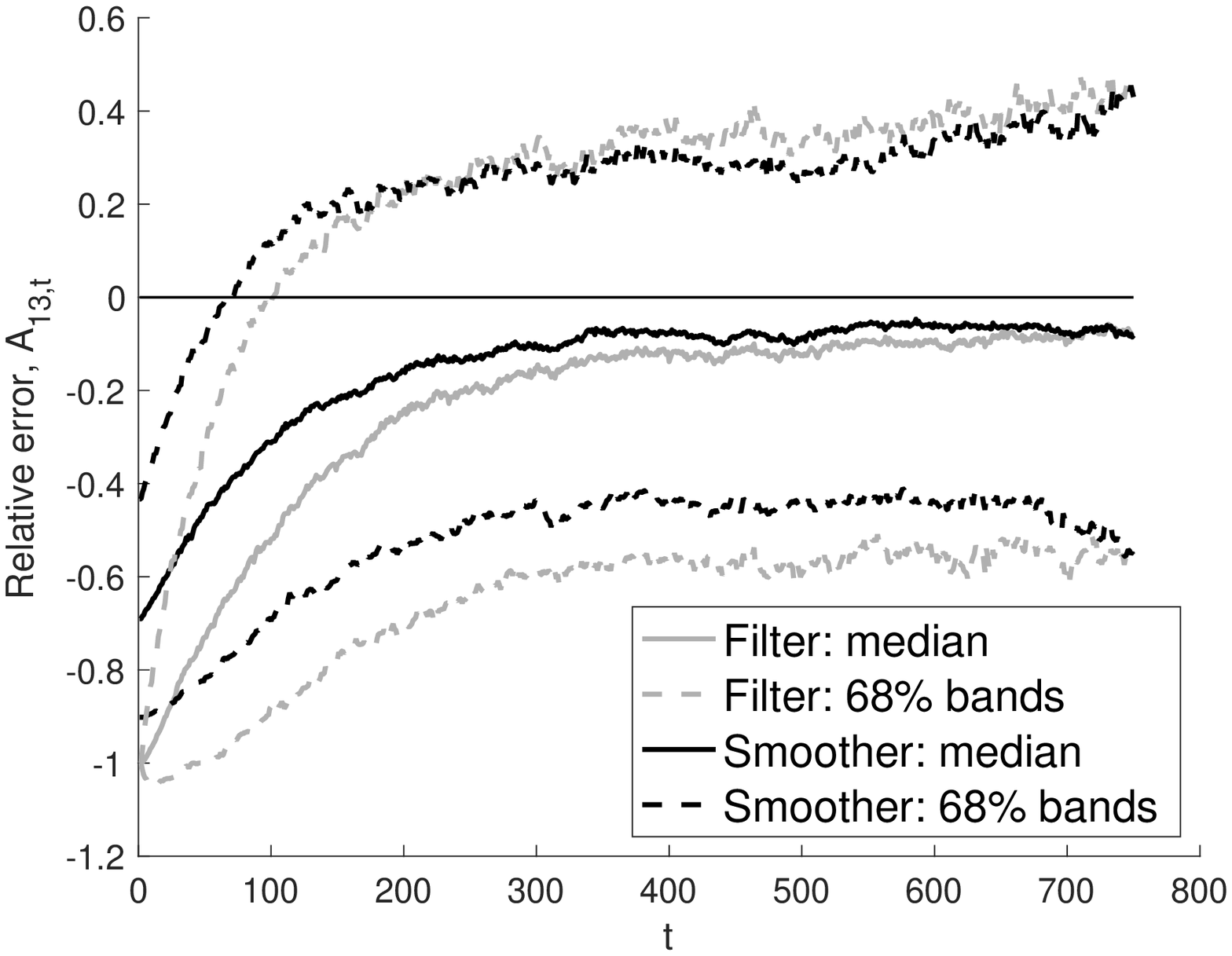}\\
\includegraphics[scale=0.29]{./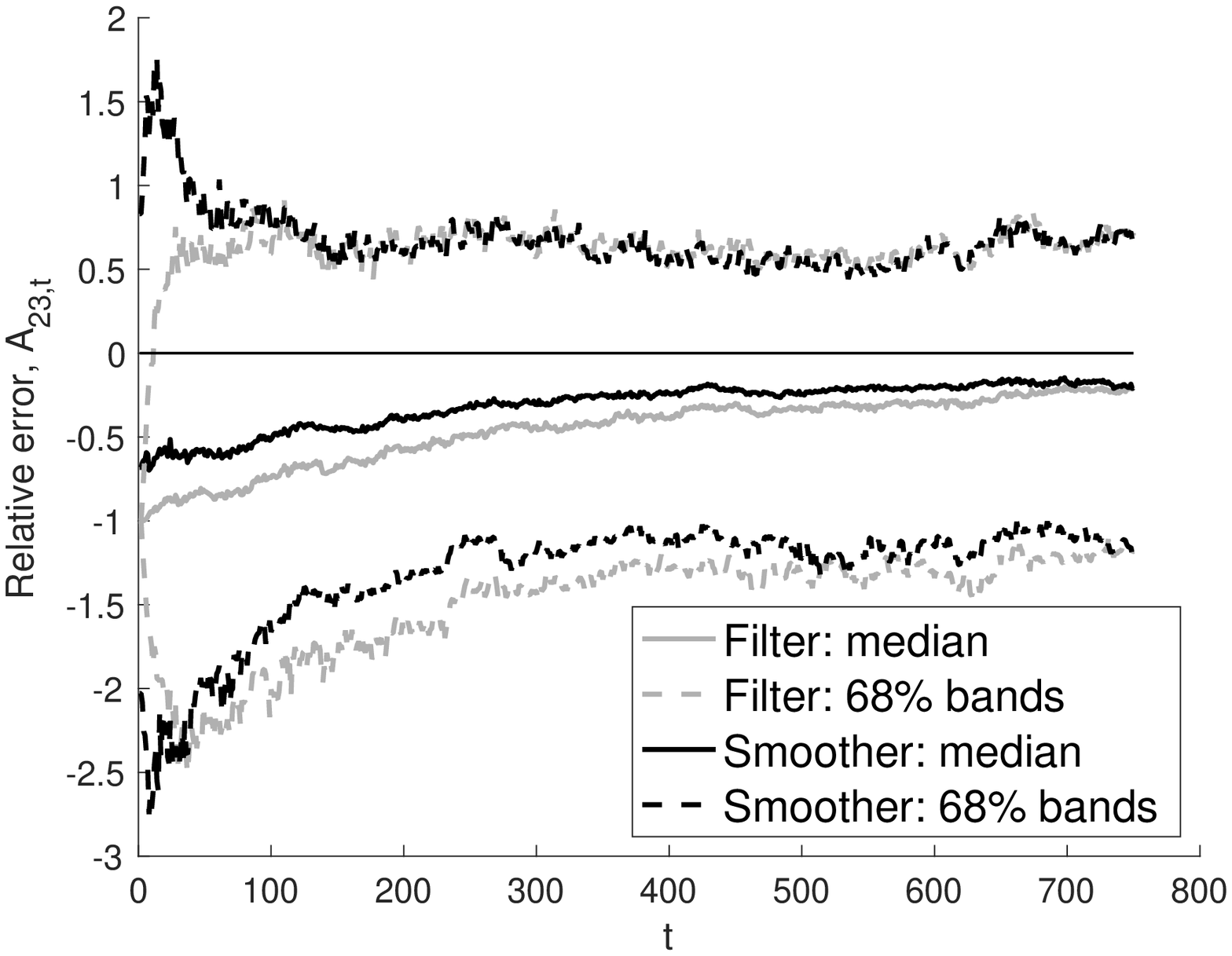}
\includegraphics[scale=0.29]{./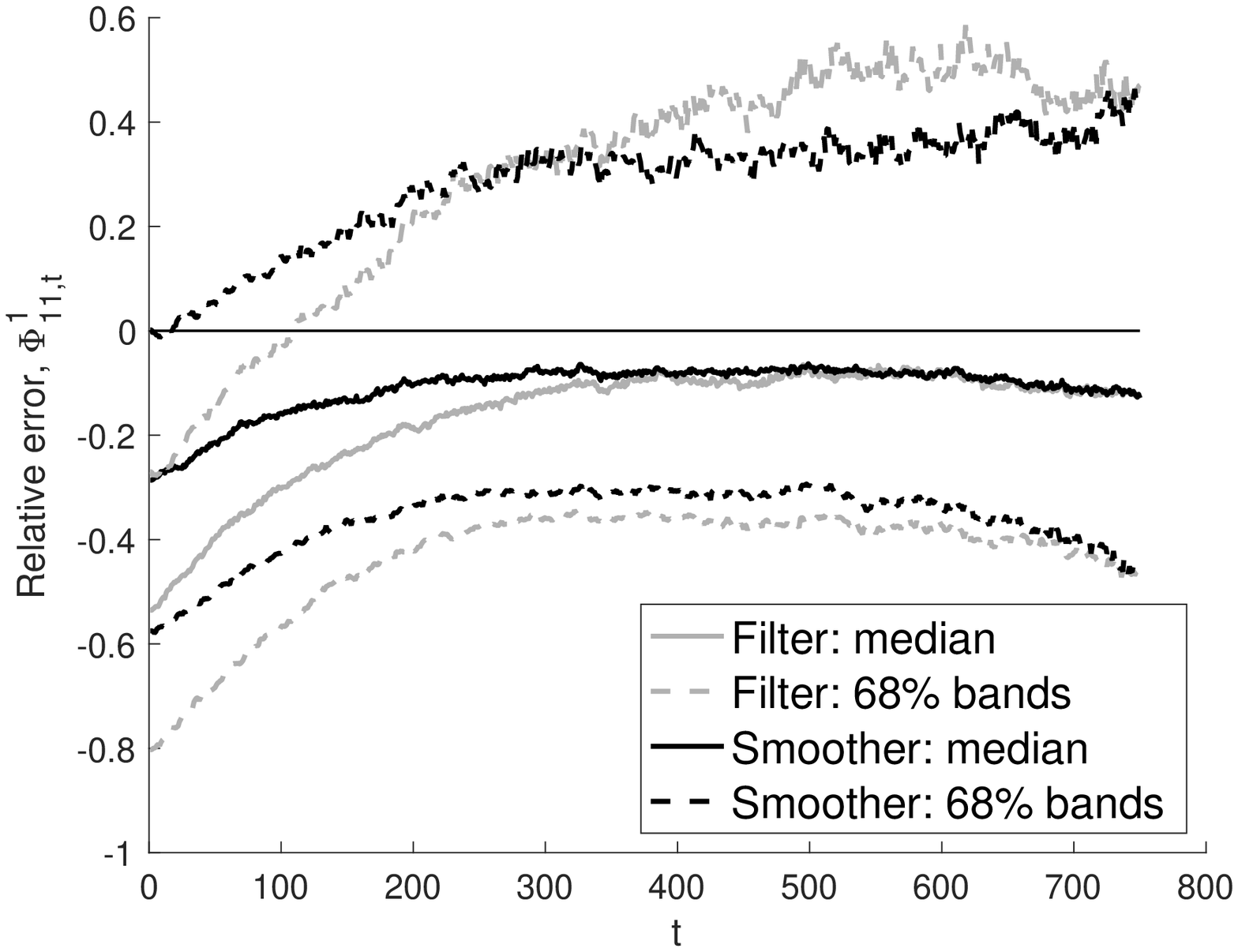}
\includegraphics[scale=0.29]{./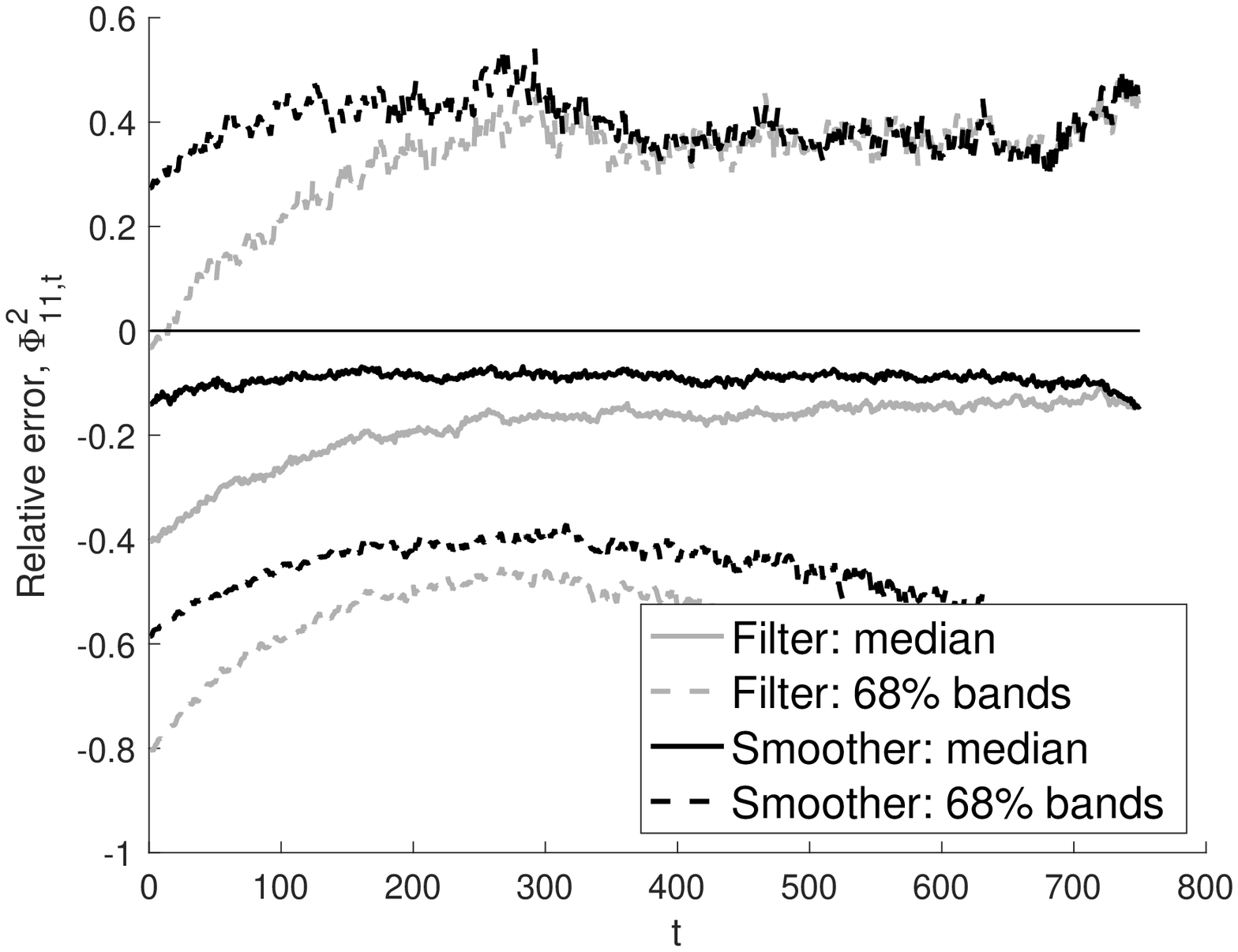}
\caption{\label{fig:shock_drvn_rel} Median (bold lines) and 68\% bands (dashed lines) of the relative error between the shock-driven latent process and the filtered (gray) and smoothed (black) time series of the time-varying parameters $S_{11,t}$, $A_{12,t}$, $A_{13,t}$, $A_{23,t}$, $\Phi^1_{11,t}$, and $\Phi^2_{11,t}$.}
\end{figure}
\begin{figure}
\centering
\includegraphics[scale=0.29]{./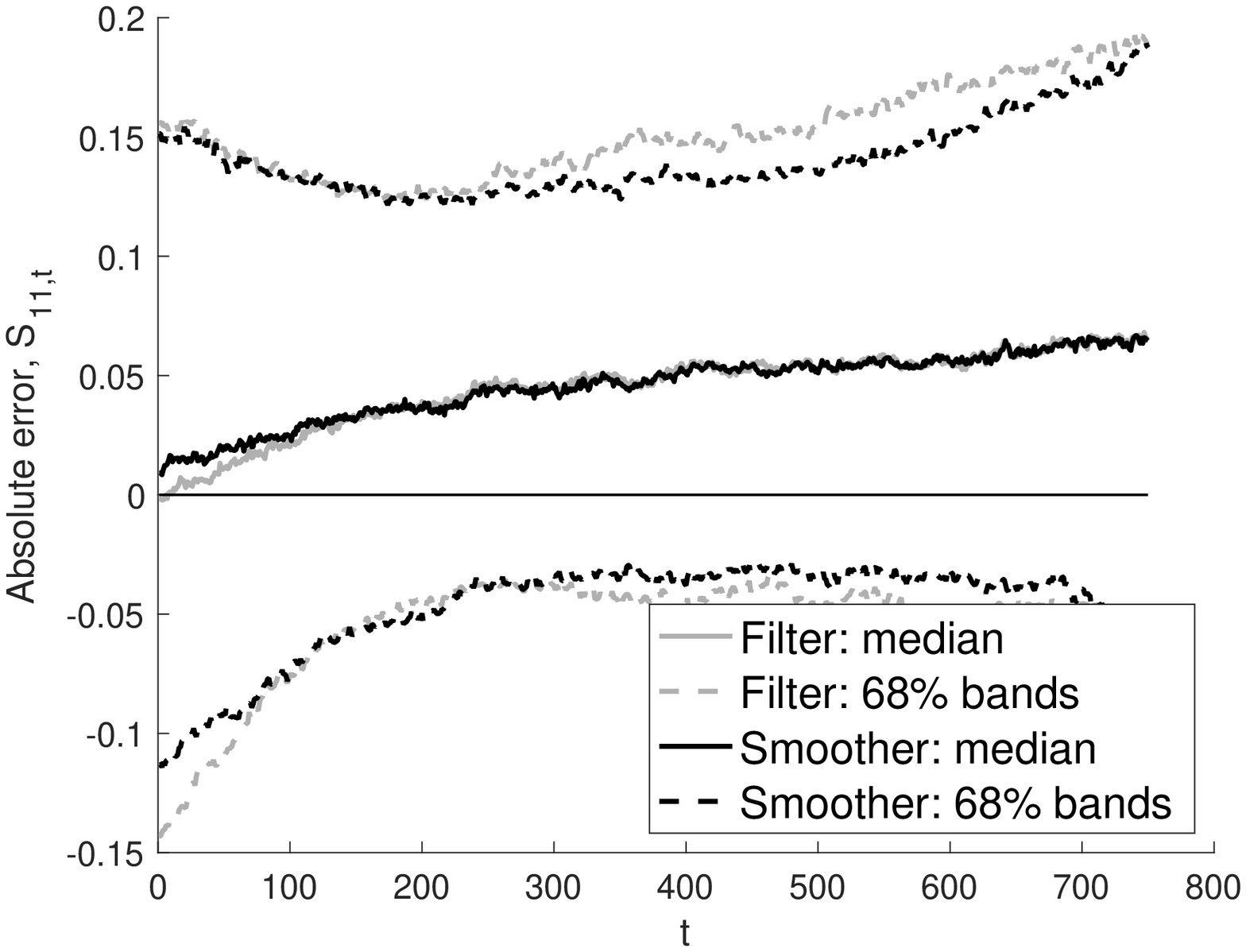}
\includegraphics[scale=0.29]{./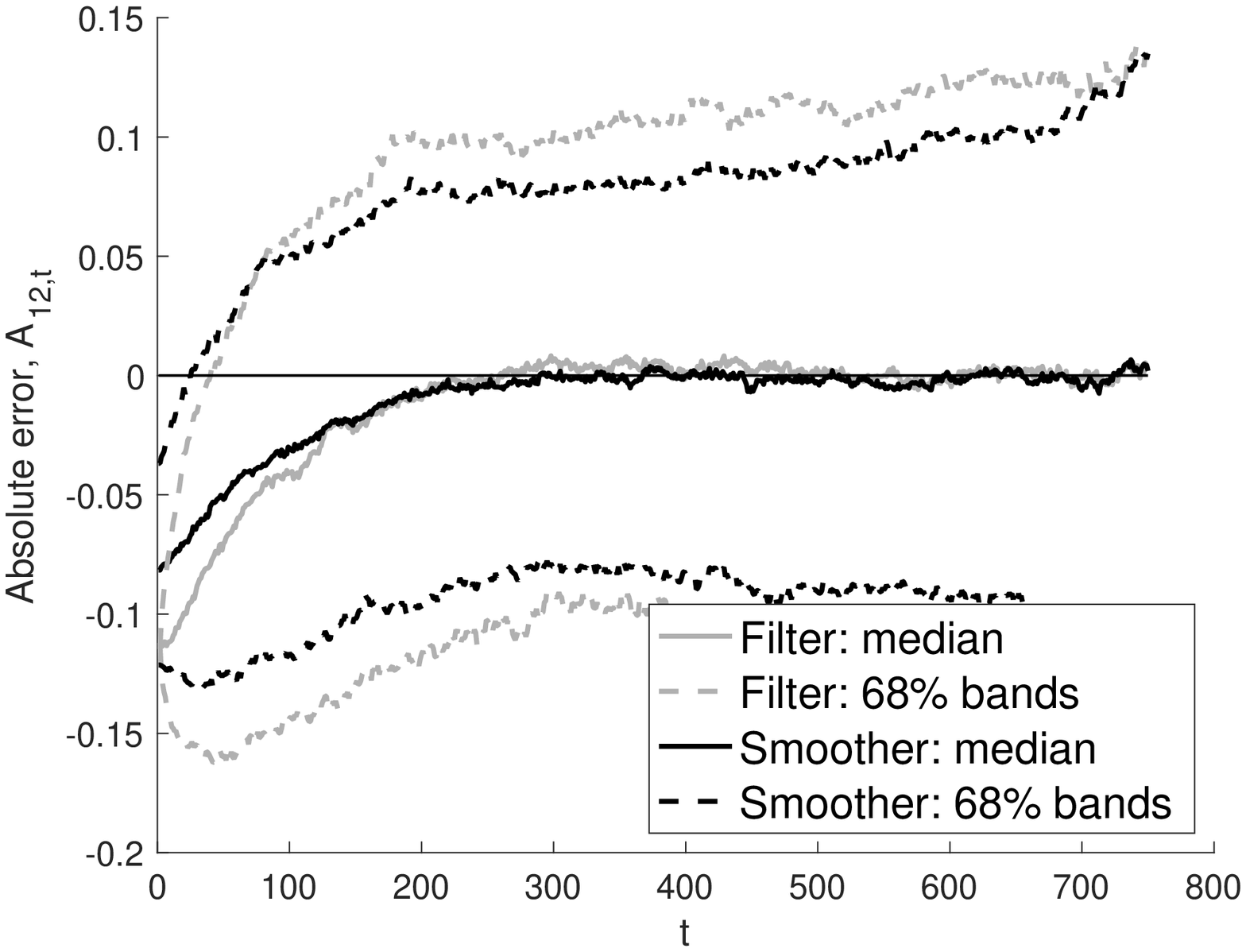}
\includegraphics[scale=0.29]{./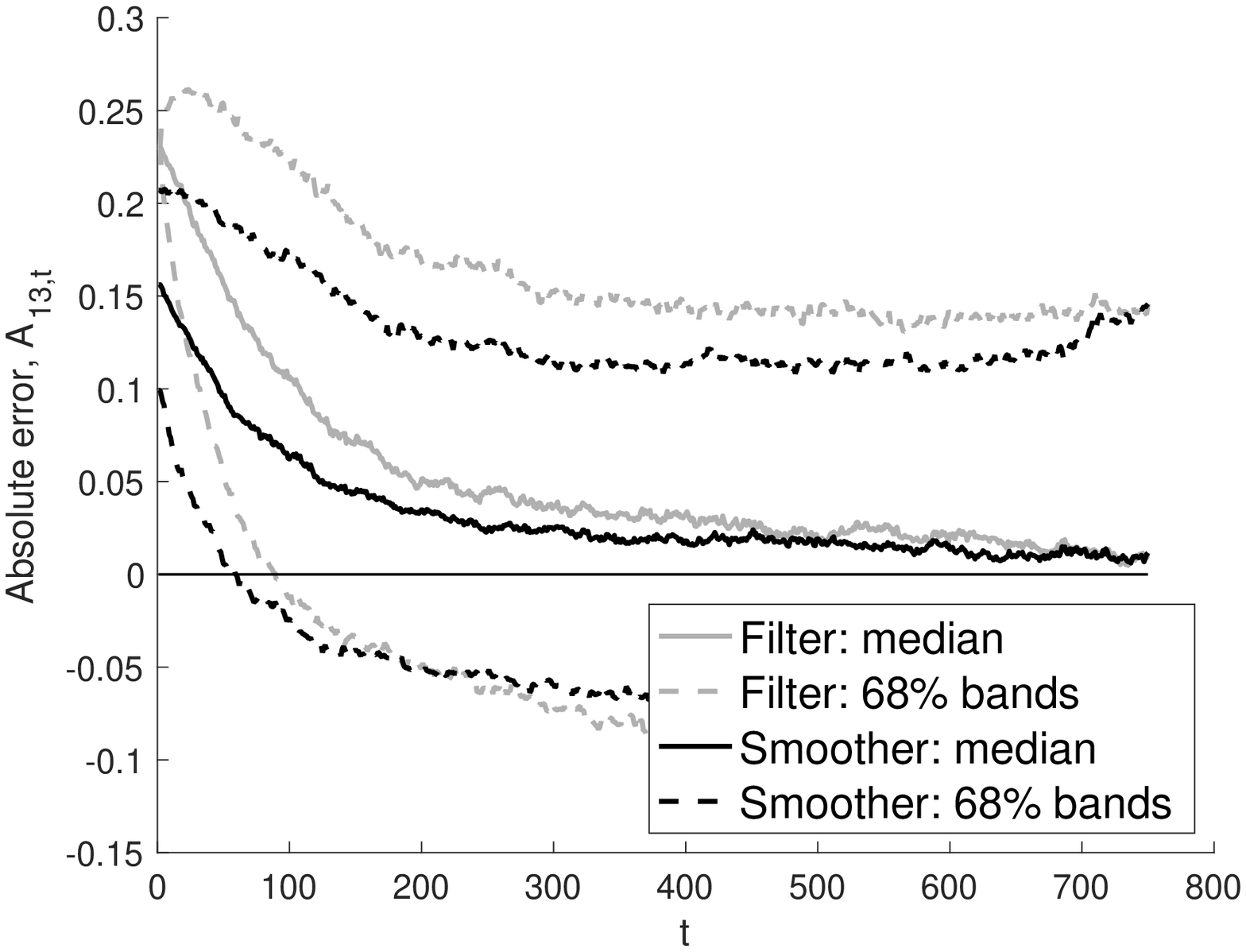}\\
\includegraphics[scale=0.29]{./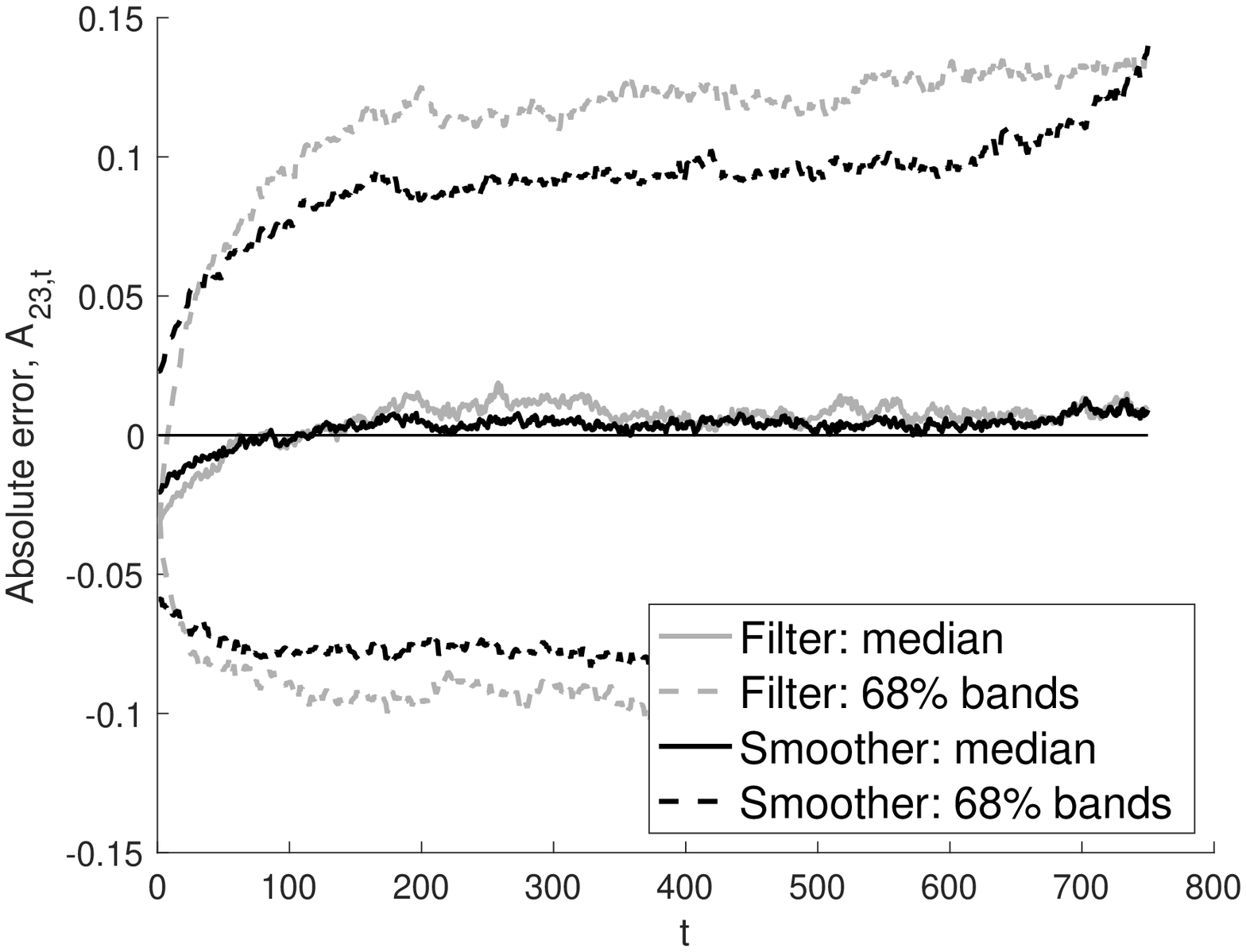}
\includegraphics[scale=0.29]{./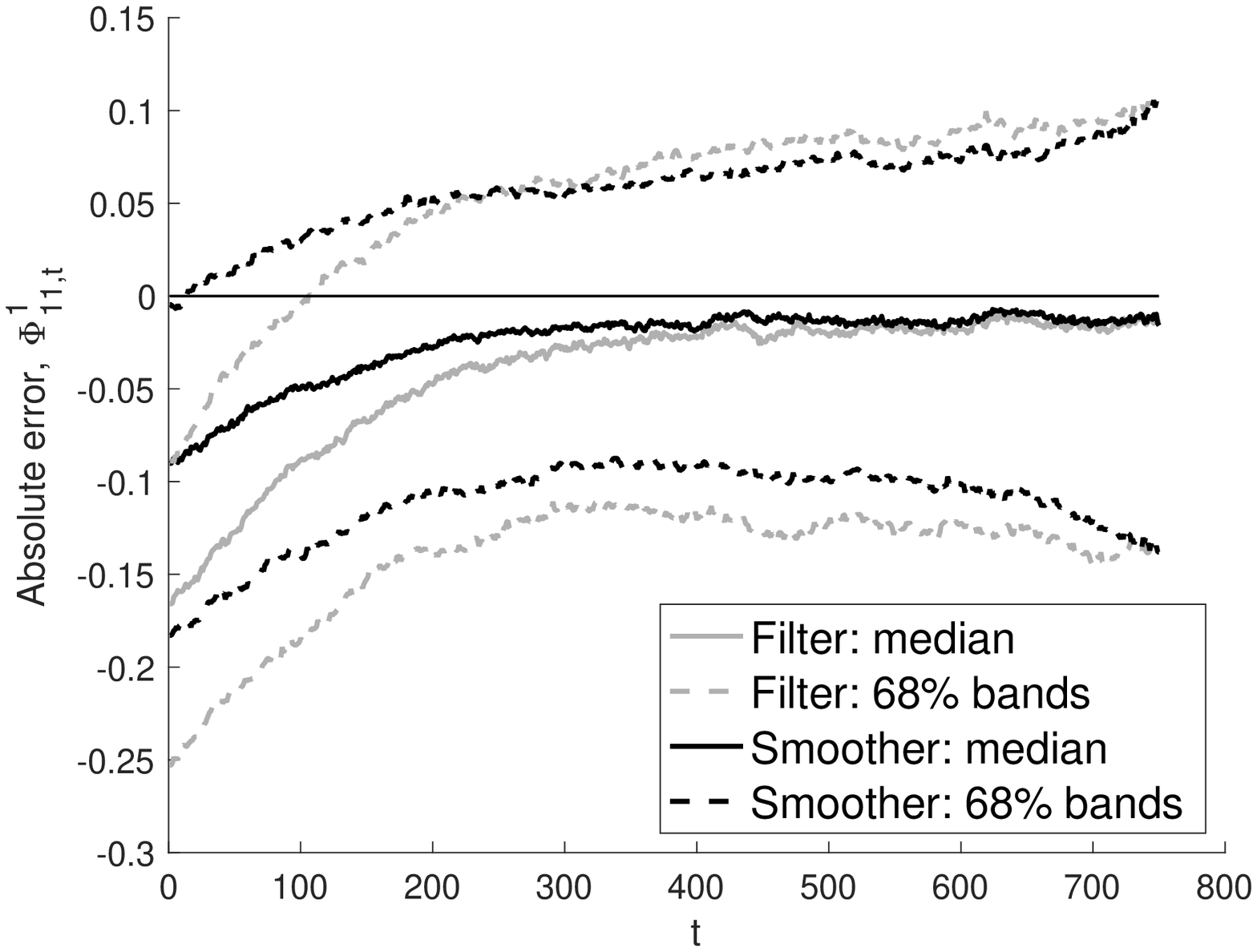}
\includegraphics[scale=0.29]{./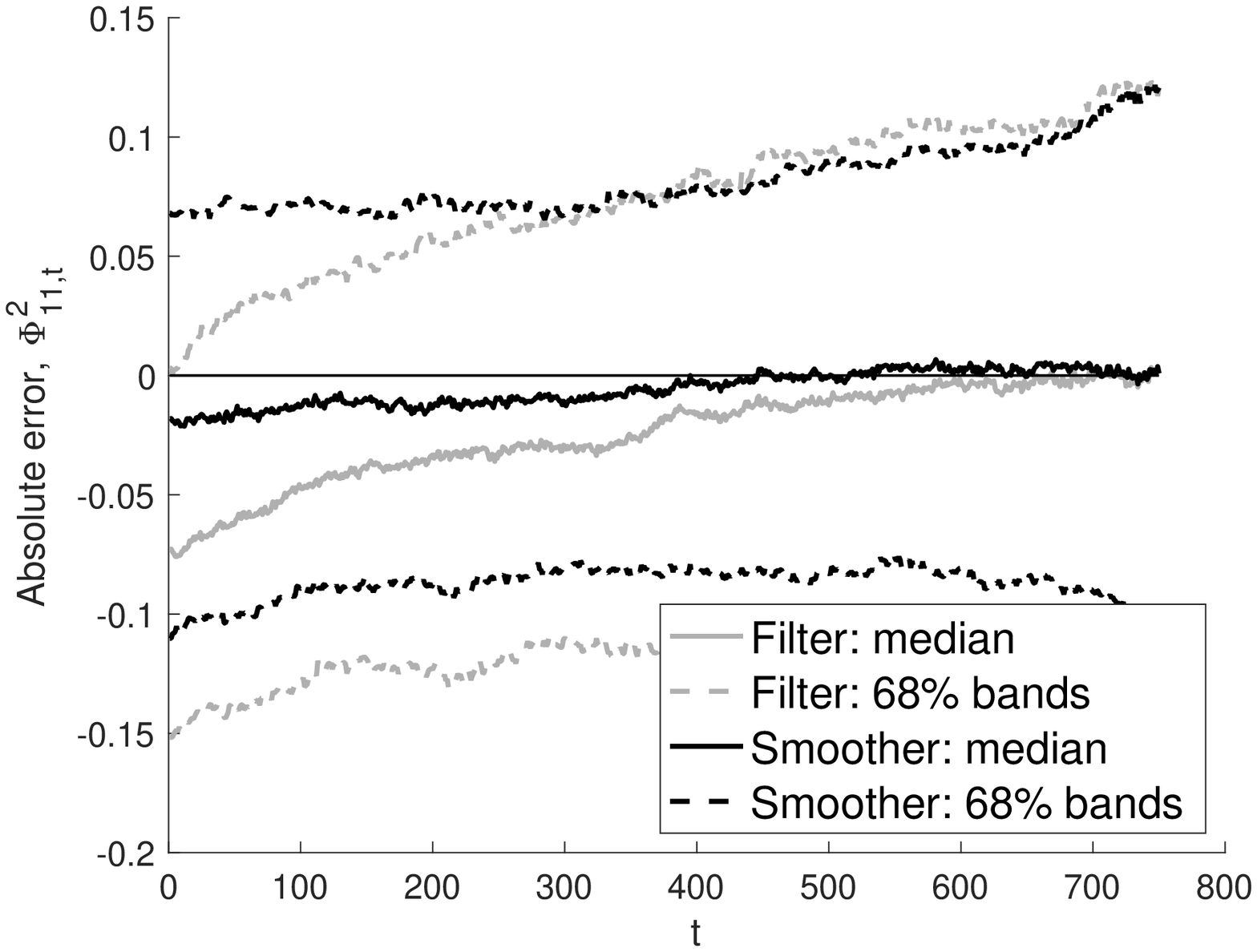}
\caption{\label{fig:shock_drvn_abs} Median (bold lines) and 68\% bands (dashed lines) of the absolute error between the shock-driven latent process and the filtered (gray) and smoothed (black) time series of the time-varying parameters $S_{11,t}$, $A_{12,t}$, $A_{13,t}$, $A_{23,t}$, $\Phi^1_{11,t}$, and $\Phi^2_{11,t}$.}
\end{figure}

\section{Empirical illustration}\label{sec:realdata}

\subsection{Dataset}\label{sec:dataset}

The model is estimated over the sample period July 1954 - December 2019 at the monthly frequency ($T=786$).  As far as the initial date is concerned, the choice is driven by data availability. The final date corresponds to the last date available when the sample was collected for a preliminary analysis. Following~\cite{gourieroux2017statistical}, we consider a small-scale VAR model involving three dependent variables stacked in vector $y_t$, that are the inflation, the economic activity, and the nominal short-term interest rate. The vector $y_t$ comprises differences in the logarithm of the GDP deflator on a percentage basis. GDP deflator inflation has been detrended using the one-sided Kalman filter based on~\cite{stock1999forecasting}. As a measure of economic activity, we consider the unemployment gap. It is computed as the difference between the observed unemployment rate (mnemonic UNRATE) and the natural rate of unemployment (mnemonic NROU). The Federal Funds rate proxies the nominal short-term interest rate. All series are taken from the Fred database of the Federal Reserve Bank of St Louis.  Preliminarily to the econometric analysis, the three dependent variables are centered by subtracting the unconditional mean. The panels in Figure~\ref{fig:raw_data} report the three time series.

\begin{figure}
\centering
\includegraphics[scale=0.3]{./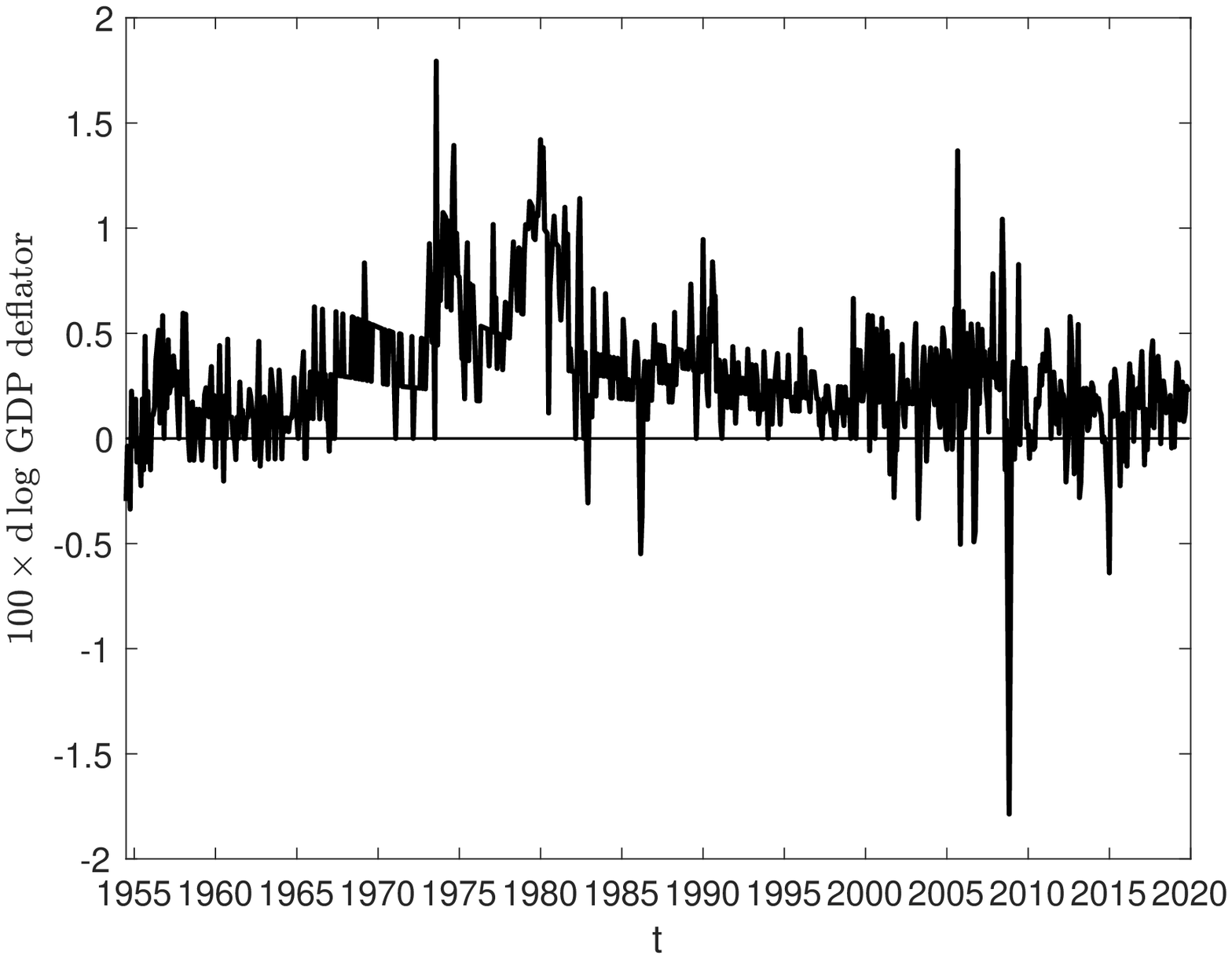}
\includegraphics[scale=0.3]{./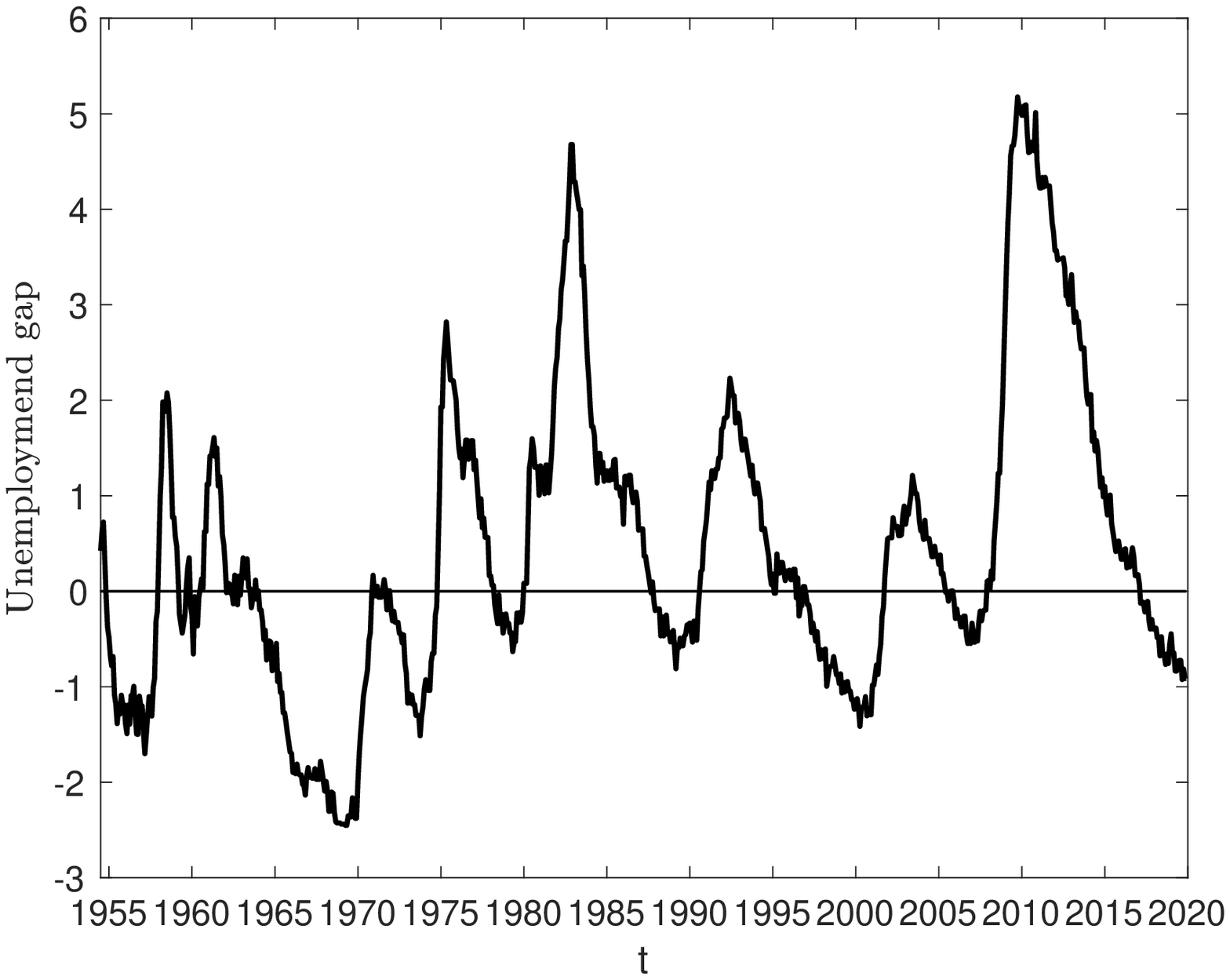}
\includegraphics[scale=0.3]{./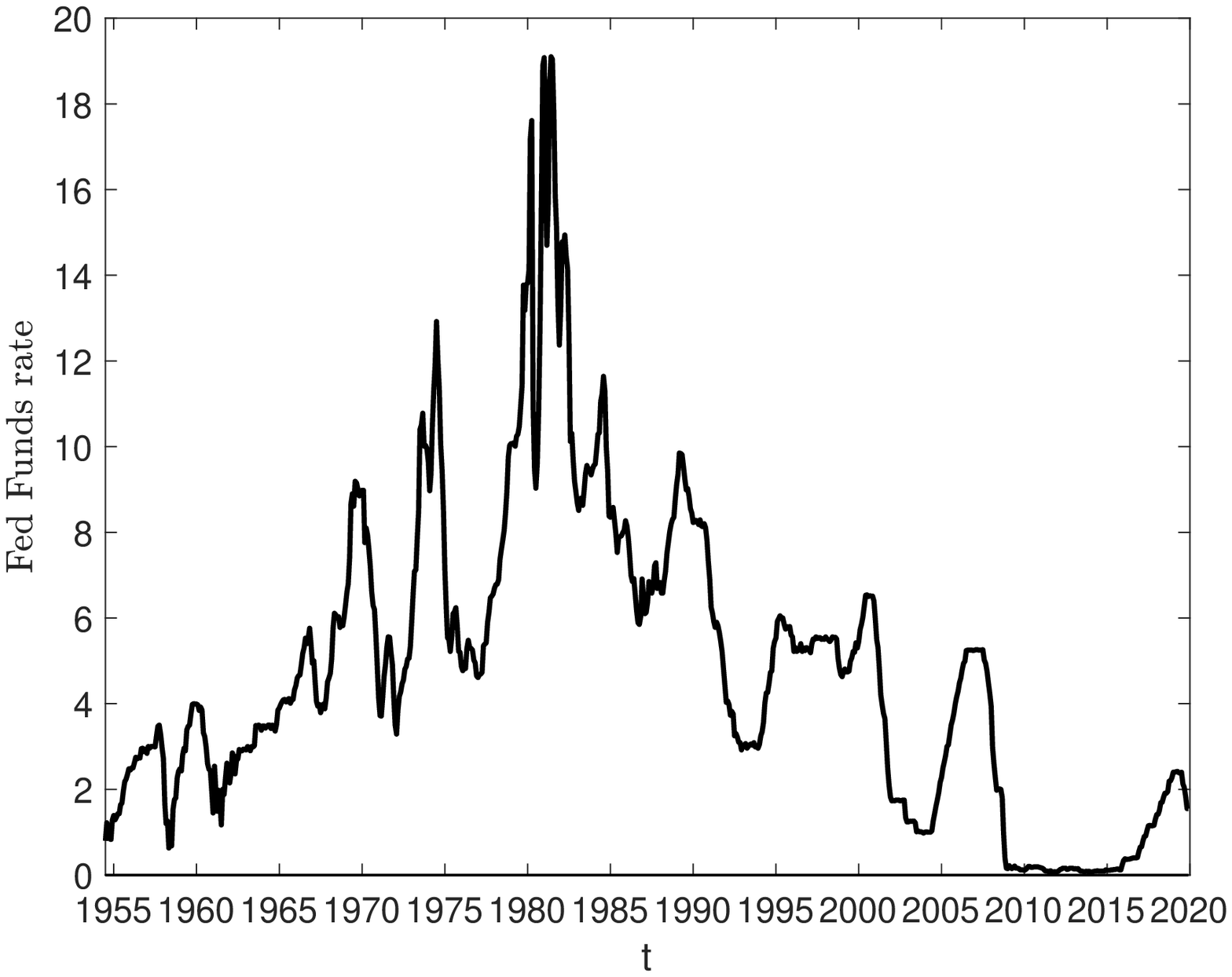}
\caption{\label{fig:raw_data} Monthly time series of differences in the logarithm of the GDP deflator (percentage basis), of the unemployment gap, and Fed Funds rate over the sample period July 1954 - December 2019.}
\end{figure}

\subsection{Econometric analysis}\label{sec:econ_analysis}

In similar analyses from previous works for the USA on a quarterly basis the VAR order was set to 2~\cite{cogley2005drifts,primiceri2005time,prieto2016time}. Consistently, here we set $p=6$. This specification requires the introduction of six distinct time-varying matrix coefficients, $\Phi^1_t$, \ldots, $\Phi^6_t$. Each time-varying matrix contributes quadratically to the number of time-varying parameters. To limit the number of latent variables to filter without reducing the number of lags, we propose a restriction on the VAR models taking inspiration from the heterogeneous specification introduced in~\cite{corsi2009simple} to model the evolution of the realized volatility. 
The time-varying VAR that we consider takes the form
\begin{equation}\label{eq:redVAR_het}
	y_t = \Phi^{(m)}_{t} y_{t-1} + \Phi^{(s)}_{t} y^{(s)}_{t-2} + \mathrm{e}^{S_t} (\mathbb{I}+A_t)(\mathbb{I}-A_t)^{-1} \epsilon_t\,,
\end{equation}
where $y^{(s)}_{t-2}=\sum_{\ell=2}^6 y_{t-\ell}/5$ averages the macro variables over the preceding five months. The notations $^{(m)}$ and $^{(s)}$ refer to the aggregation at the monthly and semester level, respectively. As far as the evolution of the time-varying parameters is concerned, we assume that they evolve according to an integrated SD process of order one:
\begin{align}
S_{ij,t+1} &= S_{ij,t} + \alpha_{S_{ij}} \nabla_{S_{ij,t}}\,,\quad\text{for } n\geq j\geq i=1,\ldots,3\notag\\
A_{ij,t+1} &= A_{ij,t} + \alpha_{A_{ij}} \nabla_{A_{ij,t}}\,,\quad\text{for } n\geq j > i=1,\ldots,3\notag\\
\Phi^{(m)}_{ij,t+1} &= \Phi^{(m)}_{ij,t} + \alpha_{\Phi^{(m)}_{ij}} \nabla_{\Phi^{(m)}_{ij,t}}\,,\quad\text{for } i,j=1,\ldots,3,\notag\\
\Phi^{(s)}_{ij,t+1} &= \Phi^{(s)}_{ij,t} + \alpha_{\Phi^{(s)}_{ij}} \nabla_{\Phi^{(s)}_{ij,t}}\,,\quad\text{for } i,j=1,\ldots,3,\label{eq:int_DCS}
\end{align}
with
\begin{align}
\label{eq:int_scores}
\nabla_{S_{ij,t}} =&\sum_{i=1}^n e_i^\intercal O_t^\intercal\frac{\partial \mathrm{e}^{-S_t}}{\partial S_{ij,t}}(y_t-\Phi^{(m)}_{t} y_{t-1} + \Phi^{(s)}_{t} y^{(s)}_{t-2})G(\epsilon_{i,t}(y_t);\delta_i,\nu_i)-\delta_{ij} S_{ii,t}\,,\notag\\
\nabla_{A_{ij,t}} =&-\sum_{i=1}^n e_i^\intercal \left(O_t^\intercal (E_{ij}-E_{ji}) (\mathbb{I}+A_t)^{-1}+(E_{ij}-E_{ji}) (\mathbb{I}-A_t)^{-1} O_t^\intercal\right)\mathrm{e}^{-S_t}\notag\\
& \times (y_t-\Phi^{(m)}_{t} y_{t-1} + \Phi^{(s)}_{t} y^{(s)}_{t-2})G(\epsilon_{i,t}(y_t);\delta_i,\nu_i)\,,\notag\\
\nabla_{\Phi^{(m)}_{ij,t}} =& -\sum_{i=1}^n e_i^\intercal O_t^\intercal\mathrm{e}^{-S_t}E_{ij}\Phi^{(m)}_{t}~y_{t-1}G(\epsilon_{i,t}(y_t);\delta_i,\nu_i)\,,\notag\\
\nabla_{\Phi^{(s)}_{ij,t}} =& -\sum_{i=1}^n e_i^\intercal O_t^\intercal\mathrm{e}^{-S_t}E_{ij}\Phi^{(s)}_{t}~y^{(s)}_{t-2}G(\epsilon_{i,t}(y_t);\delta_i,\nu_i)\,.
\end{align} 
The symbol $\delta_{ij}$ which appears in the expression for the score w.r.t. the elements of $S_t$ refers to the Kronecker delta.
The initial values of the time-varying parameters $S_t$, $\Phi^{(m)}_t$, and $\Phi^{(s)}_t$ are fixed following the same strategy presented in the Monte Carlo section. We set them via OLS computed from the first ten years of data. We set the initial $A_0$ to the values estimated from the application of the model by~\cite{gourieroux2017statistical} to the SVAR with constant parameters. The number of static parameters to be estimated corresponds to 6 (one for each element of the lower triangular $S_t$) plus 3 (one for each element of the skew-symmetric $A_t$) plus  18 (one for each element of $\Phi^{(m)}_t$ and $\Phi^{(s)}_t$), for a total of 27 parameters. Given the quite limited length of the macro time series, we impose some restrictions on the parameter space to estimate the model. We assume that the coefficients for the off-diagonal elements of $S_t$ are equal, i.e.
\[
   \alpha_{S_{12}} = \alpha_{S_{13}} = \alpha_{S_{23}} = \alpha_{S_\text{off}}\,;
\]
we also restrict the coefficients of the time-varying elements of $A_t$ to take the same value, i.e.
\[
   \alpha_{A_{12}} = \alpha_{A_{13}} = \alpha_{A_{23}} = \alpha_{A}\,.
\]
As far as the static coefficients of the auto-regressive matrices are concerned, we force the off-diagonal elements to the same values, i.e.
\[
	\alpha_{\Phi^{(m)}_{rs}}=\alpha_{\Phi^{(m)}_\text{off}}\,,\quad\text{and}\quad \alpha_{\Phi^{(s)}_{rs}}=\alpha_{\Phi^{(s)}_\text{off}}\,,\quad\text{for}\quad r\neq s\,.
\]
Eventually, the specification that we adopt requires the estimation of the static parameter vector $\Theta=\{\alpha_{S_{11}},\alpha_{S_{22}},\alpha_{S_{33}},\alpha_{S_\text{off}},\alpha_{A},\alpha_{\Phi^{(m)}_{11}},\alpha_{\Phi^{(m)}_{22}},\alpha_{\Phi^{(m)}_{33}},\alpha_{\Phi^{(m)}_\text{off}},\alpha_{\Phi^{(s)}_{11}},\alpha_{\Phi^{(s)}_{22}},\alpha_{\Phi^{(s)}_{33}},\alpha_{\Phi^{(s)}_\text{off}}\}$ for a total of 13 values.\\
To enforce stationarity of the system, a standard practice~\citep[see][]{cogley2005drifts} in parameter driven VAR models with drifting coefficients is to impose a stability constraint. The constraint amounts to a reflecting barrier, encoded in an indicator function, which reflects an a priori belief about the implausibility of explosive representations for inflation, unemployment, and real interest. In an observation driven context the realised obervations drive the evolution of the time-varying parameters, so the trick of the reflecting barrier cannot be applied. A preliminary in-sample analysis of our dataset confirmed that the PML estimation does not prevent local violations of the stability condition of the VAR model. Even though this fact does not represent a real problem for filtering, it represents an issue for the computation of the impulse response functions (IRFs). In a time-varying approach, IRFs are conditional on the local state of time-varying parameters. If the spectral radius of the auto-regressive matrix of the companion form VAR model is locally larger than one, the IRF may exhibit an explosive behavior. Thus, the stability issue has two facets: On one side one has to enforce the stability condition when filtering, on the other side one has to compute futures scenarios enforcing the same condition. We first comment the strategy to enforce the former condition. The idea is to modify the pseudo-likelihood with an additive term which penalizes the violation of the stability condition. The  penalization is a function of the spectral radius equal to zero when the radius is smaller than one or equal to a negative quantity when it is larger or equal then one. The scores~(\ref{eq:int_scores}) have to be modified taking into consideration the penalization term. Since the radius can be computed from the matrices $\Phi^{(m)}_t$ and $\Phi^{(s)}_t$, the scores w.r.t. the components of $S_t$ and $A_t$ are unaffected. The modified scores for the components of $\Phi^{(m)}_t$ and $\Phi^{(s)}_t$ are presented in the Appendix~\ref{ap:appendix4}. In the same Appendix we also discuss in detail the procedure to compute the sensitivity of the spectral radius w.r.t. the time-varying parameters. The penalization term depends on one auxiliary parameter. In general, one may expect that the estimation and filtering results will depend on it. The good news is that the penalized PML maximization procedure performs very well and the result does not depend on the auxiliary parameter. The penalized SD filter avoids the trajectories for the time-varying entries of $\Phi^{(m)}_t$ and $\Phi^{(s)}_t$ which violate the stability condition. Thus, the optimal value of the penalized PML is equal to the optimal value of the original PML restricted to stable trajectories. \\
\begin{table}[h]
\centering
\begin{tabular}{|c|c|c|r|}
 \hline
Parameter & & Robust standard error & $t$-statistics \\ 
 \hline
 $\alpha_{S_{11}}$ & $5.2\times 10^{-2}$ & $2.2\times 10^{-2}$ & 2.37 \\
 $\alpha_{S_{22}}$ & $3.4\times 10^{-2}$ & $8.5\times 10^{-3}$ & 3.95 \\
 $\alpha_{S_{33}}$ & $2.2\times 10^{-1}$ & $2.5\times 10^{-2}$ & 8.50 \\
 $\alpha_{S_\text{off}}$ & $9.4\times 10^{-5}$ & $1.1\times 10^{-4}$ & 0.84 \\
 $\alpha_{A}$ & $5.4\times 10^{-4}$ & $4.5\times 10^{-3}$ & 0.12 \\
 $\alpha_{\Phi^{(m)}_{11}}$ & $1.6\times 10^{-2}$ & $5.3\times 10^{-3}$ & 2.97 \\
 $\alpha_{\Phi^{(m)}_{22}}$ & $2.5\times 10^{-4}$ & $9.9\times 10^{-5}$ & 2.53 \\
 $\alpha_{\Phi^{(m)}_{33}}$ & $1.5\times 10^{-5}$ & $8.8\times 10^{-6}$ & 1.69 \\
 $\alpha_{\Phi^{(m)}_\text{off}}$ & $1.7\times 10^{-7}$ & $4.2\times 10^{-6}$ & 0.04 \\
 $\alpha_{\Phi^{(s)}_{11}}$ & $2.6\times 10^{-4}$ & $1.2\times 10^{-3}$ & 0.21 \\
 $\alpha_{\Phi^{(s)}_{22}}$ & $3.2\times 10^{-7}$ & $8.3\times 10^{-7}$ & 0.39 \\
 $\alpha_{\Phi^{(s)}_{33}}$ & $2.4\times 10^{-14}$ & $5.2\times 10^{-8}$ & 0.00 \\
 $\alpha_{\Phi^{(s)}_\text{off}}$ & $2.5\times 10^{-6}$ & $2.3\times 10^{-8}$ & 108.97\\ 
 \hline
 \end{tabular}			
  \caption{Parameter values, robust standard errors, and $t$-statistics from the maximization of the pseudo-likelihood for the SVAR model with SD time-varying parameters.}
\label{tab:param_error_tstat}
\end{table}
Table~\ref{tab:param_error_tstat} reports the parameter values estimated via maximization of the penalized pseudo-likelihood. As anticipated above, no violation of the stability condition are detected and so the value of the penalized PML corresponds to the value of the PML restricted to stable trajectories. We computed the robust standard errors to properly correct for the likelihood misspecification. We see that all the coefficients driving the evolution of the time-varying diagonal components of the matrix $S_t$ are highly significant. The $t$-statistics for the off-diagonal elements is slightly smaller than one. These results strongly support the heteroscedasticity of the residuals of the SVAR model. The $\alpha_A$ coefficient is not significantly different from zero. This fact has an important implication, which will be further confirmed later, that at the monthly frequency there is no evidence of time variation of the skew-symmetric matrix $A_t$. Consistently, the orthogonal matrix $O_t$ does not change with time. In a similar fashion to the diagonal elements of $S_t$, the diagonal elements of $\Phi^{(m)}_t$ exhibit statistically significant static parameters. This fact allows us to conclude that the relation between a component of the vector $y_t$ and its lagged realization changes over time. For all remaining parameters the $t$ statistics are very low, with the only exception of $\alpha_{\Phi^{(s)}_\text{off}}$. The $t$ statistic is extremely high, but the absolute value of the parameter is of order $10^{-6}$. To draw any conclusion about the time-variation, we need to look at the filtered trajectories, equipped with parameter uncertainty confidence bands. In addition, the fact that $g_0$ is not known to the Econometrician prevents the possibility to recover the time-varying parameters  $\theta_t$ from the series of observation which in turn induces an error in the reconstruction of the structural shocks which propagate back on $\theta_t$ and so on, introducing filtering uncertainty in the dynamics of the VAR parameters. In order to account for this additional uncertainty we adopt the method proposed in~\cite{buccheri2021filtering} which includes both parameter and filtering uncertainty in the construction of the confidence bands.\\

The computation of the confidence bands for the filtered time series is based on the works of~\cite{pascual2006bootstrap,blasques2016sample,buccheri2021filtering}. We adopt the Bayesian perspective that the vector of static parameters is a random variable whose prior distribution corresponds to the asymptotic distribution of the PML estimates. Then, the confidence bands can be constructed from the formula for the total conditional variance~\citep{hamilton1986standard}. It is composed by two terms. The first one accounts for the filtering uncertainty. For the recursive specification~(\ref{eq:int_DCS}), it is constant and boils down to the $\alpha$ coefficients. The second component accounts for the parameter uncertainty and can be computed following the numerical procedure discussed in~\cite{blasques2016sample}. To obtain Figures~\ref{fig:S_bands},~\ref{fig:A_bands},~\ref{fig:Phim_bands}, and~\ref{fig:Phis_bands} which report the filtered time series and the associated 68\% confidence bands accounting for both parameter and filtering uncertainty, we drew 360 samples from the the asymptotic distribution of the PML estimator. Figure~\ref{fig:S_bands} shows the time-varying elements of the lower-triangular matrix $S_t$ and the associated uncertainty bands. The six elements are reported in the panels along and below the main diagonal of the Figure. The three panels above the diagonal show the time-varying variances of the structural shocks. This empirical evidence clearly supports our modeling approach which accounts for the heteroscedastic nature of the structural shock variances. It is worth noticing the huge increase of volatility of the third structural shock during the early 80's, and a quite similar behavior for the second structural shock delayed by few months. As expected from the scarce significance of the $\alpha_{A}$, Figure~\ref{fig:A_bands} confirms the low variability of the $A_t$ coefficients. The horizontal lines corresponding to a zero level are well included in the 68\% confidence bands for all three panels. This fact has an important consequence. At monthly frequency, one can not reject the null hypothesis that the orthogonal matrix $O_t$ appearing in equation~(\ref{eq:SD_SVAR}) corresponds to the identity matrix. Finally, panels in Figures~\ref{fig:Phim_bands} and~\ref{fig:Phis_bands} report the filtered time-series for the time-varying components of $\Phi^{(m)}_t$ and $\Phi^{(s)}_t$, respectively. Consistently with the $t$-statistics in Table~\ref{tab:param_error_tstat}, the diagonal elements of the $\Phi^{(m)}_t$ manifest a significant time-variation, while the off-diagonal elements do not change with time. For the diagonal elements of $\Phi^{(s)}_t$ one can not conclude significant time-variation. Quite interestingly, three out of six off-diagonal elements $\Phi^{(s)}_t$ present interesting patterns. Specifically, the element $\Phi^{(s)}_{13,t}$ starts from a positive level and then decreases to a negative level, while $\Phi^{(s)}_{23,t}$ and $\Phi^{(s)}_{32,t}$ are positive, but both exhibit a significant decline in the final part of the sample.\\
\begin{figure}
\centering
\includegraphics[scale=0.29]{./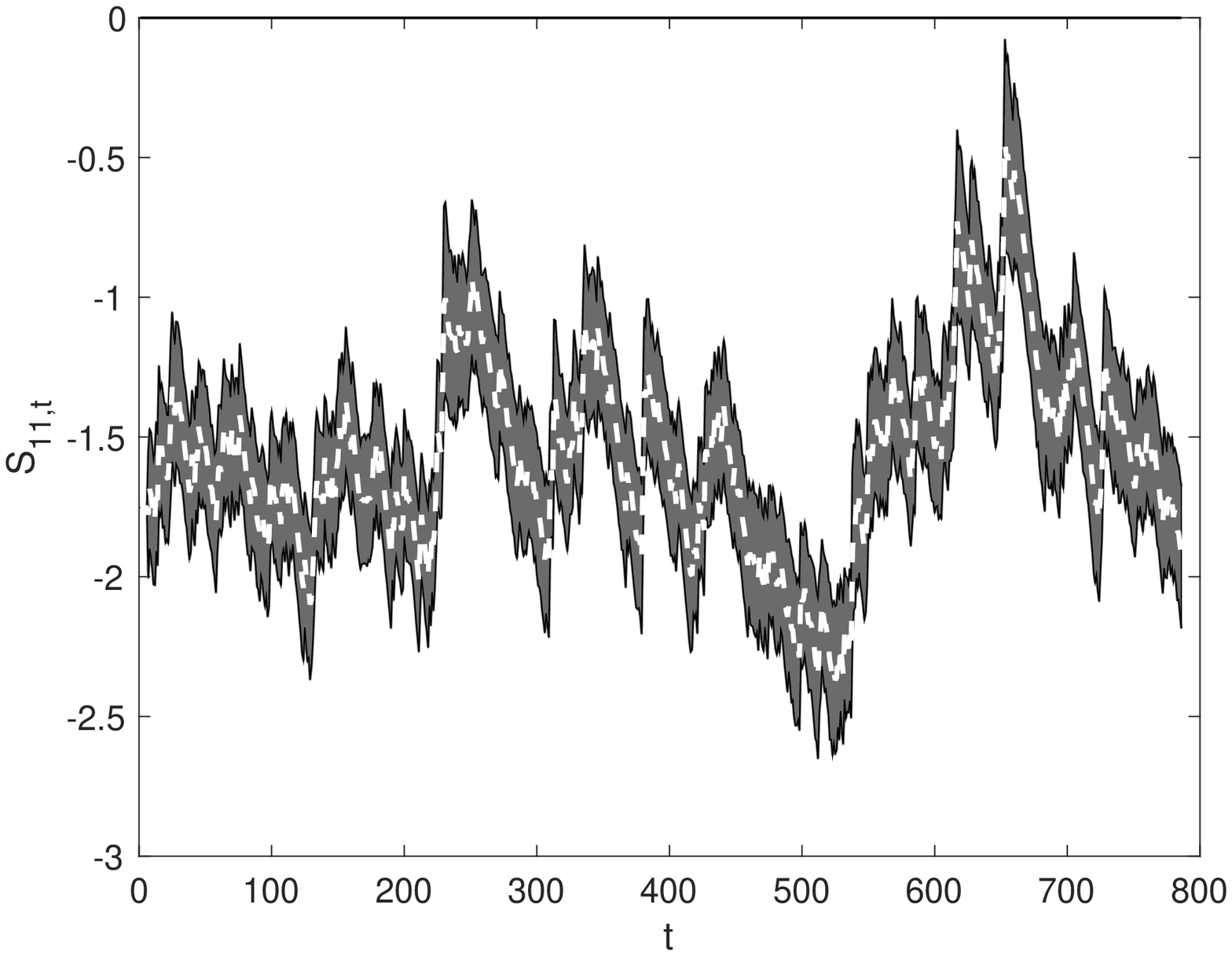}
\includegraphics[scale=0.29]{./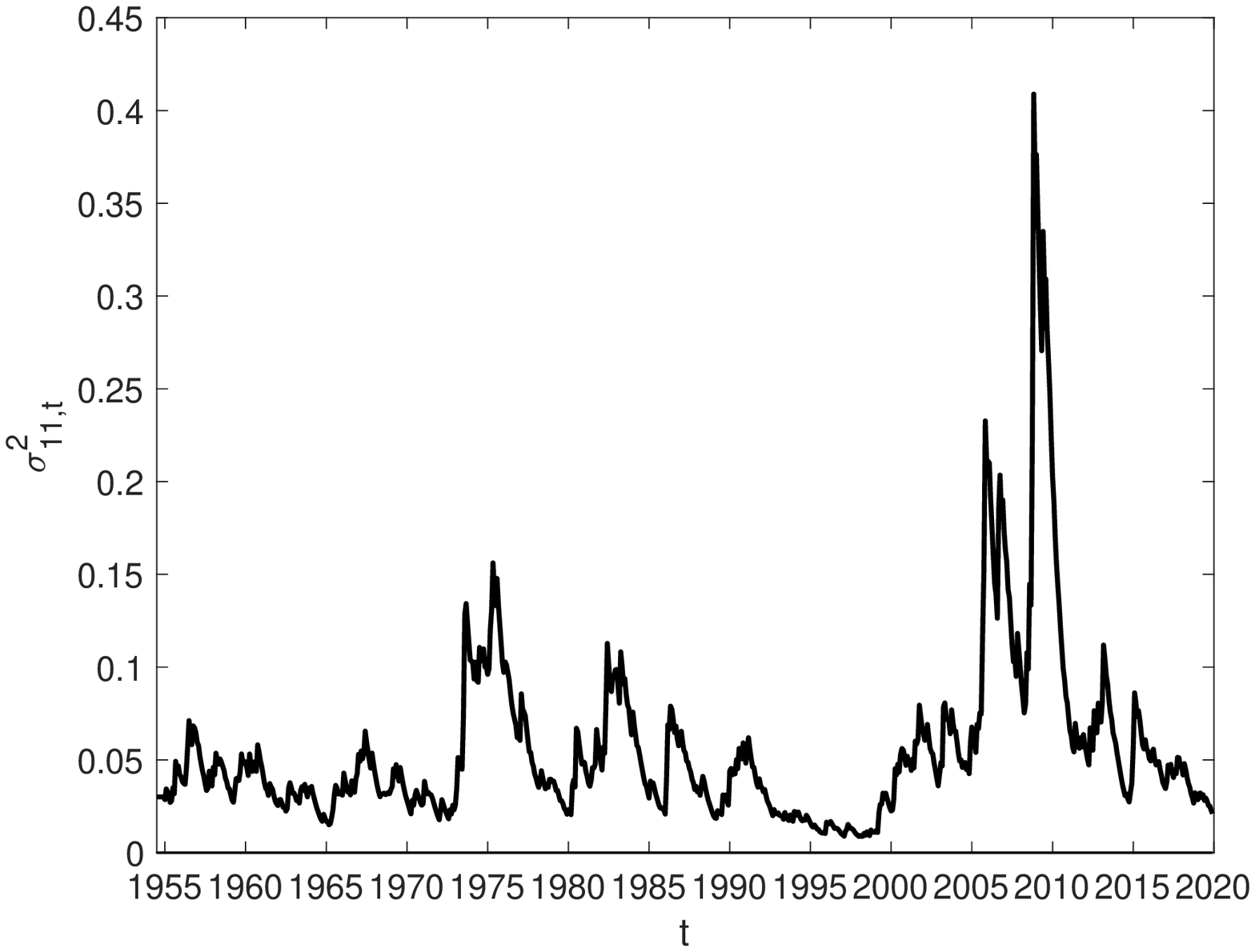}
\includegraphics[scale=0.29]{./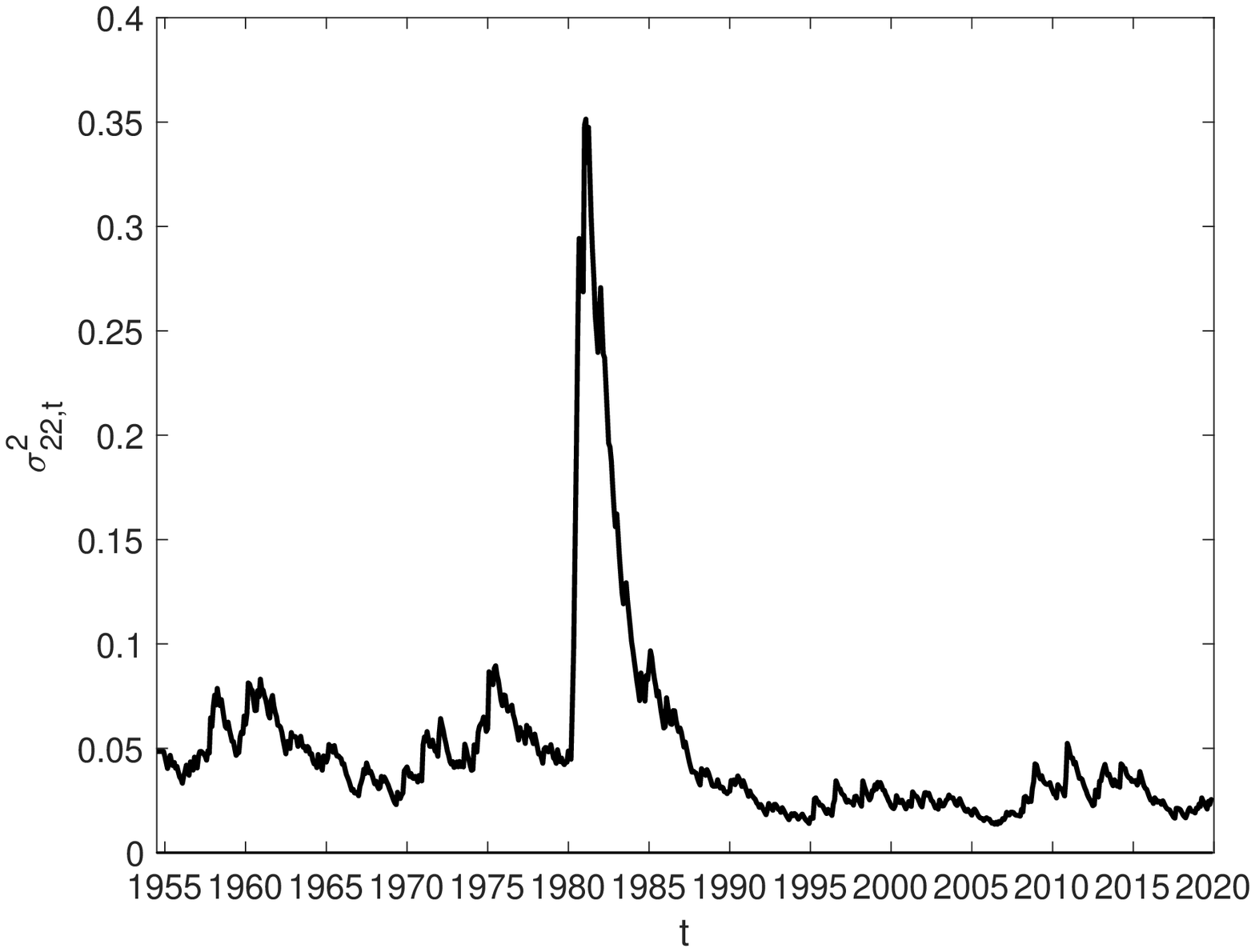}\\
\includegraphics[scale=0.29]{./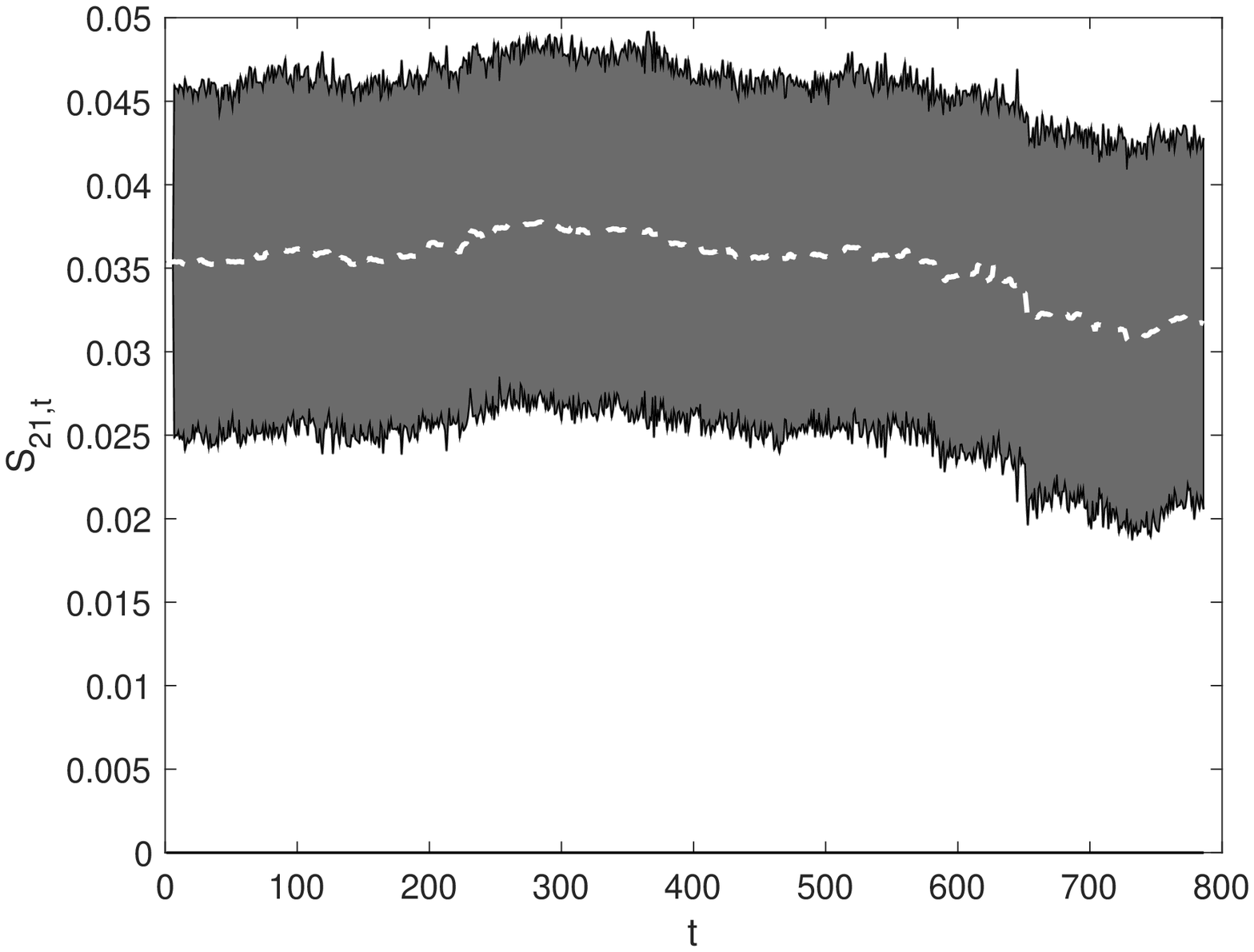}
\includegraphics[scale=0.29]{./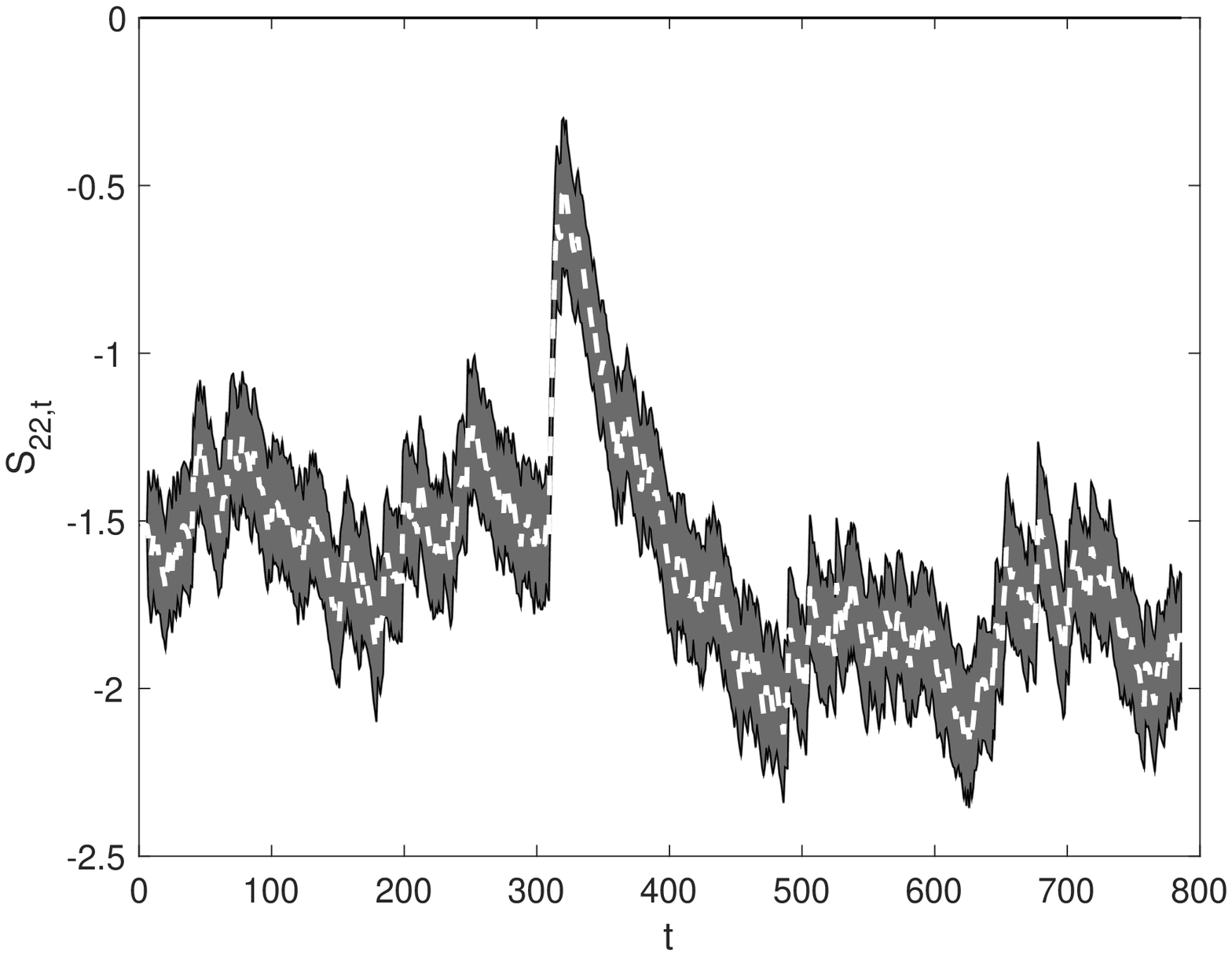}
\includegraphics[scale=0.29]{./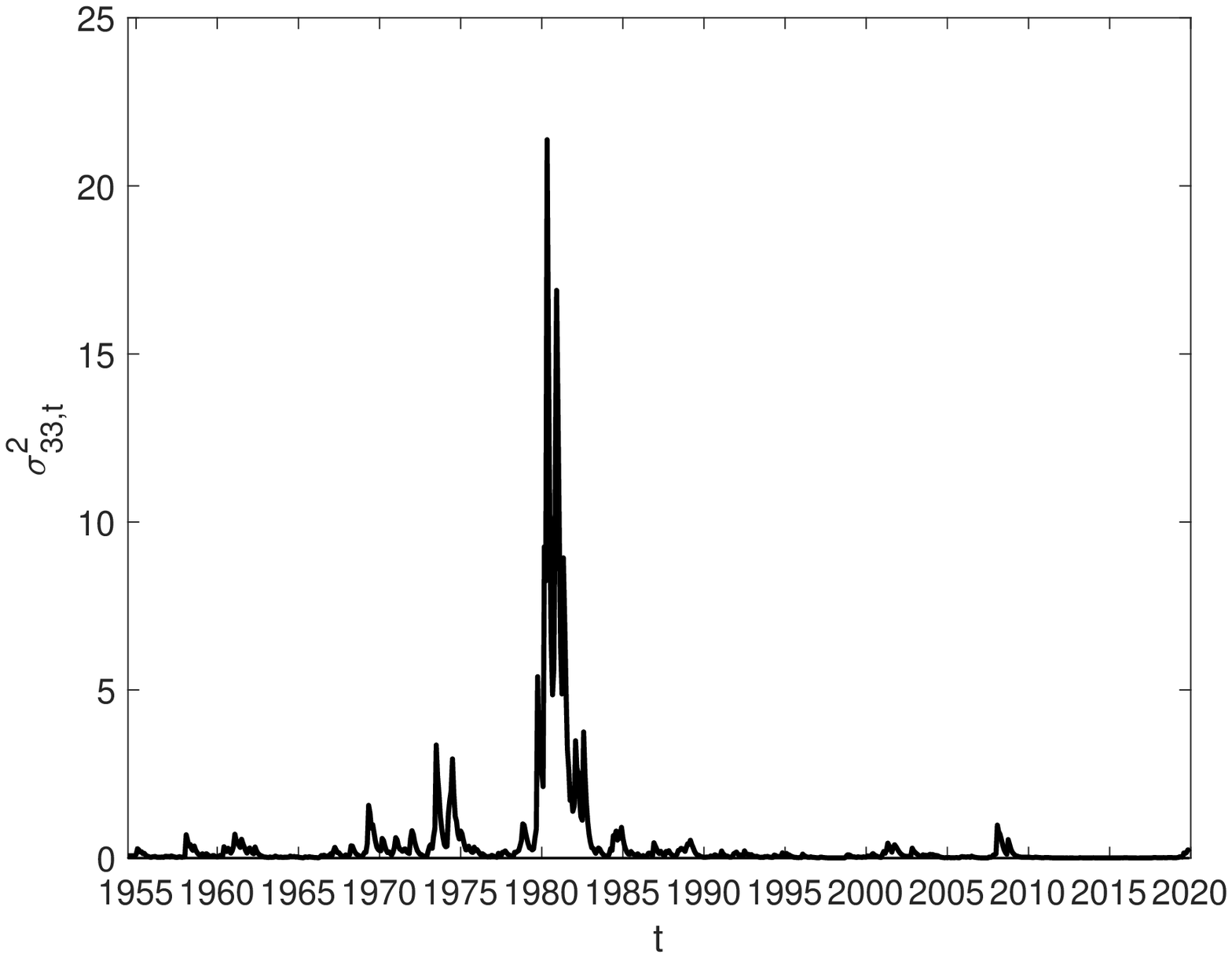}\\
\includegraphics[scale=0.29]{./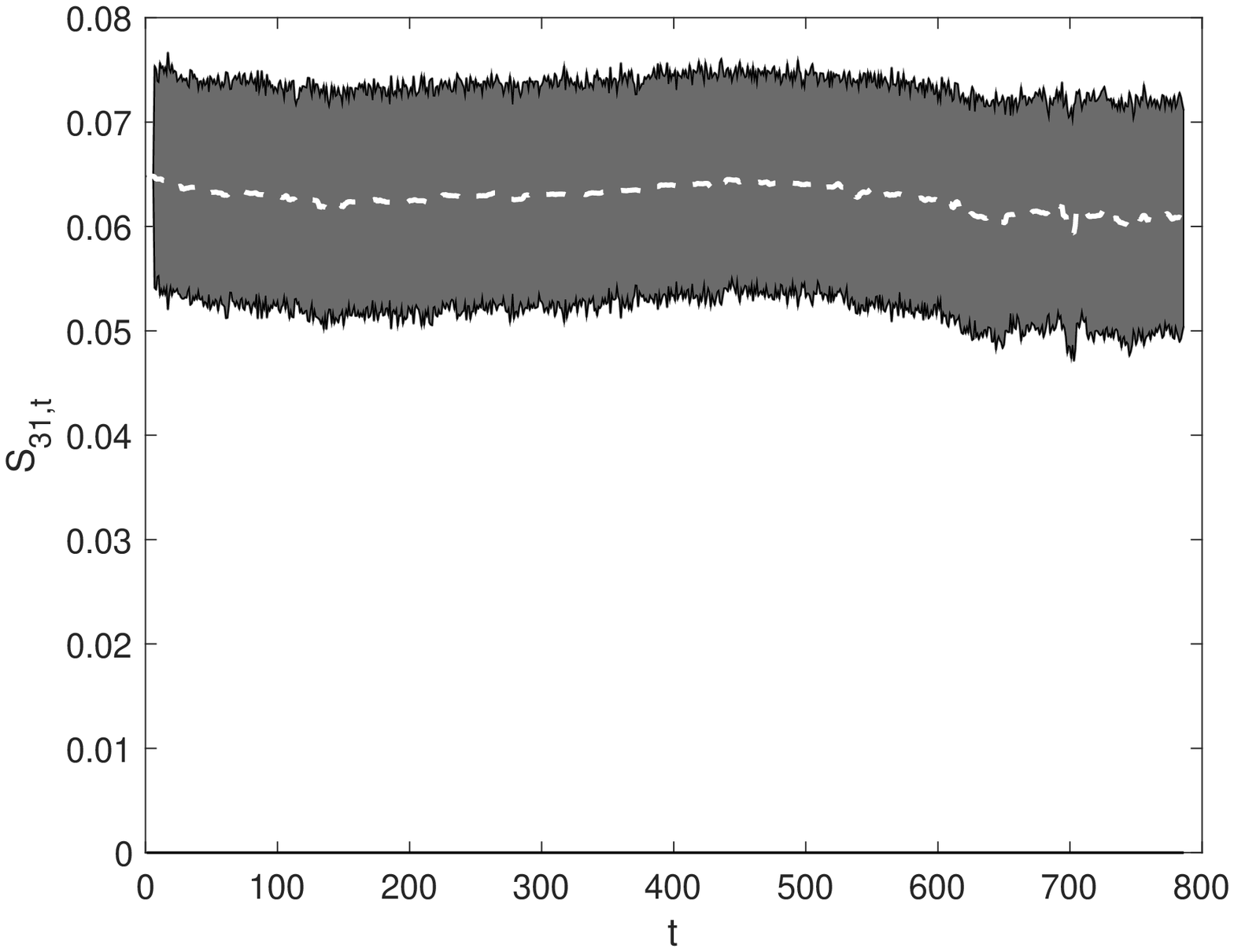}
\includegraphics[scale=0.29]{./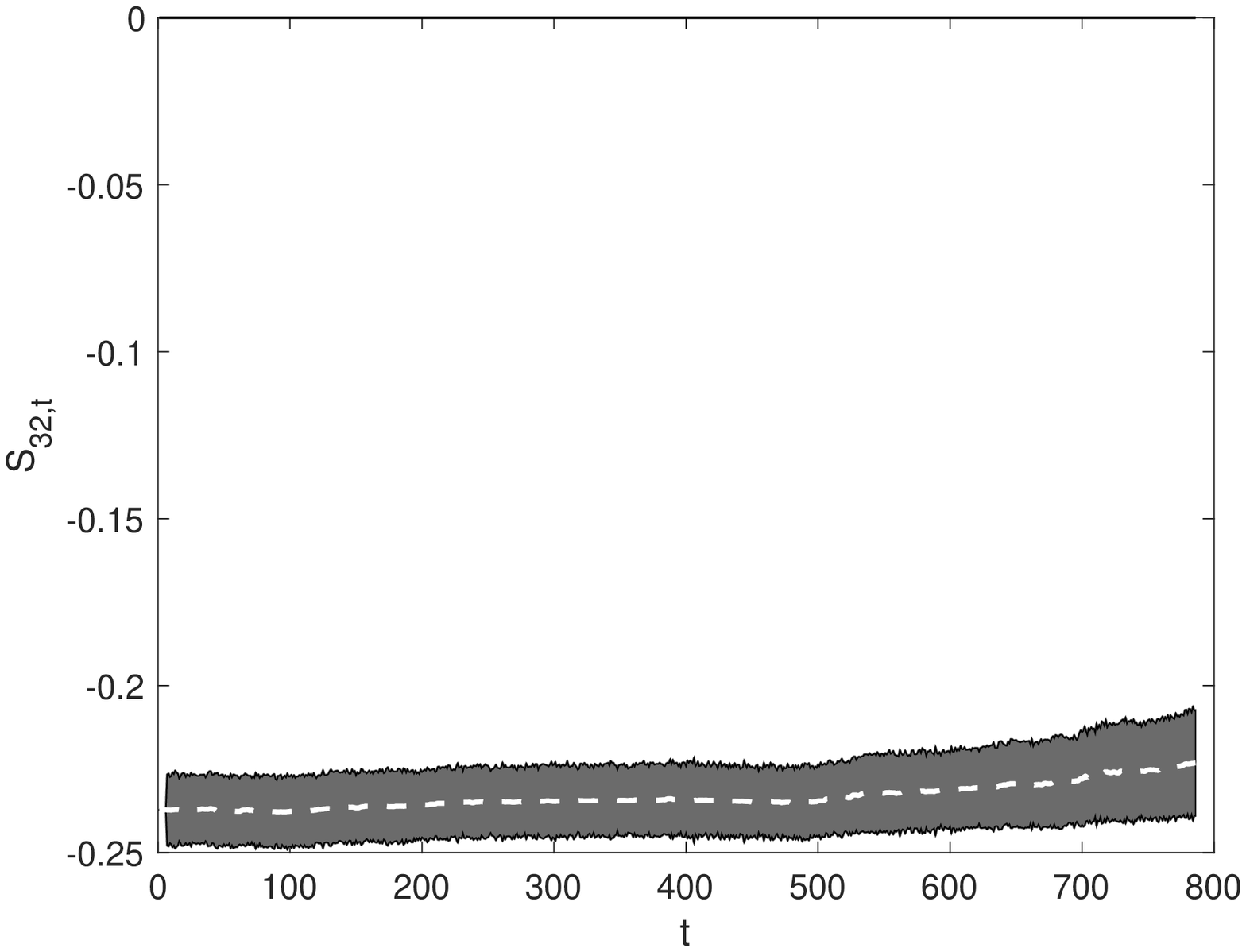}
\includegraphics[scale=0.29]{./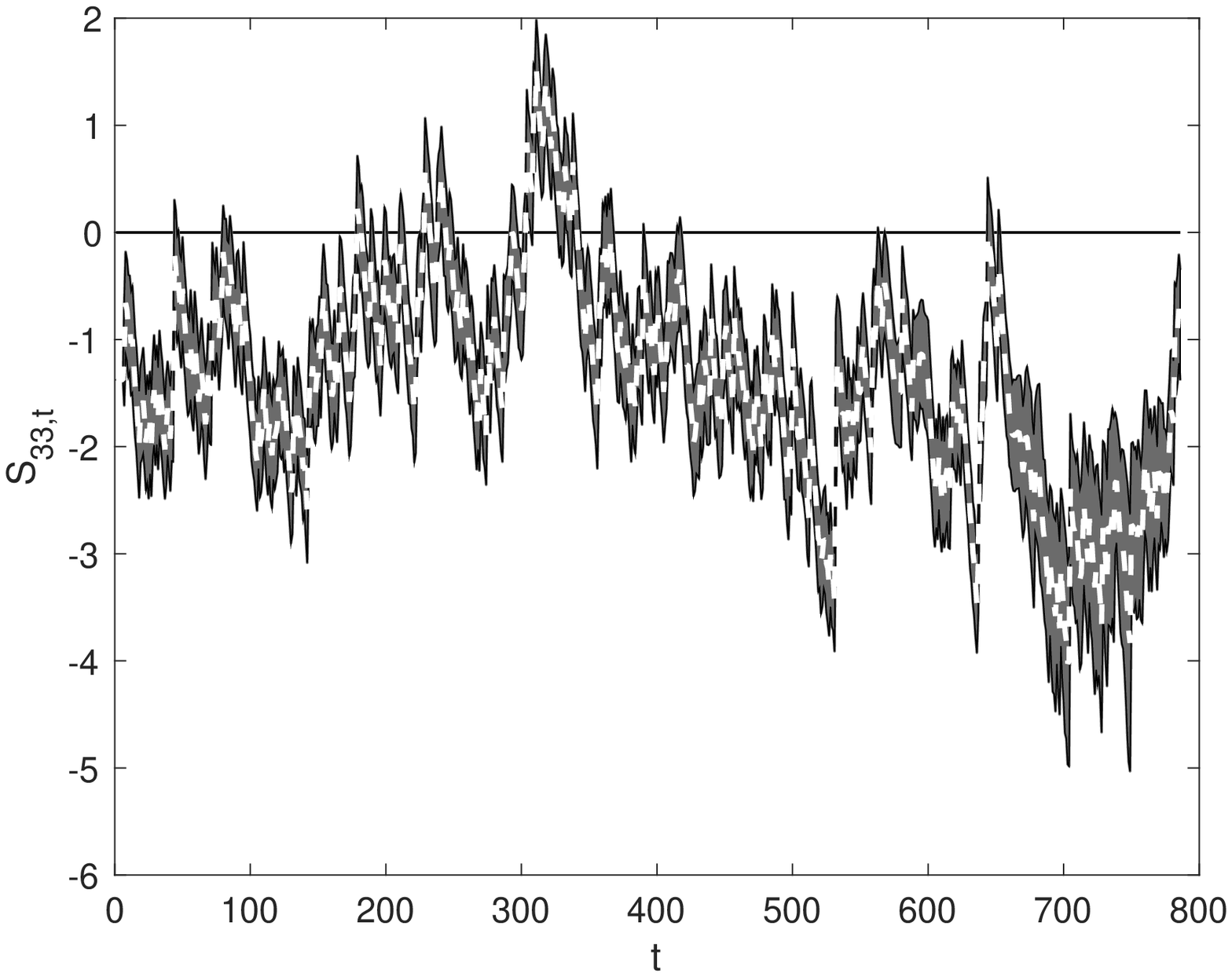}
\caption{Panels on and below the diagonal: Filtered time-varying parameters for the lower triangular matrix $S_t$ over the sample period July 1954 - December 2019. The white dashed lines are the filtered series, while the shadowed regions correspond to the 68\% bands accounting for parameter and filtering uncertainty. Panels above the diagonal: Filtered time-varying variances $\sigma_{11,t}^2\doteq(C_t C_t^\intercal)_{11}$, $\sigma_{22,t}^2\doteq(C_t C_t^\intercal)_{22}$, and $\sigma_{33,t}^2\doteq(C_t C_t^\intercal)_{33}$.}
\label{fig:S_bands}
\end{figure}
\begin{figure}
\centering
\includegraphics[scale=0.29]{./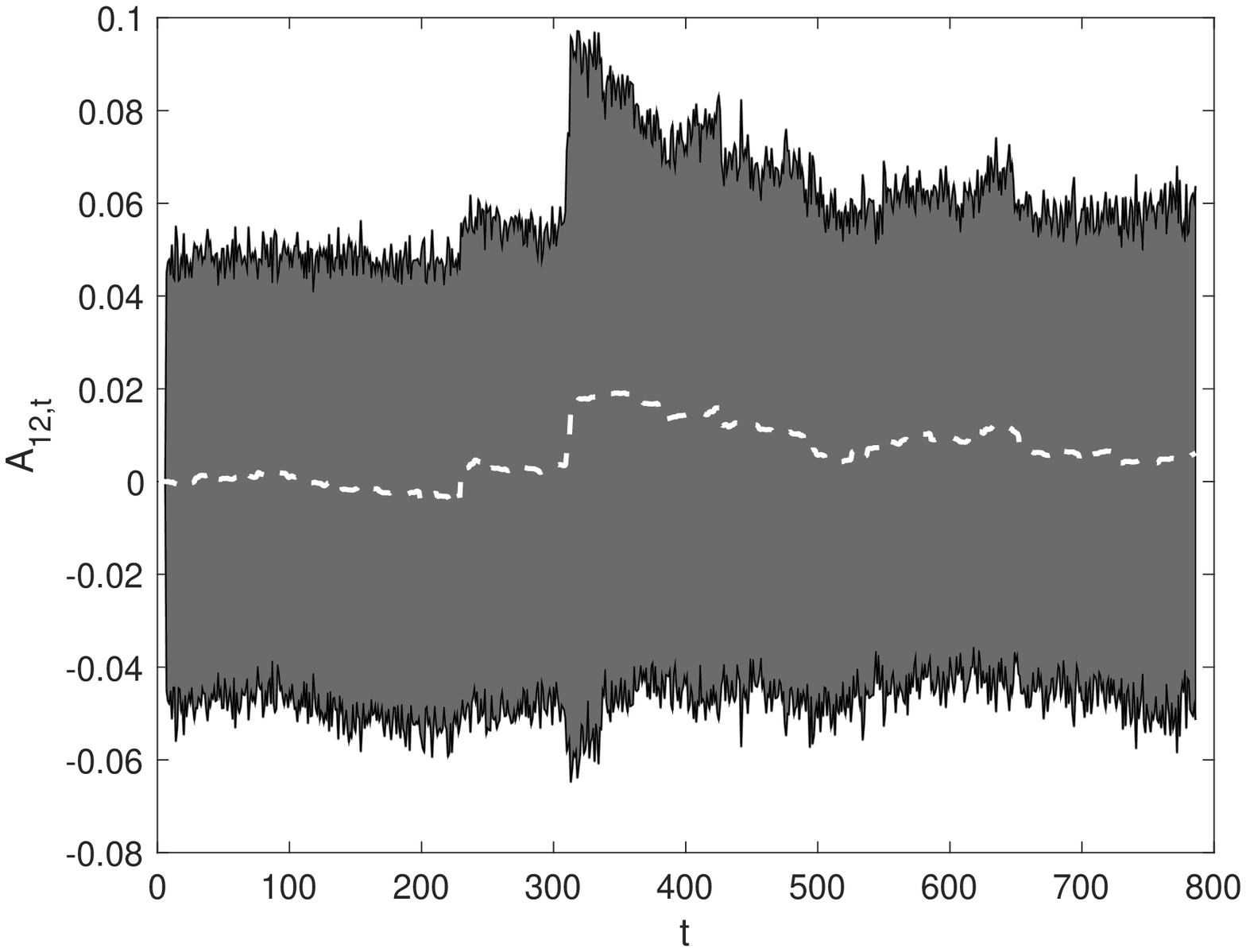}
\includegraphics[scale=0.29]{./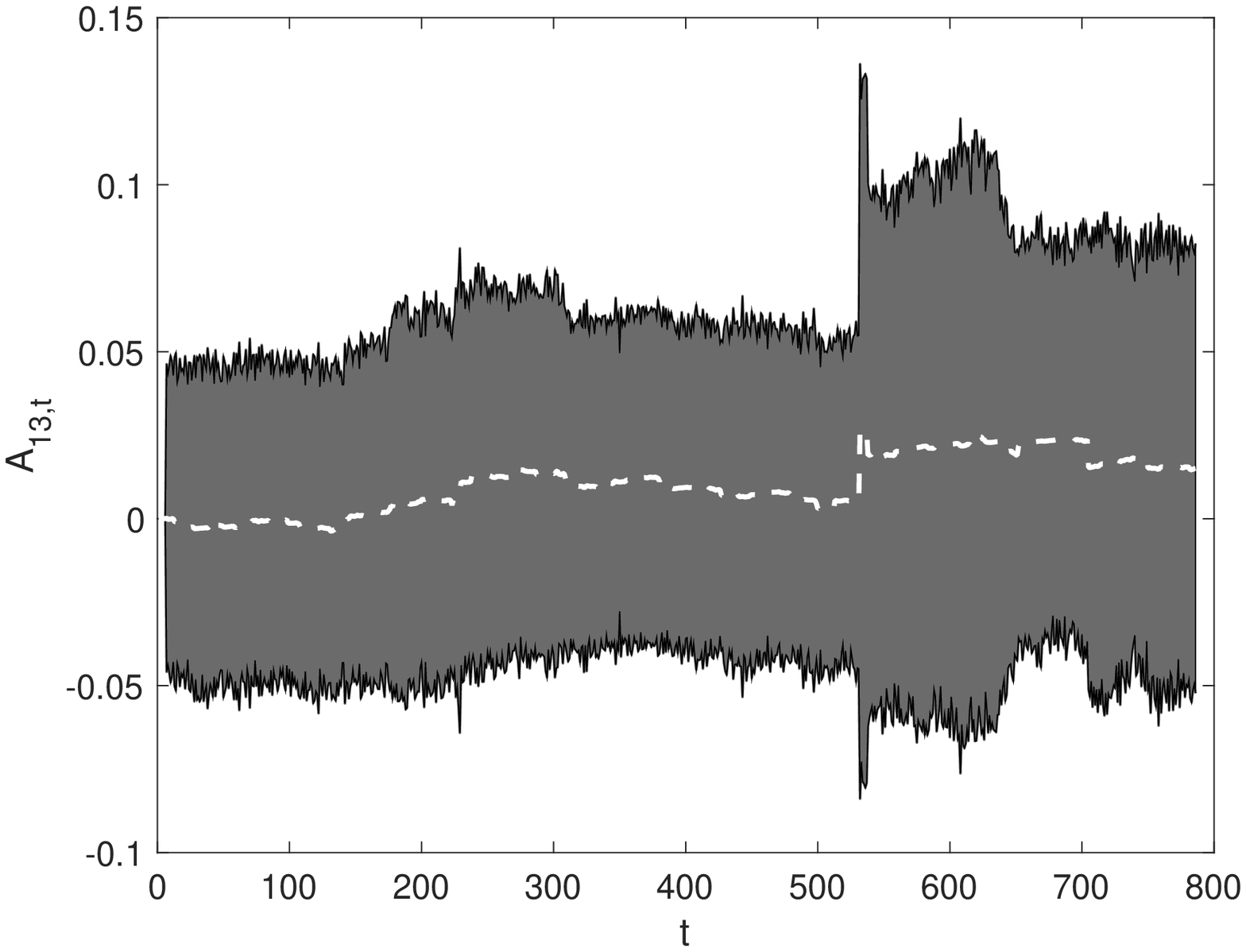}
\includegraphics[scale=0.29]{./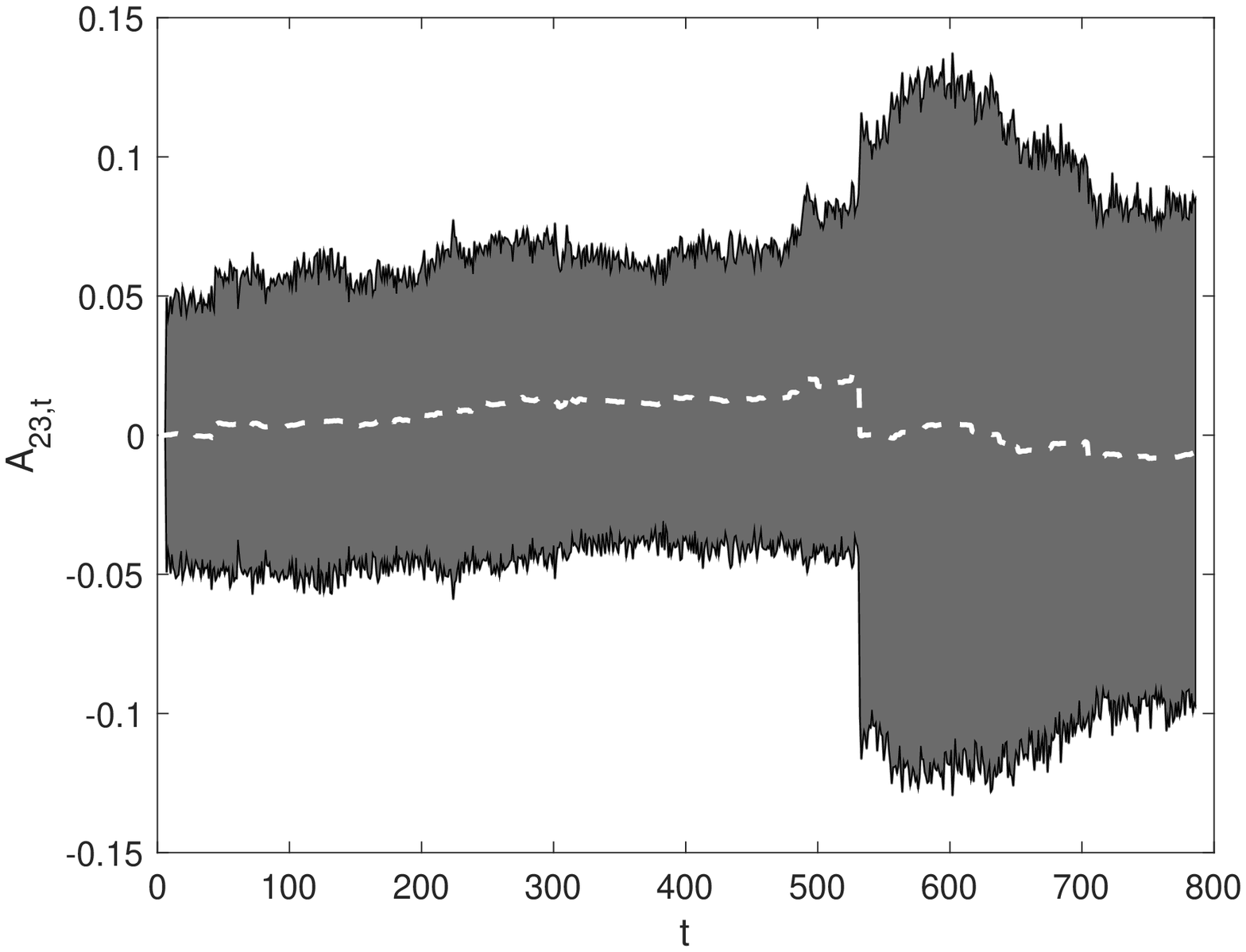}
\caption{Filtered time-varying parameters for the skew-symmetric matrix $A_t$ over the sample period July 1954 - December 2019. The white dashed lines are the filtered series, while the shadowed regions correspond to the 68\% bands accounting for parameter and filtering uncertainty.}
\label{fig:A_bands}
\end{figure}
\begin{figure}
\centering
\includegraphics[scale=0.29]{./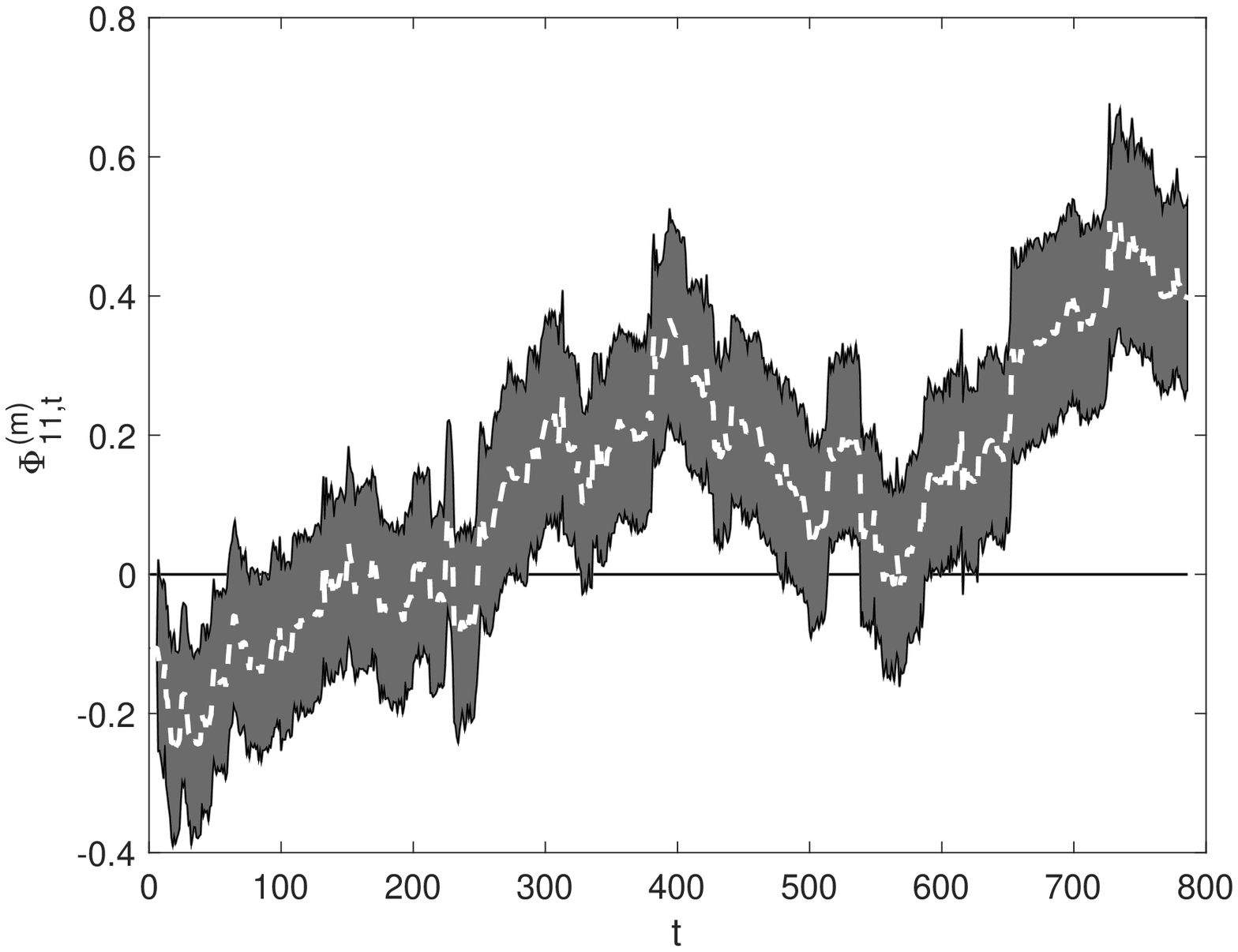}
\includegraphics[scale=0.29]{./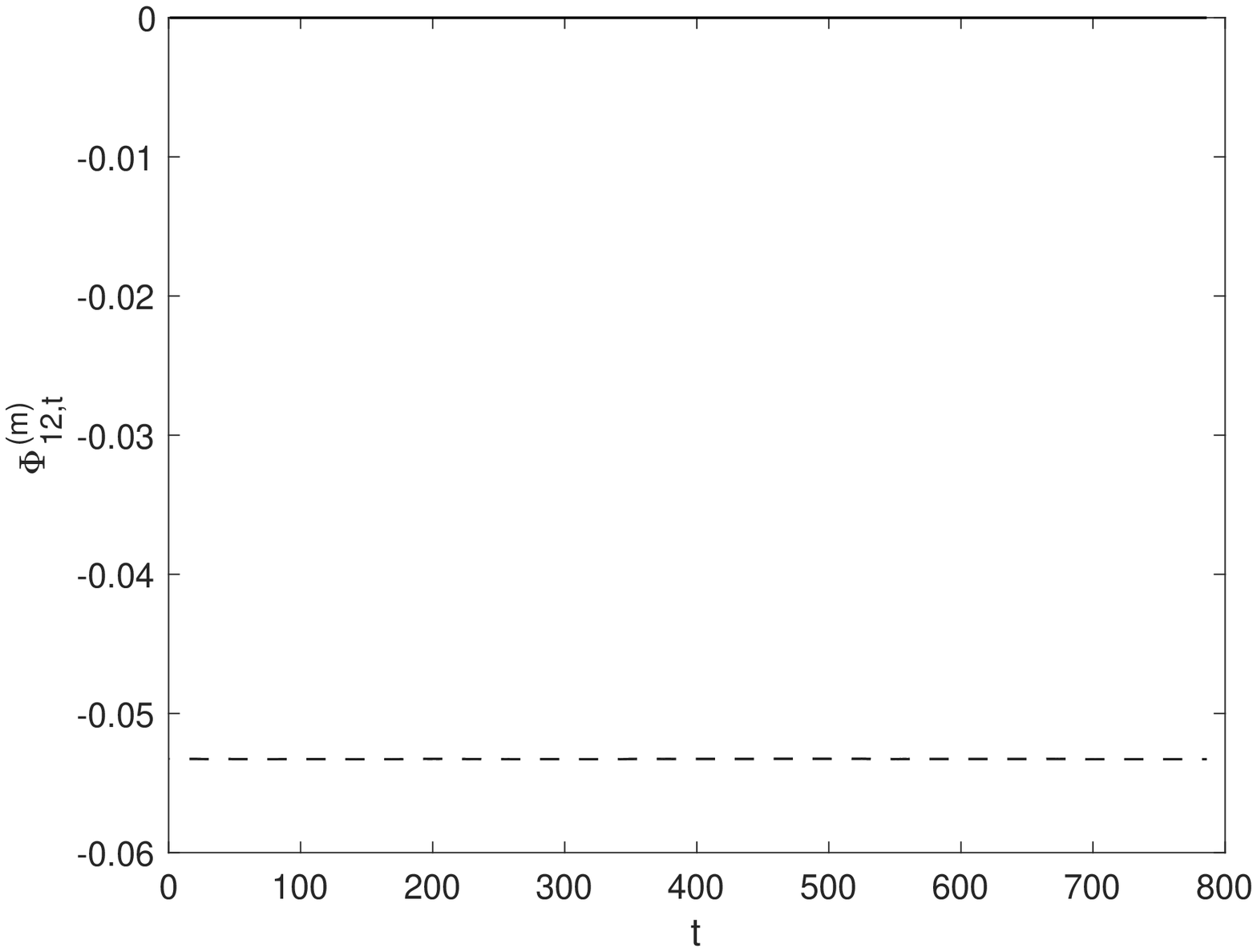}
\includegraphics[scale=0.29]{./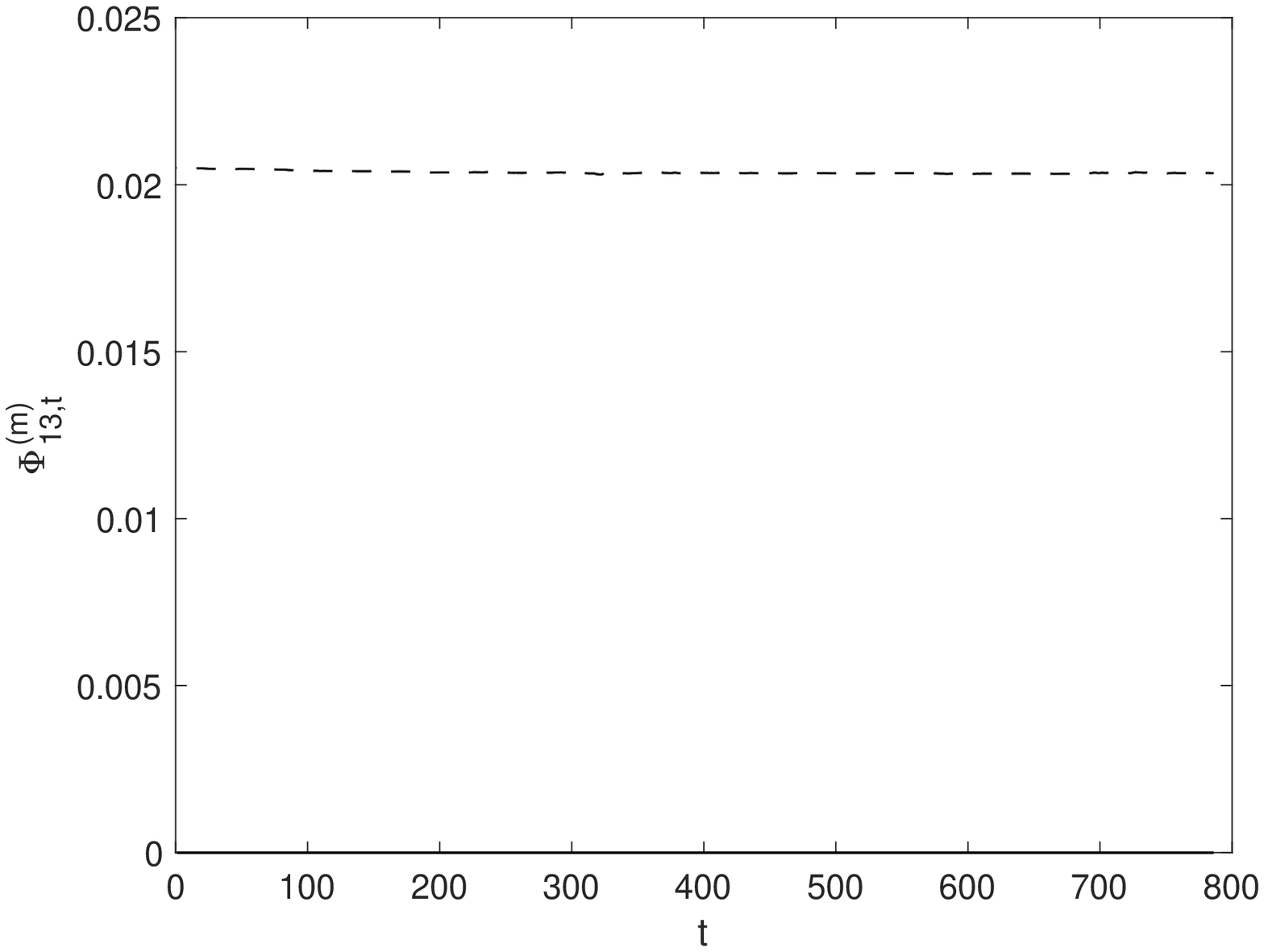}\\
\includegraphics[scale=0.29]{./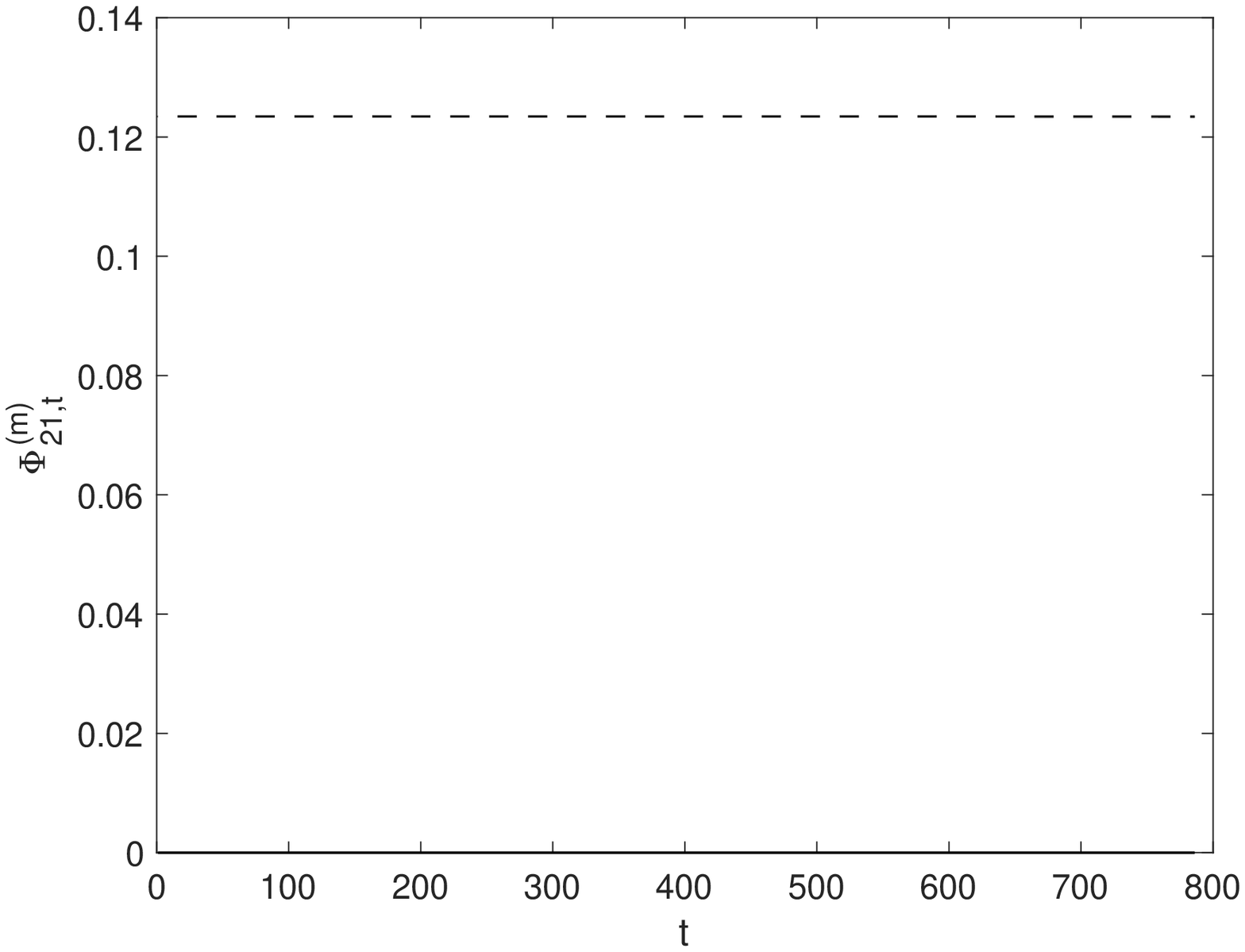}
\includegraphics[scale=0.29]{./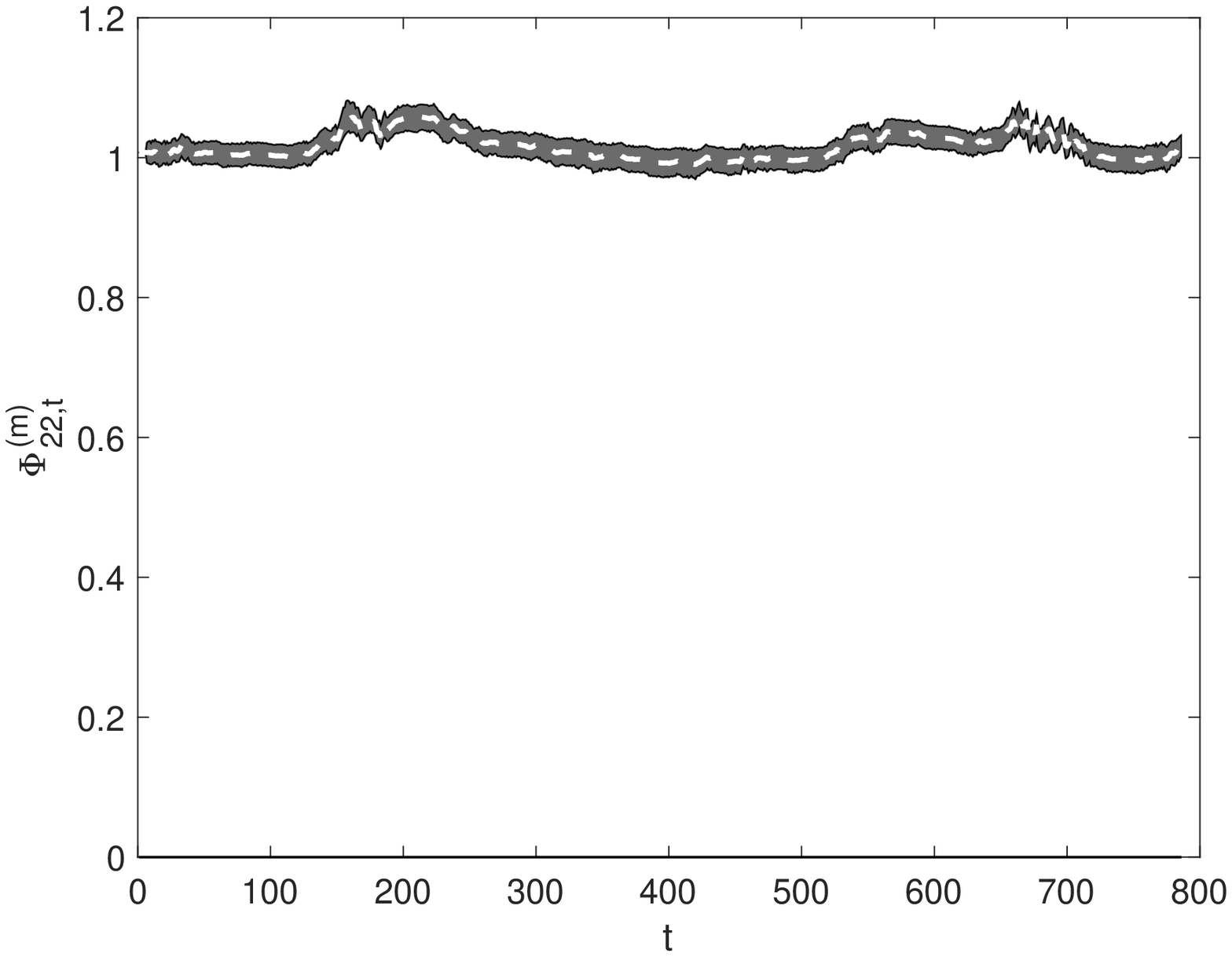}
\includegraphics[scale=0.29]{./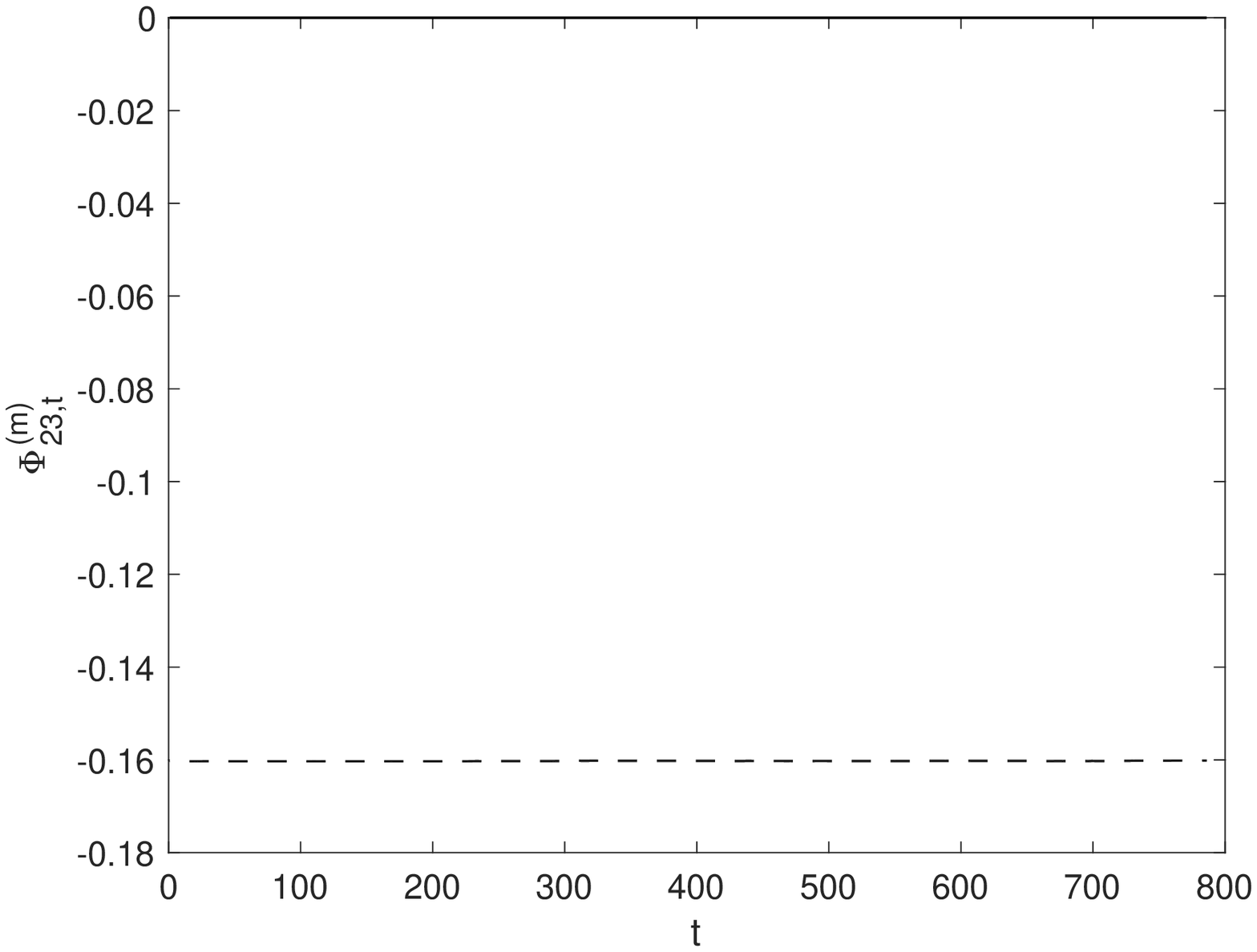}\\
\includegraphics[scale=0.29]{./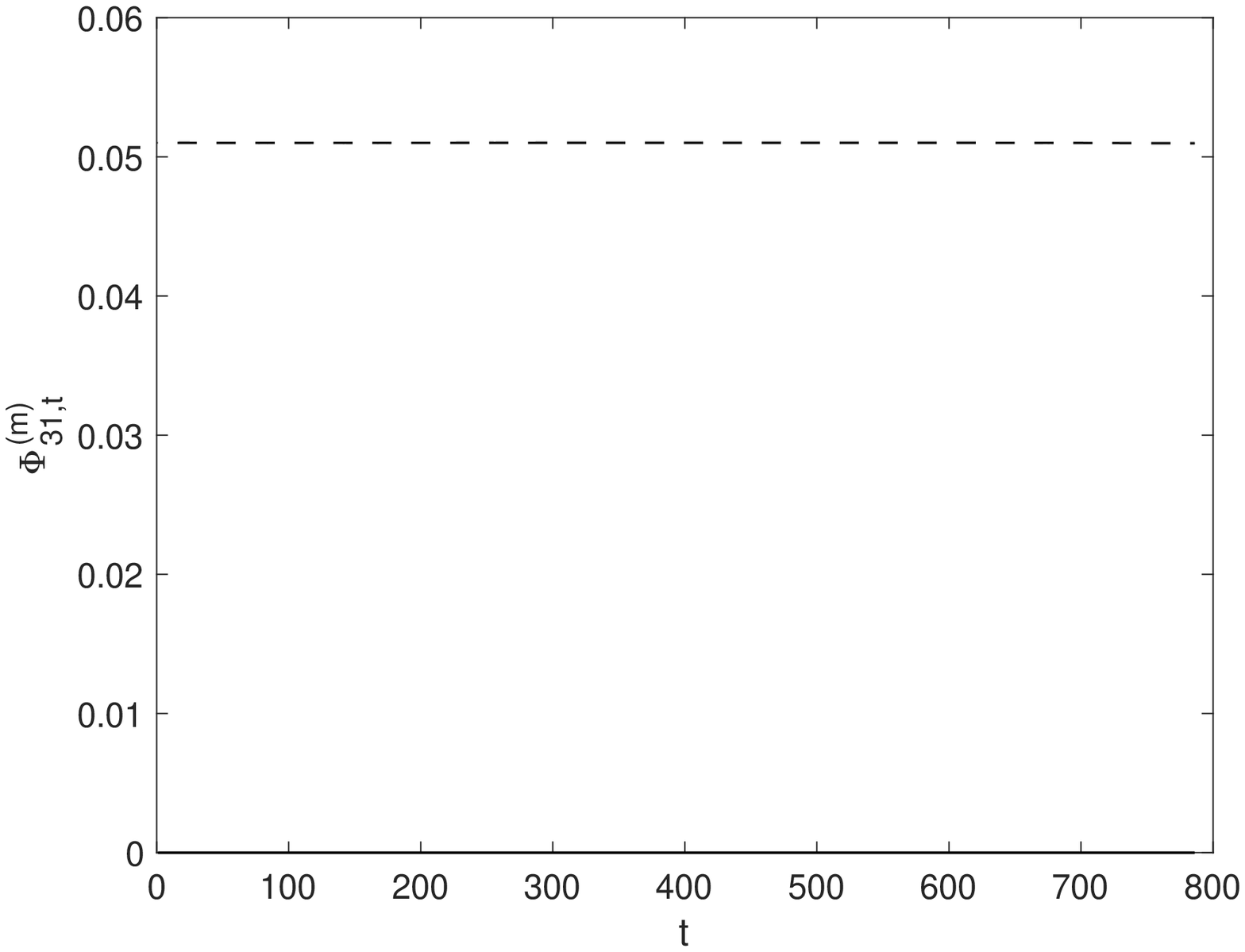}
\includegraphics[scale=0.29]{./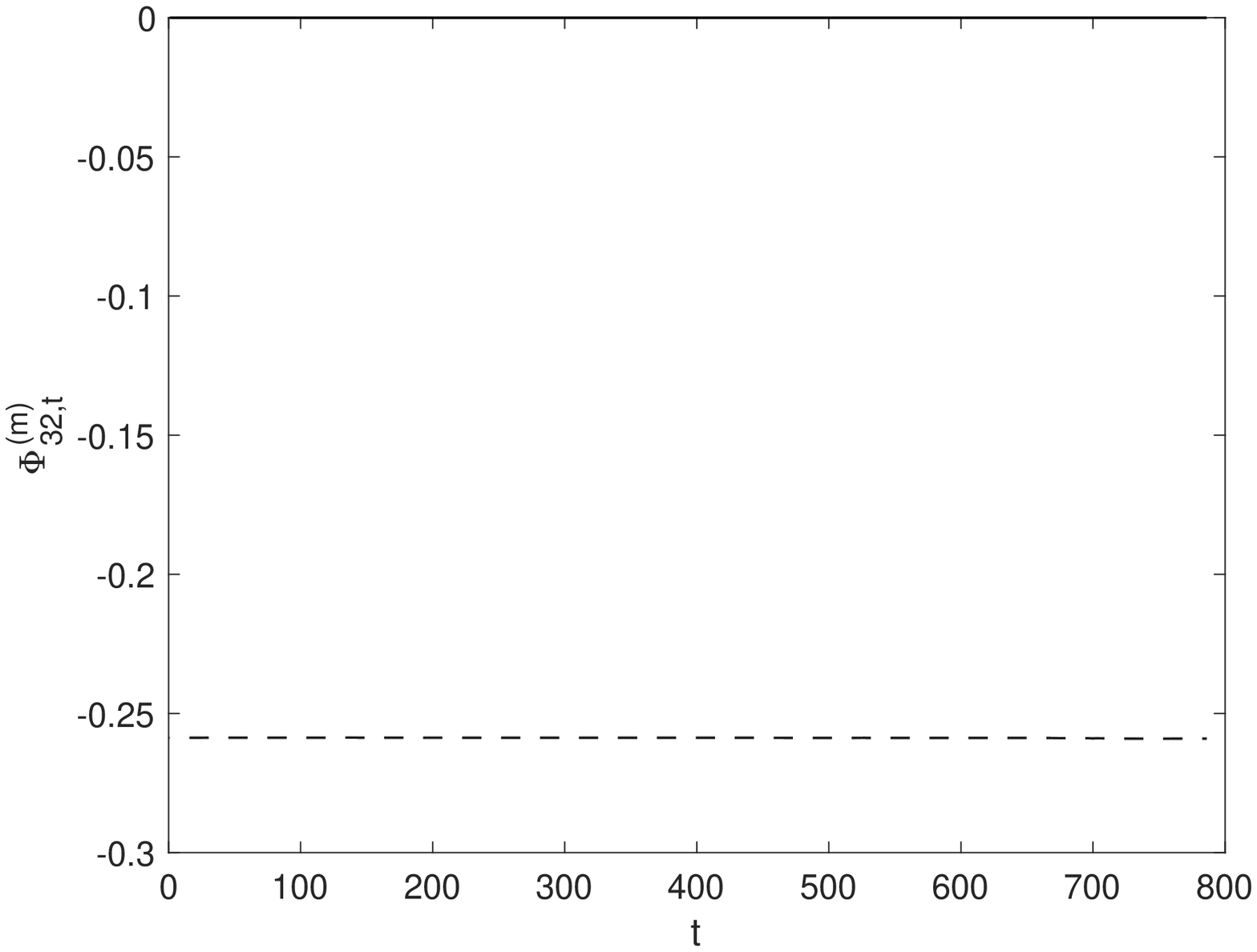}
\includegraphics[scale=0.29]{./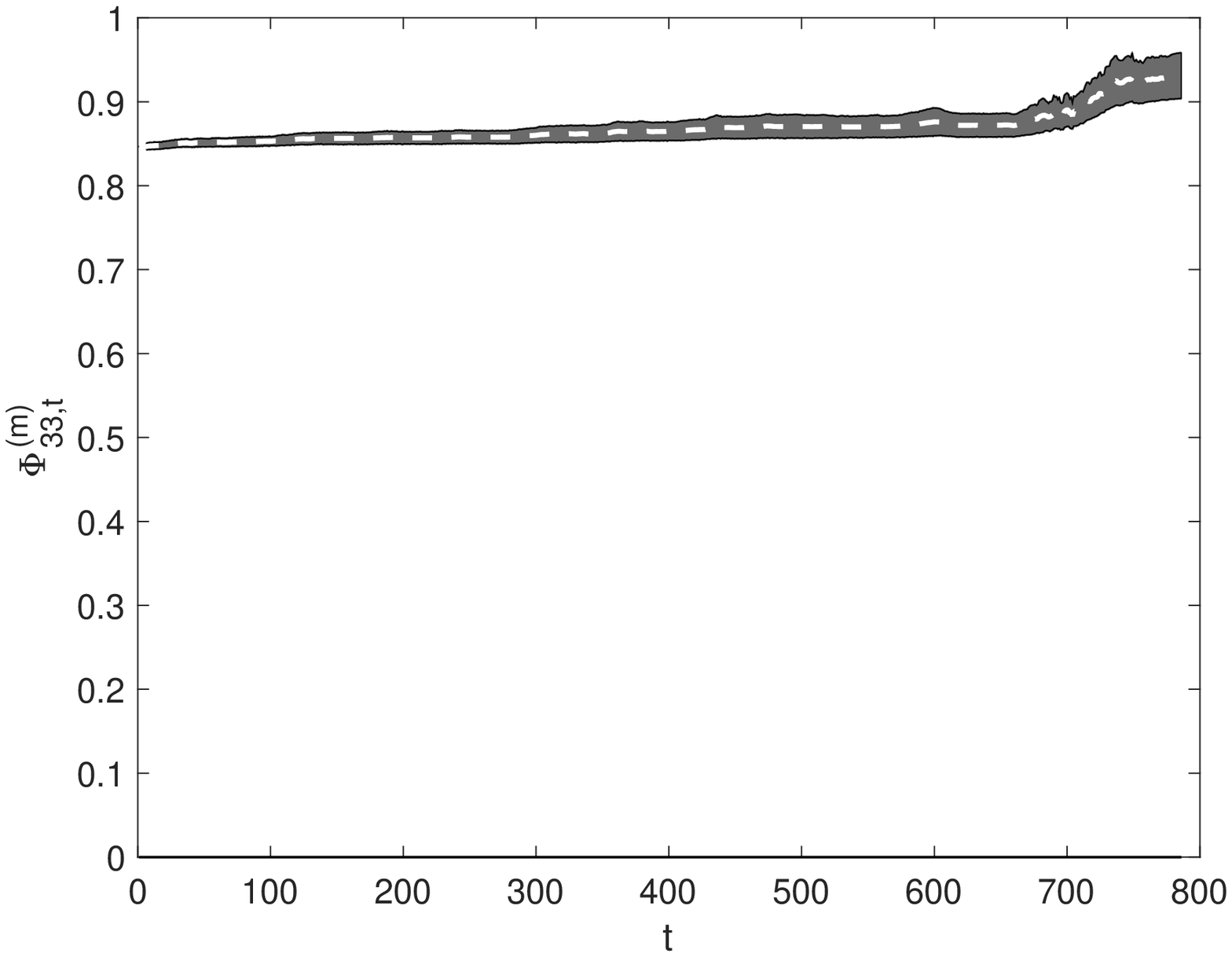}\\
\caption{Filtered time-varying parameters for the matrix $\Phi^{(m)}_t$ over the sample period July 1954 - December 2019. The white dashed lines are the filtered series, while the shadowed regions correspond to the 68\% bands accounting for parameter and filtering uncertainty.}
\label{fig:Phim_bands}
\end{figure}
\begin{figure}
\centering
\includegraphics[scale=0.29]{./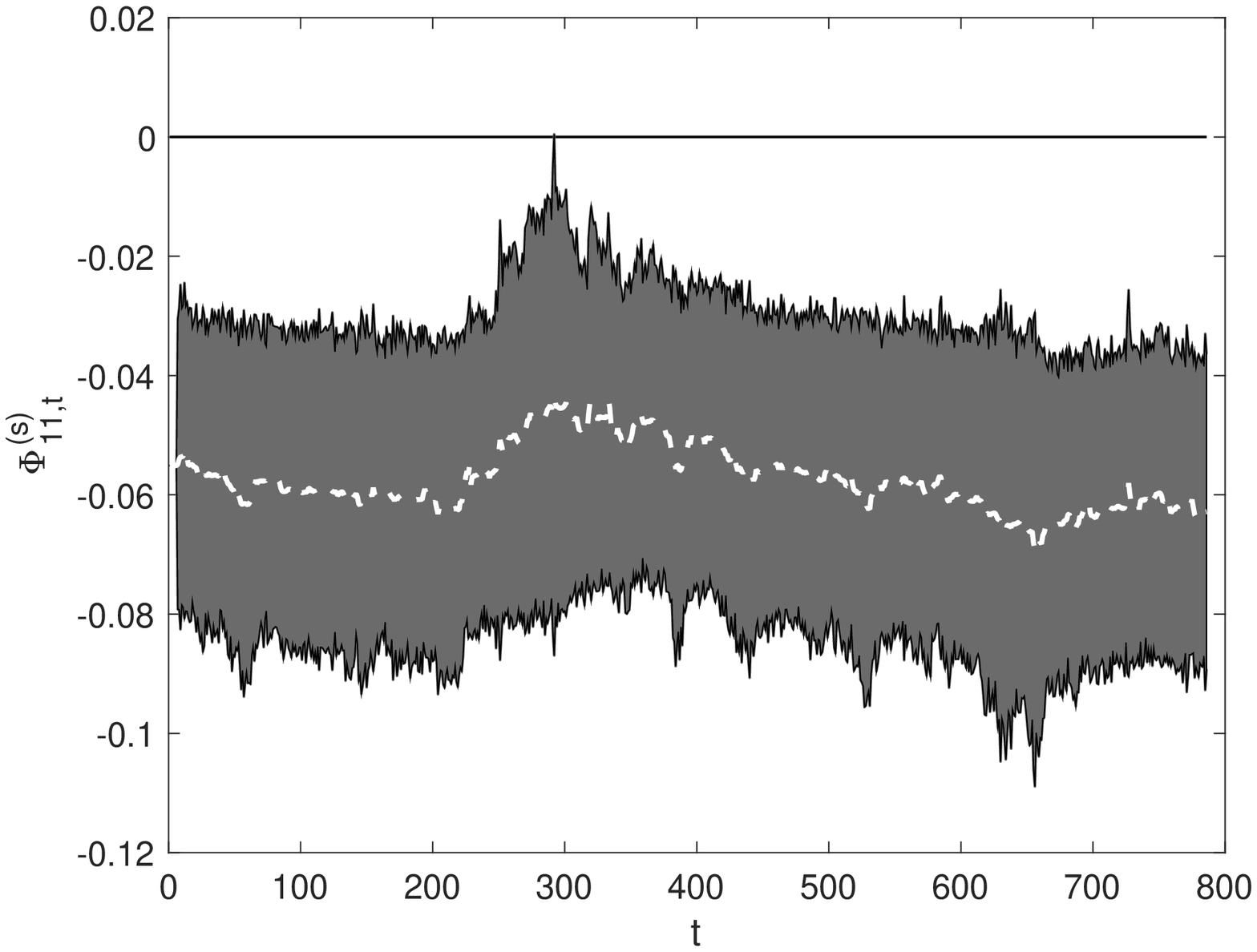}
\includegraphics[scale=0.29]{./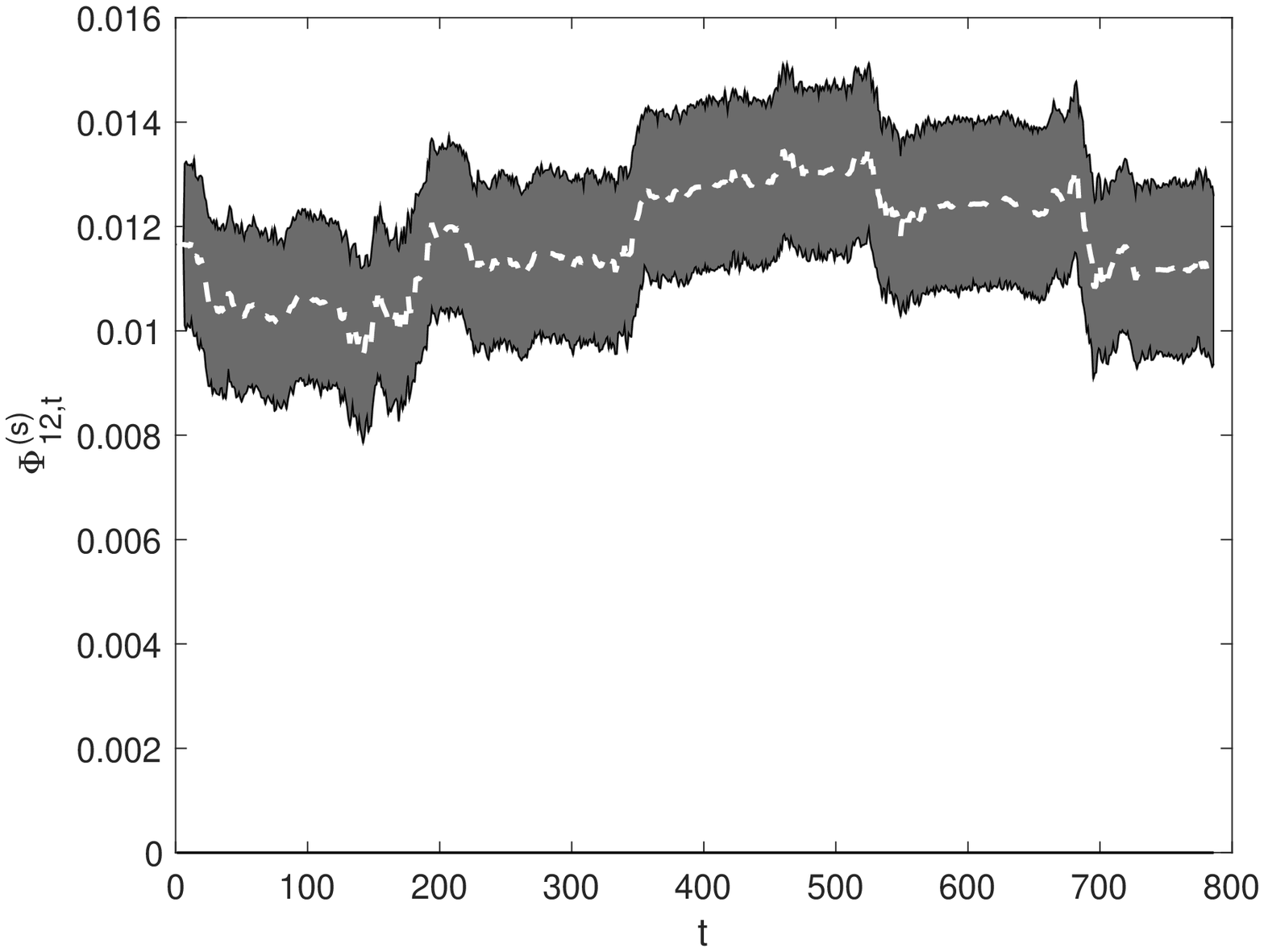}
\includegraphics[scale=0.29]{./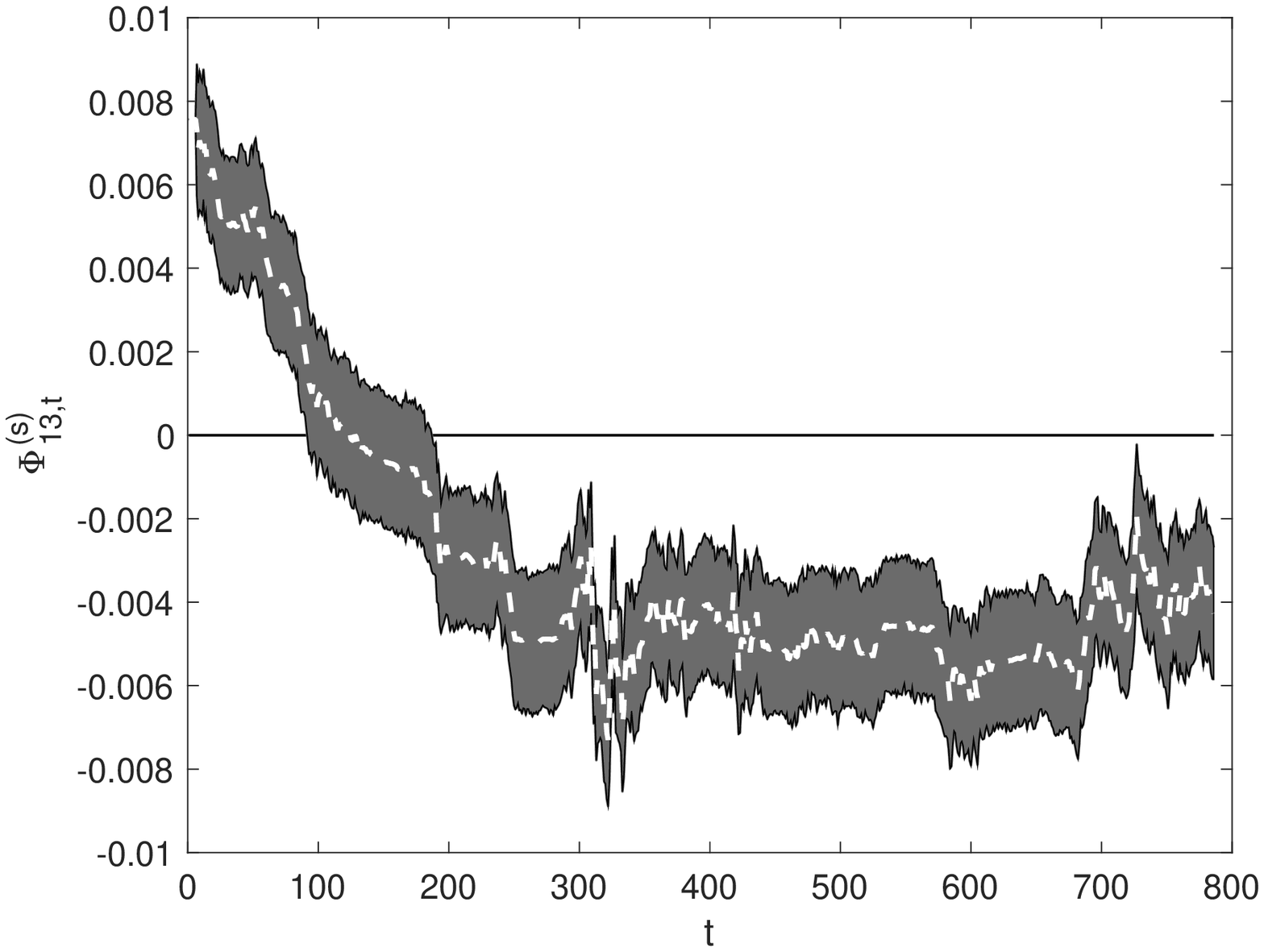}\\
\includegraphics[scale=0.29]{./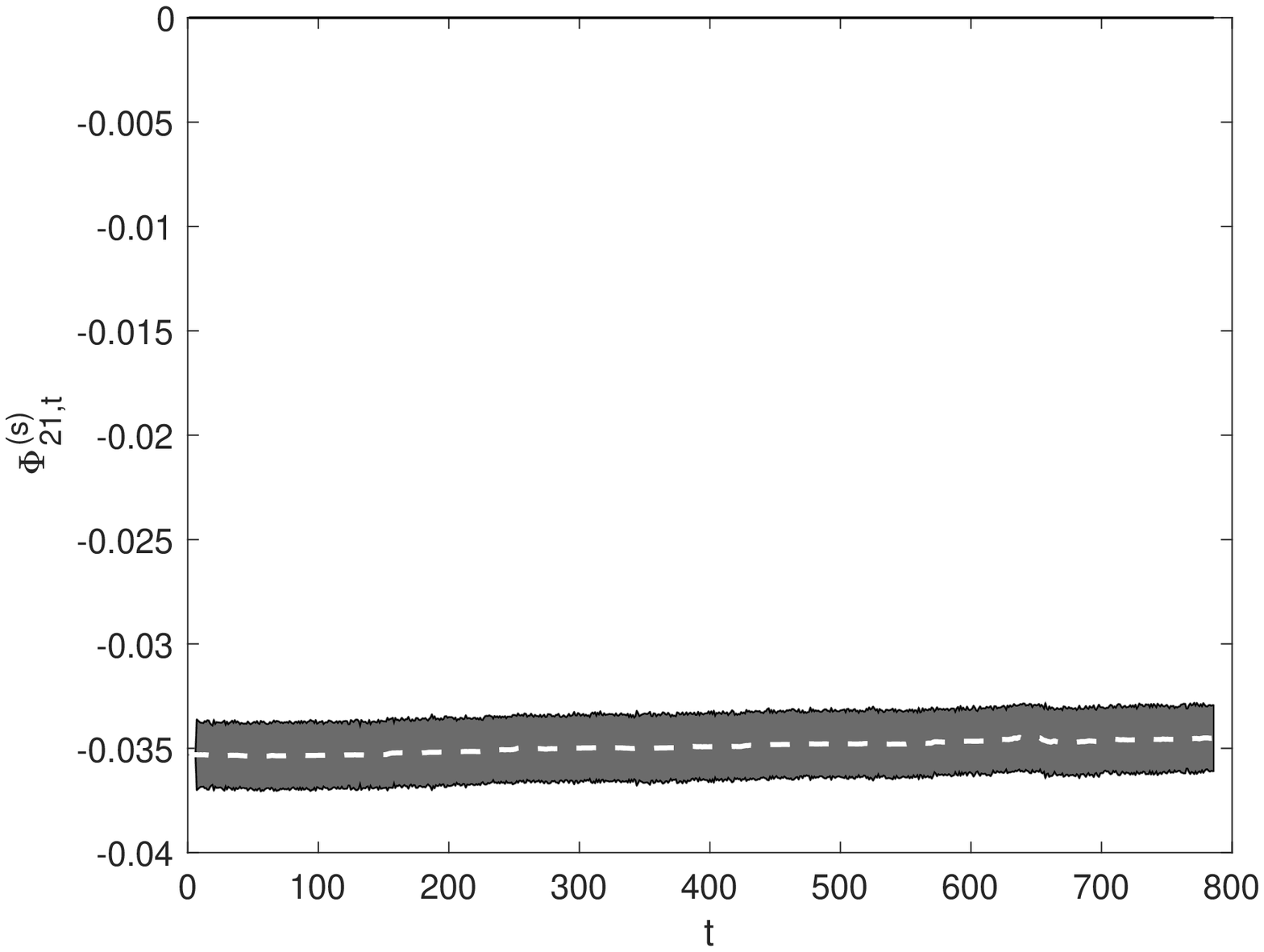}
\includegraphics[scale=0.29]{./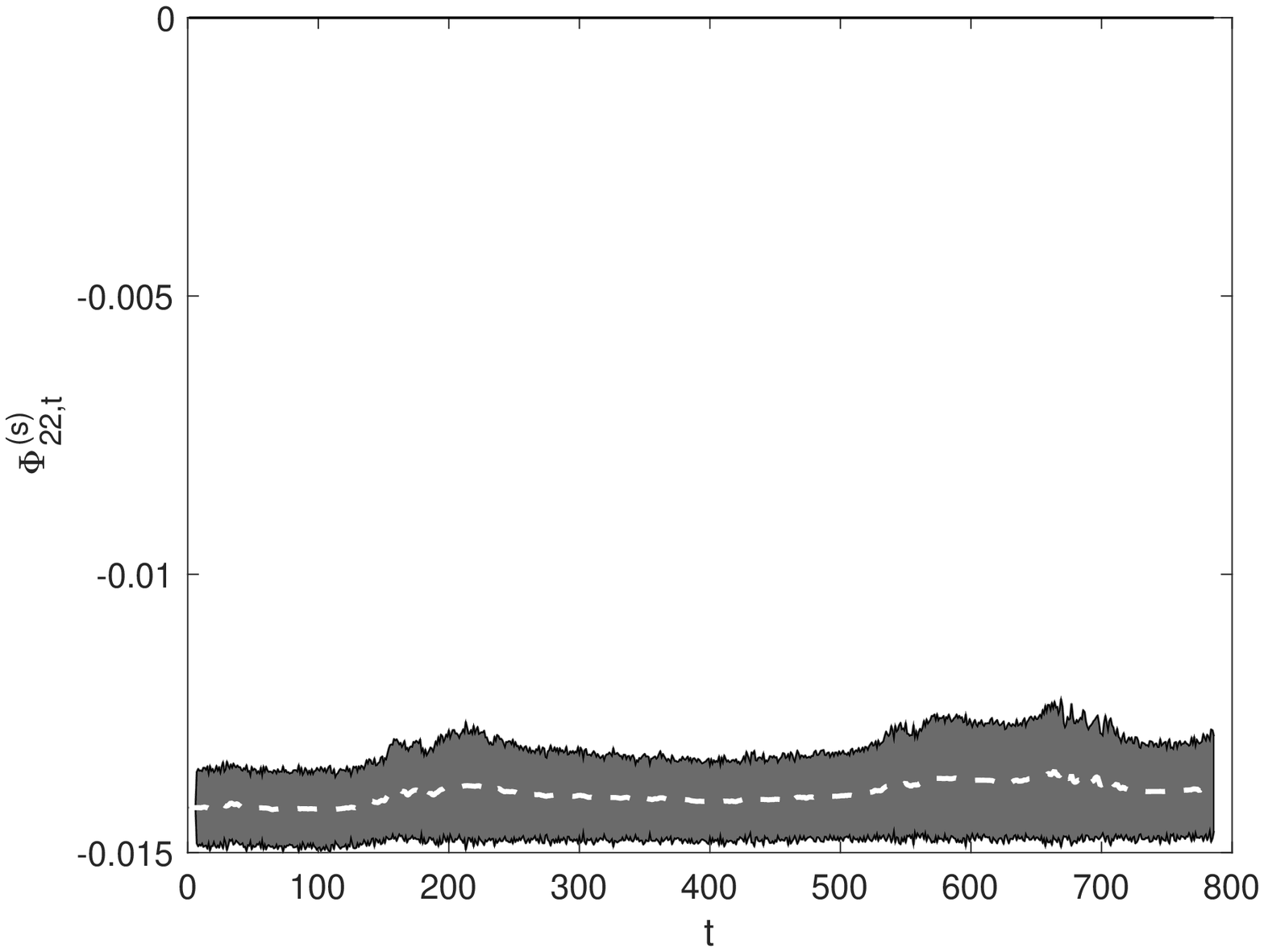}
\includegraphics[scale=0.29]{./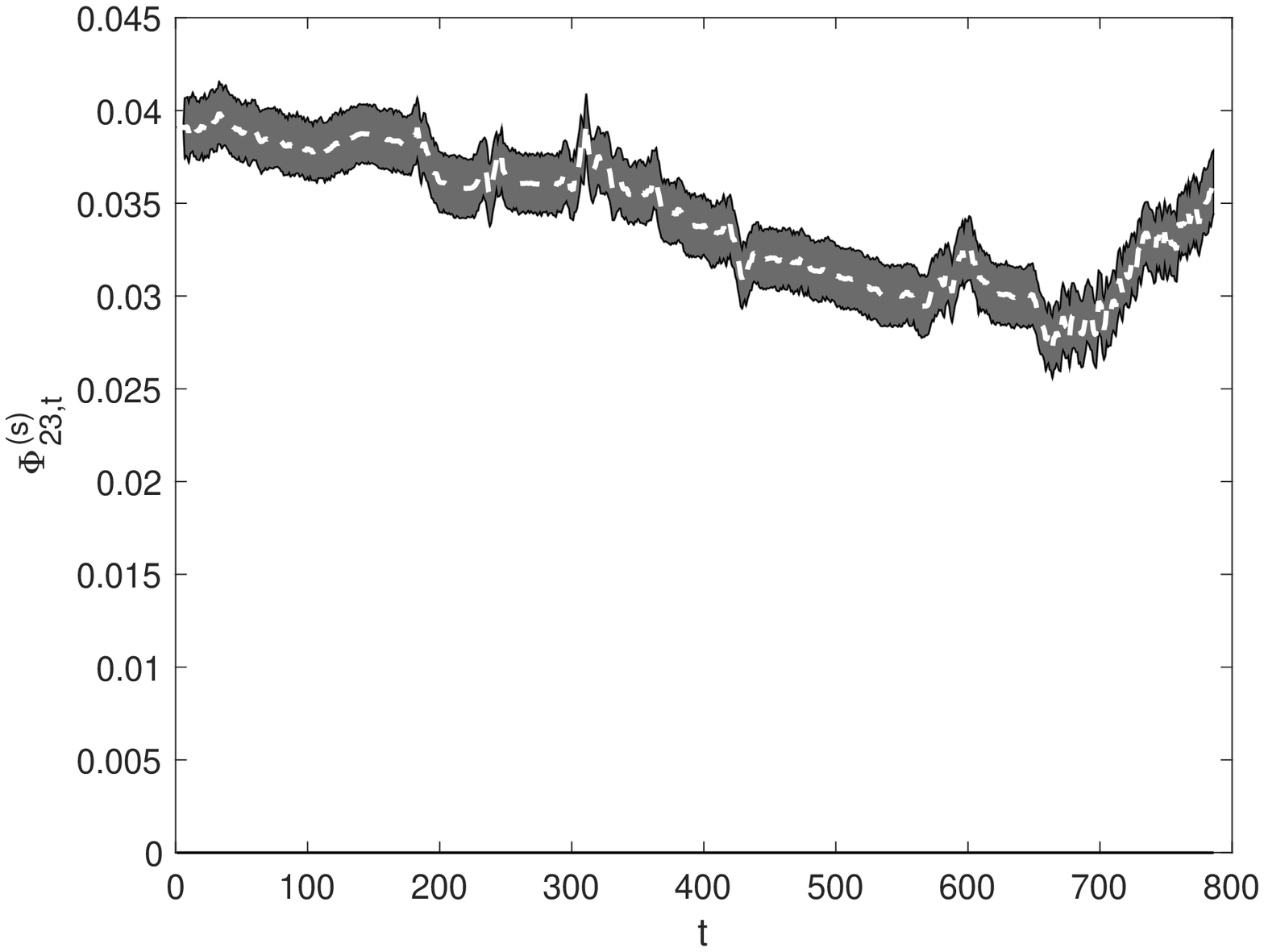}\\
\includegraphics[scale=0.29]{./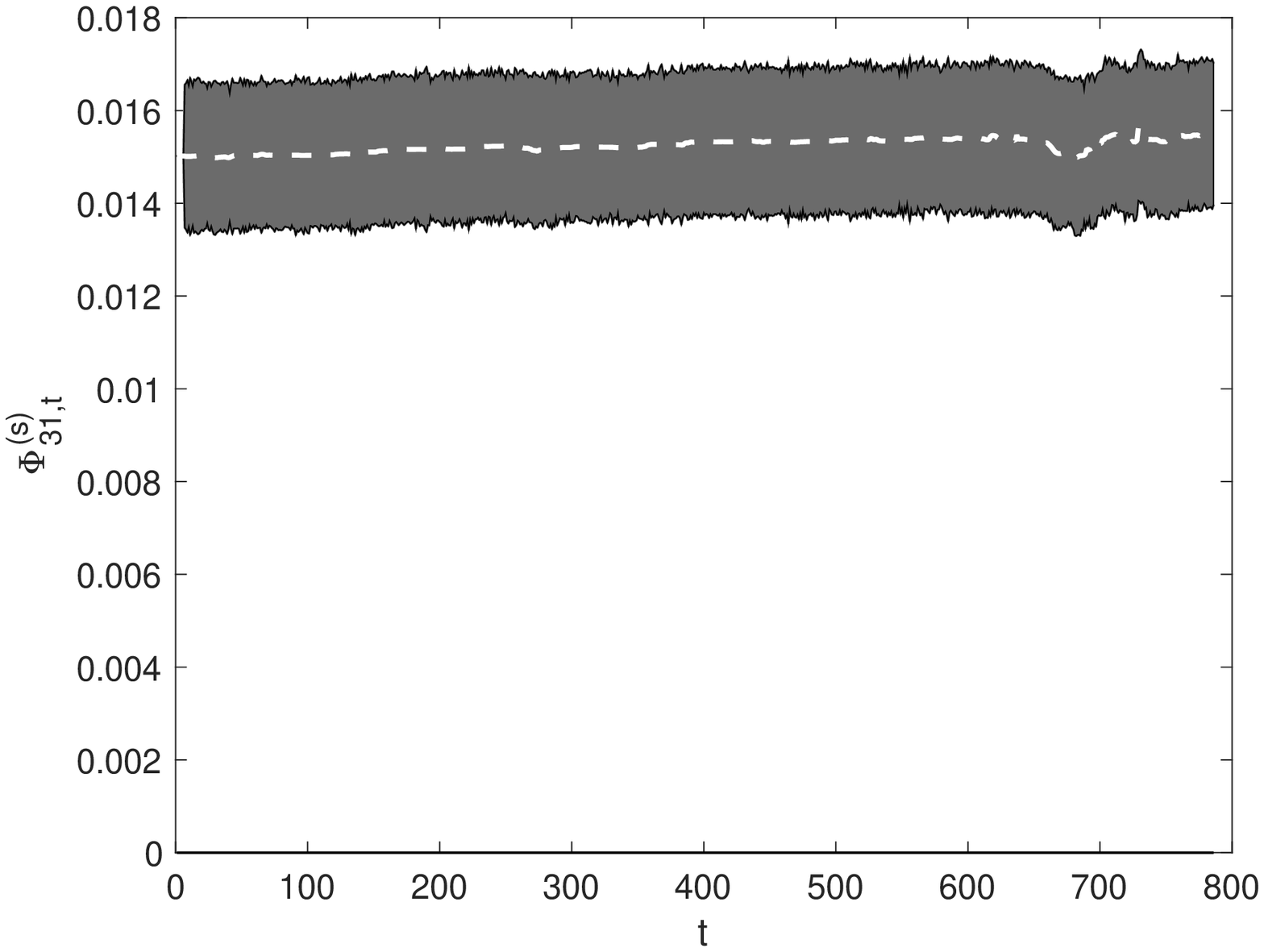}
\includegraphics[scale=0.29]{./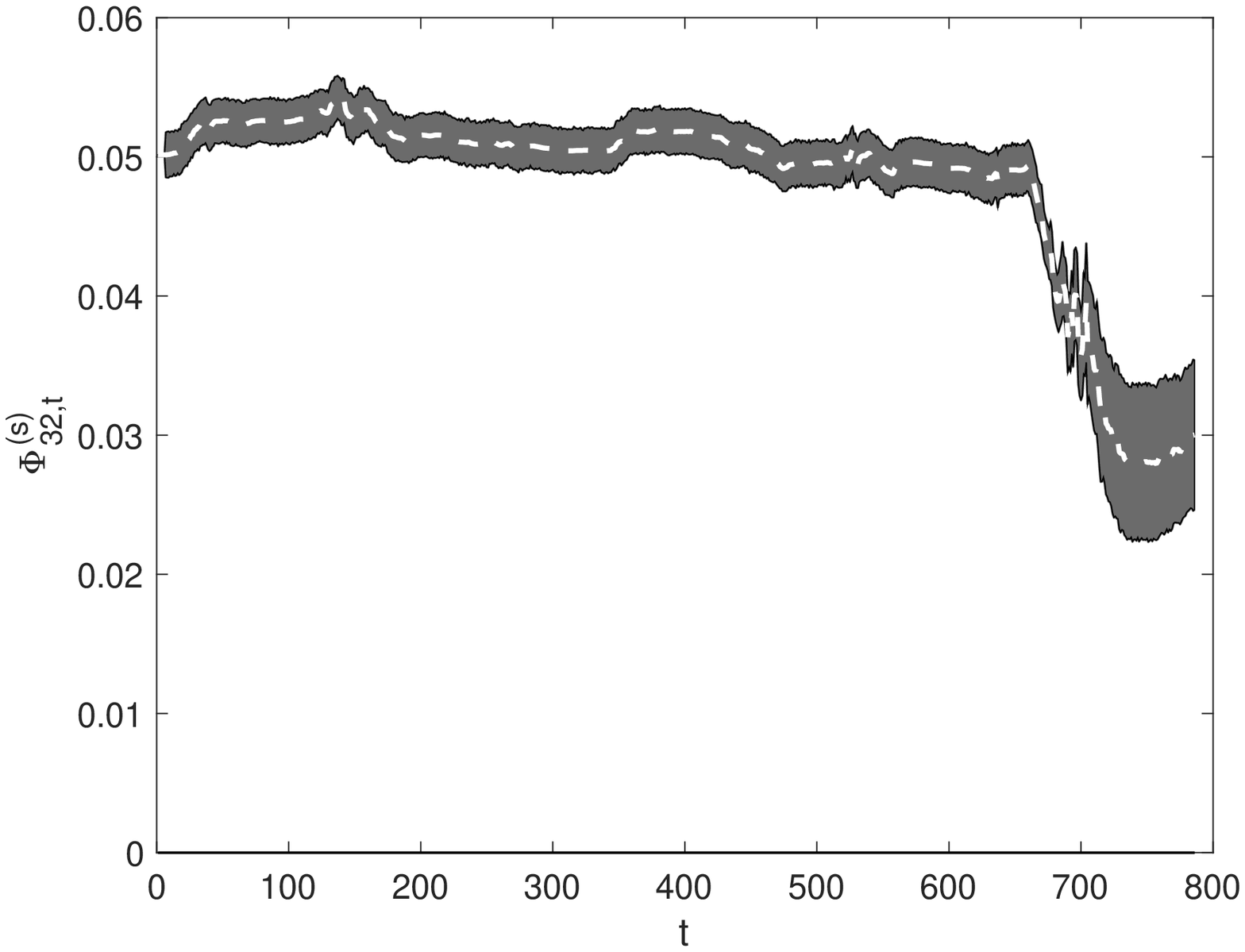}
\includegraphics[scale=0.29]{./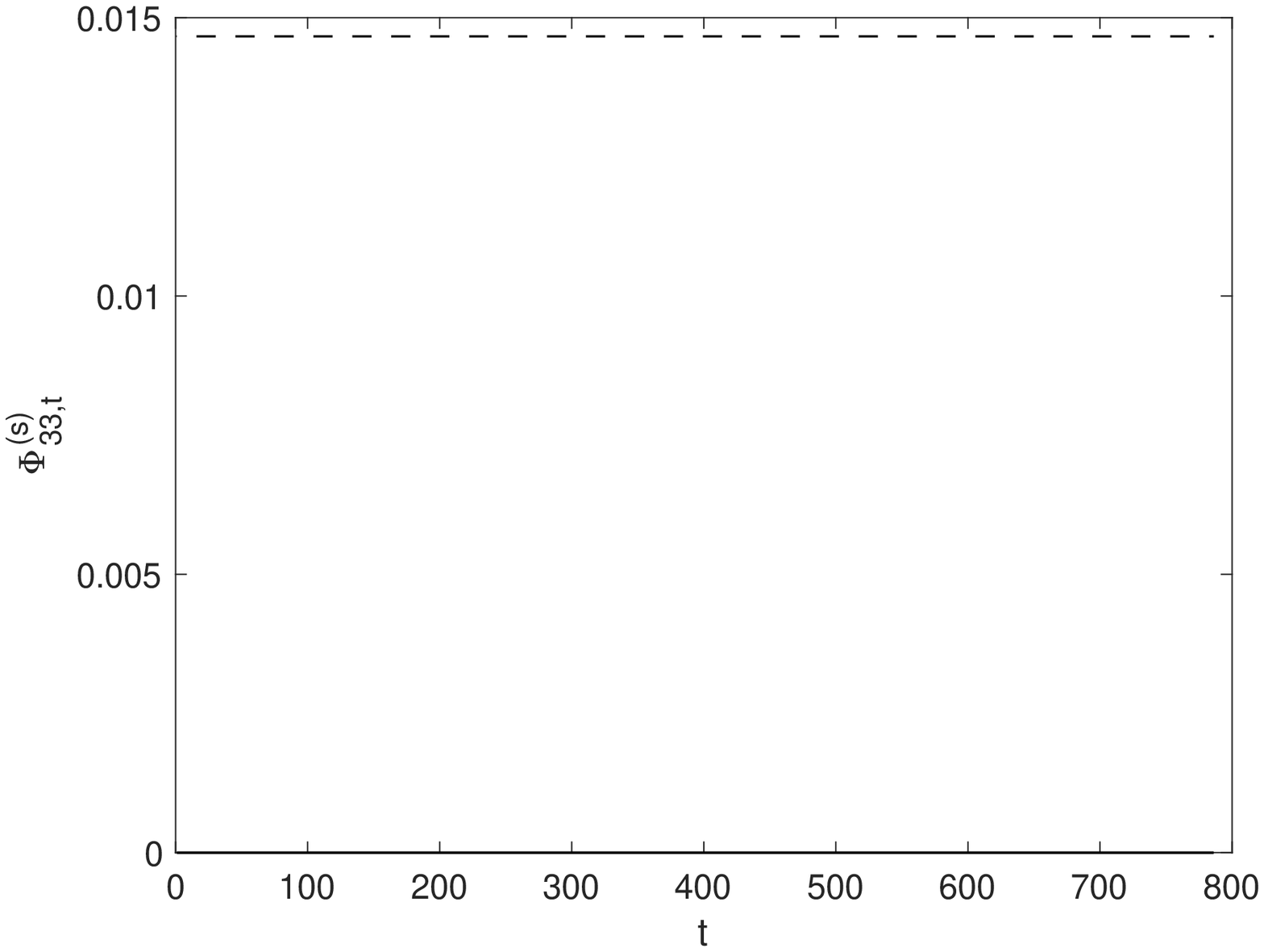}\\
\caption{Filtered time-varying parameters for the matrix $\Phi^{(s)}_t$ over the sample period July 1954 - December 2019. The white dashed lines are the filtered series, while the shadowed regions correspond to the 68\% bands accounting for parameter and filtering uncertainty.}
\label{fig:Phis_bands}
\end{figure}
The economic interpretation of the structural independent shocks $\epsilon_{i,t}$ can be based on the shapes of the impulse response functions (IRFs). In a model with time-varying parameters, the IRFs have to be computed numerically via Monte Carlo simulation. First, we need to specify the conditioning set and the nature of the shocks. Following~\cite{balke2000credit} (for a wider discussion of the topic, please refer to~\citealt{gallant1993nonlinear,koop1996impulse}) we define the IRF as
\begin{equation}\label{eq:IRF}
	IRF_{ij}(k)\doteq \mathbb{E}[y_{i,t+k}|\mathcal{F}_{t-1},\epsilon_{r,t}=\delta_{rj},r=1,\ldots,n;\Theta] -
	                  \mathbb{E}[y_{i,t+k}|\mathcal{F}_{t-1};\Theta]
\end{equation}
where $y_{i,t+k}$ is the $i$-th component of the vector $y_{t+k}$ and $k=1,\ldots,60$. $\Theta$ is the vector of static parameters. The IRF is the change in the conditional expectation of the $i$-th component of the vector of macro variables as a result of a unitary exogenous shock on a single component of $\epsilon_t$. We compute the conditional expectations by randomly drawing vectors of shocks 10 000 times for each $k$ and repeat the computation with the corresponding antithetic variables. Plugging the value of $\Theta$ obtained from the PML in the formula for the IRFs and initializing the time-varying parameters as detailed previously in this Section, we estimate the IRFs via the average over the 20 000 Monte Carlo realizations. Following this procedure, we obtain the bold lines in the IRFs figure. To estimate the associated 68\% confidence bands and to take into proper consideration the filtering and parameter uncertainty, we repeat  
the previous Monte Carlo procedure 120 times. For each repetition, we plug a different draw of $\Theta$ from the asymptotic PML distribution in the expression for the IRF. Following~\cite{blasques2016sample} and~\cite{buccheri2021filtering}, a new time series of time-varying parameters accounting for both filtering and parameter uncertainty is obtained and included in the conditional expectations in~(\ref{eq:IRF}). Finally, the confidence bands can be computed from the variance of the 120 Monte Carlo averages. By construction the band accounts not only for the parameter and filtering uncertainty but also  for the numerical error associated with the finite sample Monte Carlo estimator. 

Even though the previous Monte Carlo procedure is quite standard, what is essentially new, in our IRFs, is that a future structural shock will impact the evolution not only of the macro variables but, crucially, also that of the time-varying parameters. Hence, the shape of the IRF, in addition to change in any point in time being conditional on the information set, will also reflects the future dynamics of the  parameters induced by the shocks. These distinctive features allow the OD time-varying SVAR models to be used to perform policy evaluations compliant with the Lucas Critique. 

To associate the type of structural shocks -- monetary-policy, supply or demand -- to the different components of $\epsilon_t$, we rely on basic economic theory, as in~\cite{gourieroux2017statistical}. Contractionary monetary-policy shocks are expected to have a (short-term and medium-term) negative impact on inflation and a positive impact on the unemployment gap (a proxy for the economic activity, flipped by sign). Contrary to the demand shock, the supply shock is expected to have (short-term and medium-term) influences of opposite signs on economic activity and on inflation. Figure~\ref{fig:IRFs} displays the IRFs and associated 68\% confidence bands resulting from the SD approach to the SVAR model. There is only one of the three shocks that is such that an increase in the short-term rate is accompanied by a decrease in both inflation and an increase of the unemployment gap. This shock corresponds to the third row of the IRFs, and could be seen as a contractionary monetary-policy shock. Out of the two remaining shocks, the first row of IRFs shows influences of opposite signs on inflation and on the unemployment gap. Because this shock has a positive impact on inflation, it could be seen as an expansionary demand shock. The remaining shock could be named as an expansionary supply shock (second row of IRFs). 

As a conclusive remark, it is worth to recall once more that the entire procedure we presented leverages on the independent component analysis by~\cite{gourieroux2017statistical} and therefore the statistical identification and the labeling of the structural shocks do not stem from any specific short-run restriction (SRR)~\citep{sims1980macroeconomics,sims1980comparison}. The SRR approach assumes, in a potentially wrong way, that the contemporaneous impacts of some structural shocks on given variables are null. From our empirical analysis, we conclude that the ordering (inflation, unemployment gap, Fed funds rate) implies an orthogonal matrix $O_t$ statistically not distinguishable from the identity matrix. Then, the de-mixing matrix boils down to $\Sigma_t^{-1}$ which remains lower triangular for the entire period. This implies that the structure of the contemporaneous relations does not change through time. Of course, the structure of the contemporaneous relation is invariant with respect to permutation of the entries of the $y_t$ vector. To prove it, one can apply a permutation $\Perm$~\footnote{Remember that any permutation matrix $\Perm$ is orthogonal, so $\Perm \Perm^\intercal = \mathbb{I}$.} to the vector $y_t$, equation~(\ref{eq:redVAR_het}) can be rewritten as:
\[
	\Perm y_t = \Perm \Phi^{(m)}_{t} \Perm^\intercal \Perm y_{t-1} + \Perm\Phi^{(s)}_{t} \Perm^\intercal \Perm y^{(s)}_{t-2} + \Perm\mathrm{e}^{S_t} O_t \epsilon_t\,.
\]
For the permuted vector $y_t^\Perm \doteq \Perm y_t$, the following relation also holds true
\[
	y_t^\Perm = \Phi^{\Perm ,(m)}_{t} y_{t-1}^\Perm + \Phi^{\Perm,(s)}_{t} y^{\Perm,(s)}_{t-2} + \mathrm{e}^{S_t^\Perm} O_t^\Perm \epsilon_t\,,
\]
where $S_t^\Perm$ is lower triangular and $O_t^\Perm$ is orthogonal. By equating the previous two relations, one concludes that $\Phi^{\Perm ,(m)}_{t} = \Perm \Phi^{(m)}_{t} \Perm^\intercal$ and $\Phi^{\Perm,(s)}_{t}=\Perm\Phi^{(s)}_{t} \Perm^\intercal$, as expected from OLS. Concerning the mixing matrix, one finds $C_t^\Perm = \mathrm{e}^{S_t^\Perm} O_t^\Perm = \Perm \mathrm{e}^{S_t} O_t$. De-mixing the vector $y_t^\Perm$, we obtain
\[
	{O_t^{\Perm}}^\intercal {\mathrm{e}^{S_t^\Perm}}^{-1}y_t^\Perm = {O_t^{\Perm}}^\intercal {\mathrm{e}^{S_t^\Perm}}^{-1}\Perm y_t={O_t}^\intercal {\mathrm{e}^{S_t}}^{-1} y_t\,
\]
where the last term corresponds to the de-mixing of $y_t$ according to equation~(\ref{eq:redVAR_het}). This fact confirms the consistency of the contemporaneous relations among different orderings of the macro variables. 

\begin{figure}
\centering
\includegraphics[scale=0.29]{./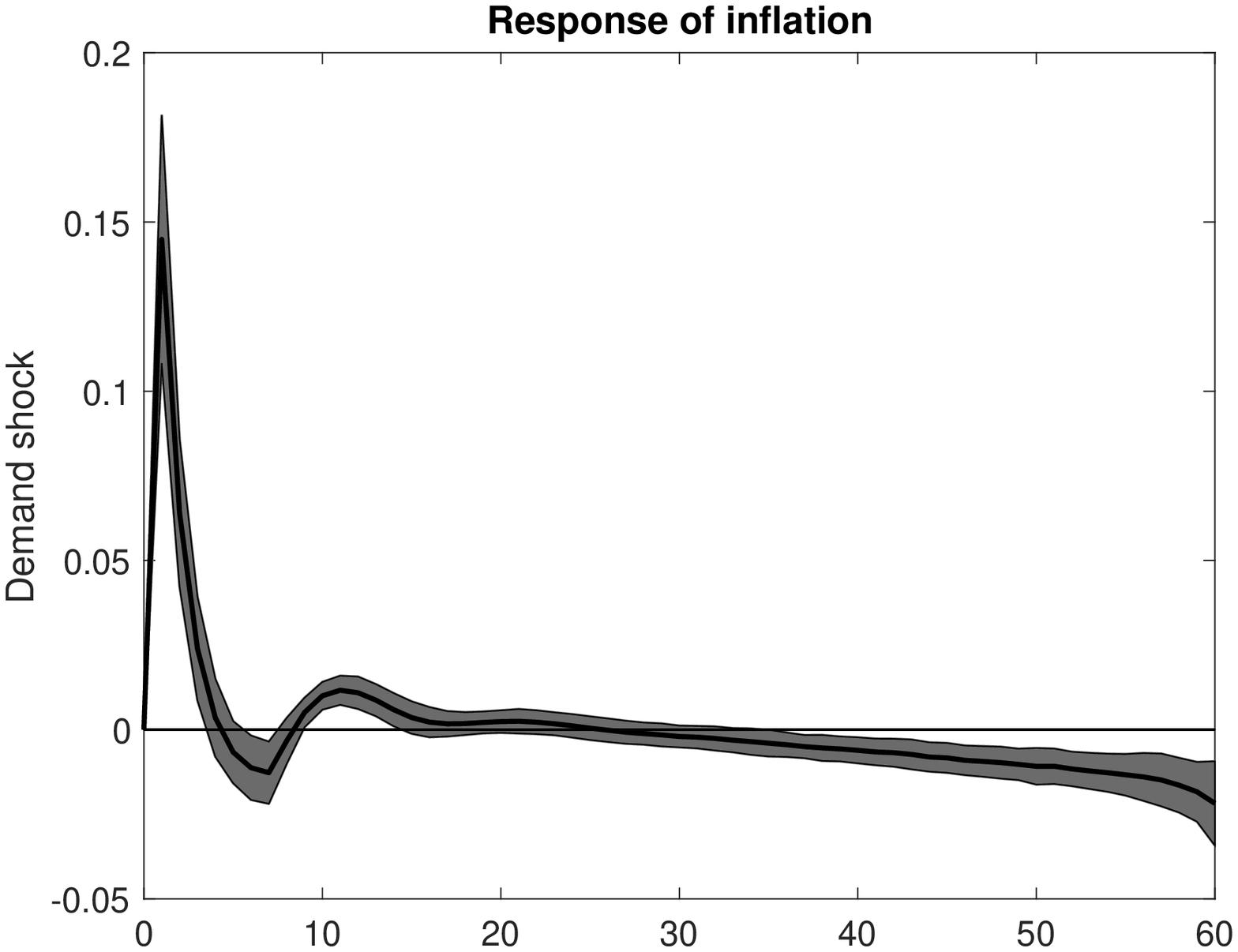}
\includegraphics[scale=0.29]{./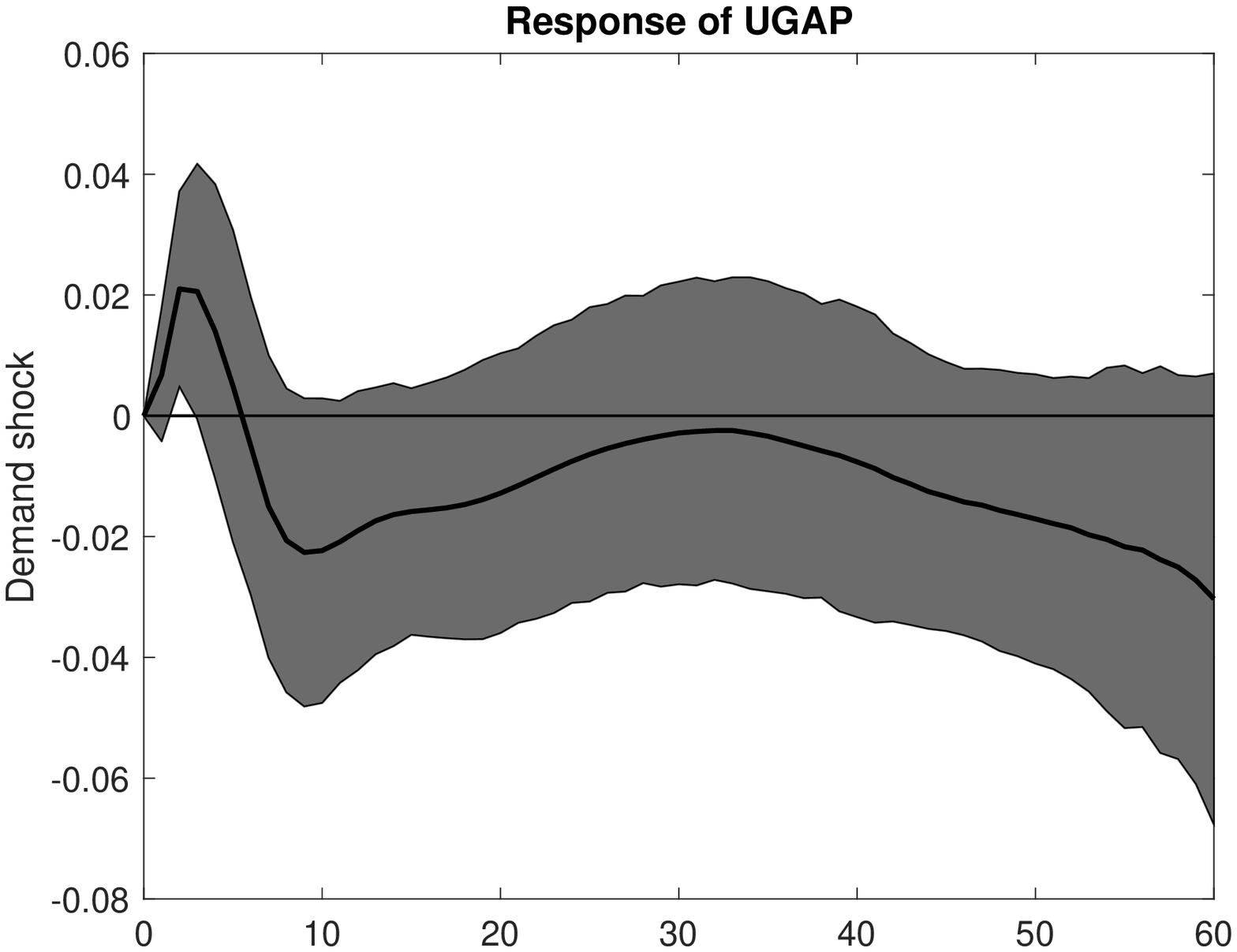}
\includegraphics[scale=0.29]{./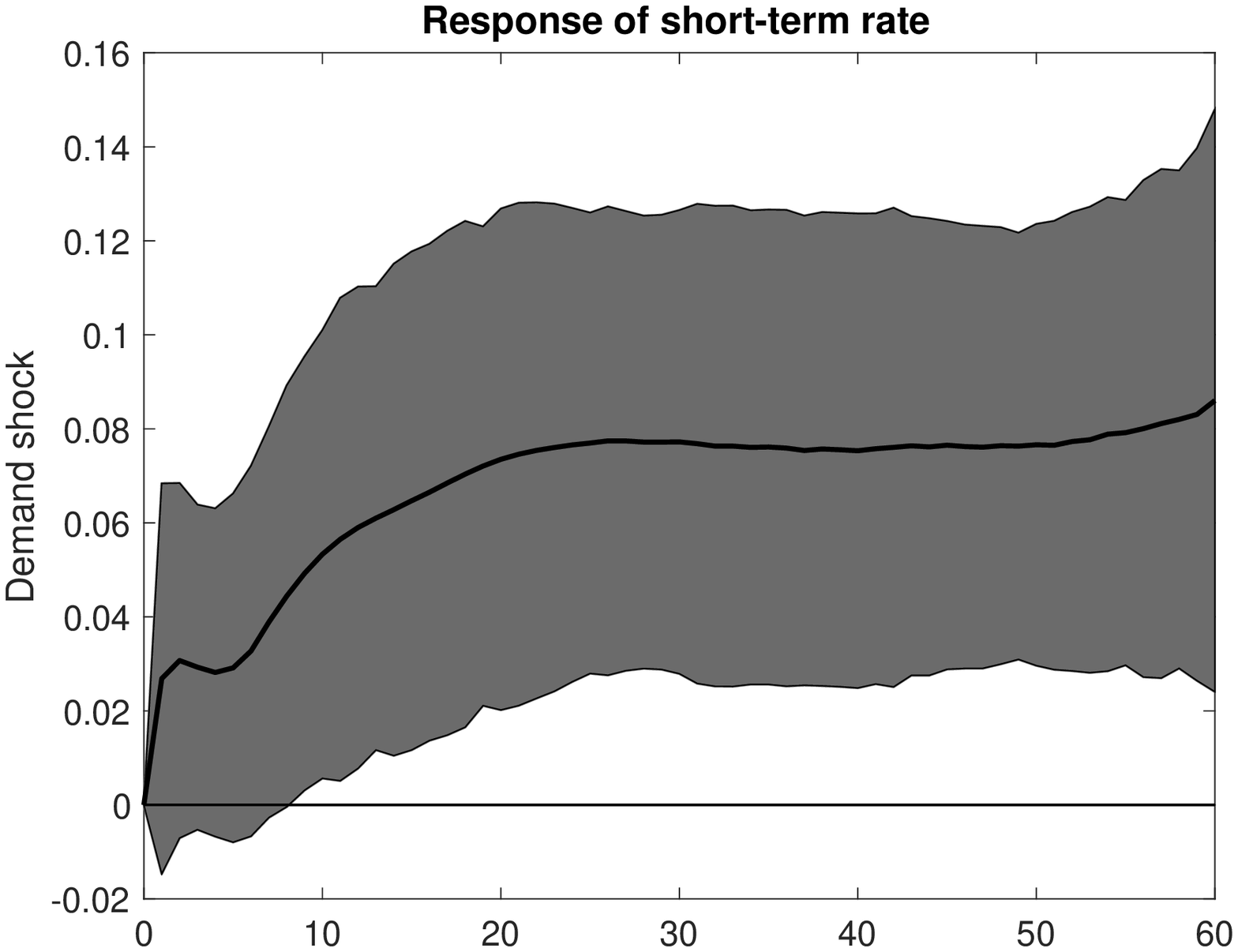}\\
~\\
\includegraphics[scale=0.29]{./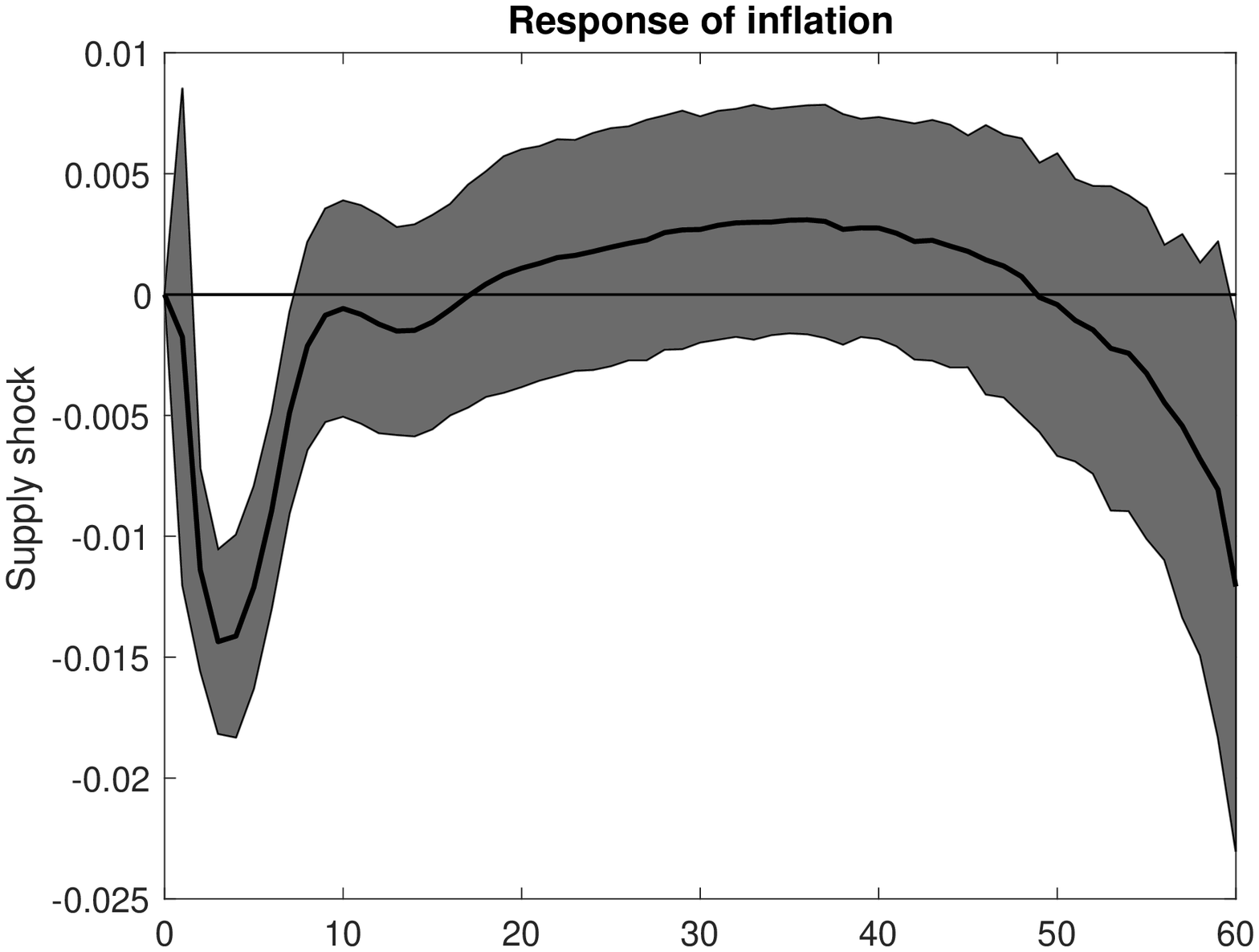}
\includegraphics[scale=0.29]{./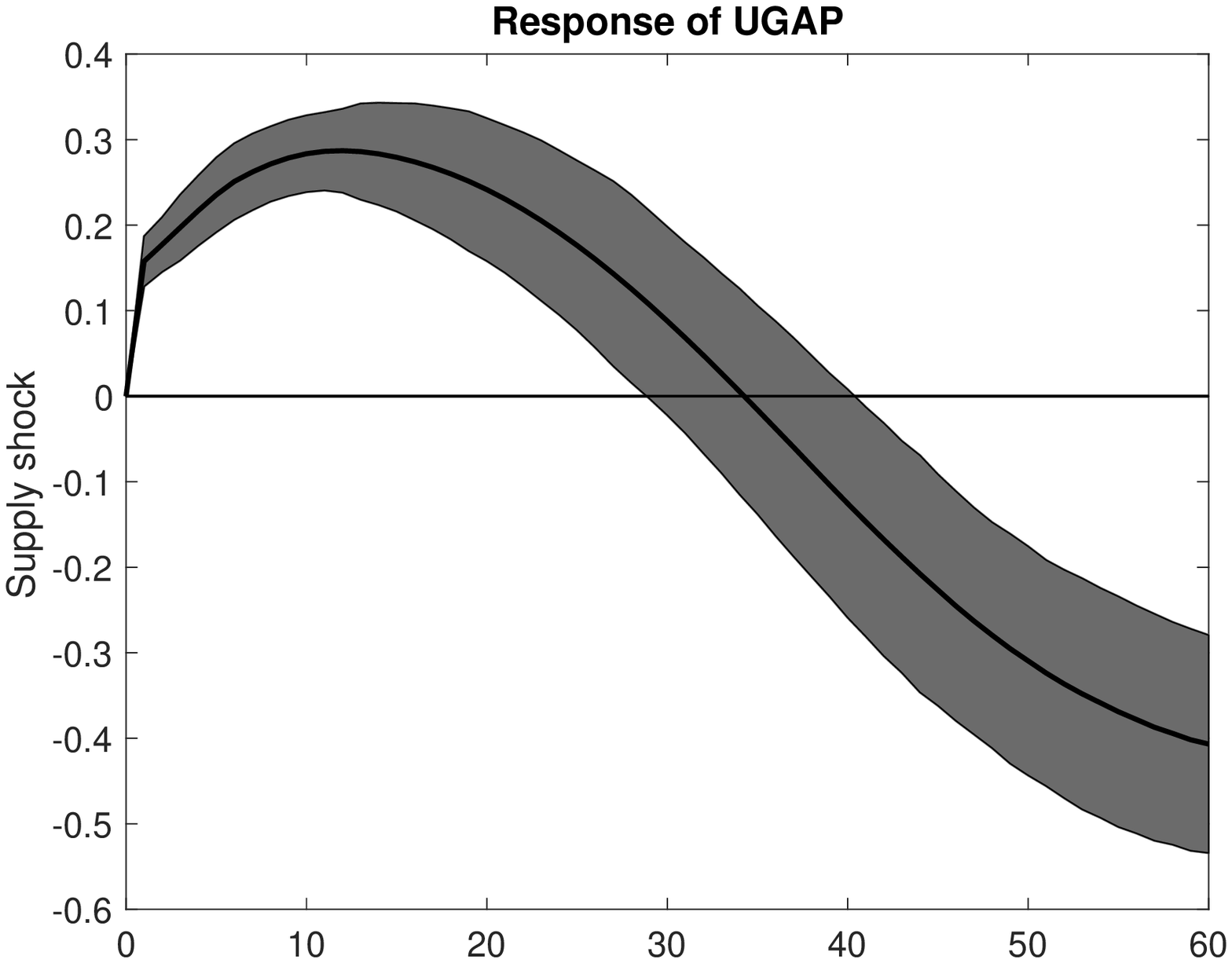}
\includegraphics[scale=0.29]{./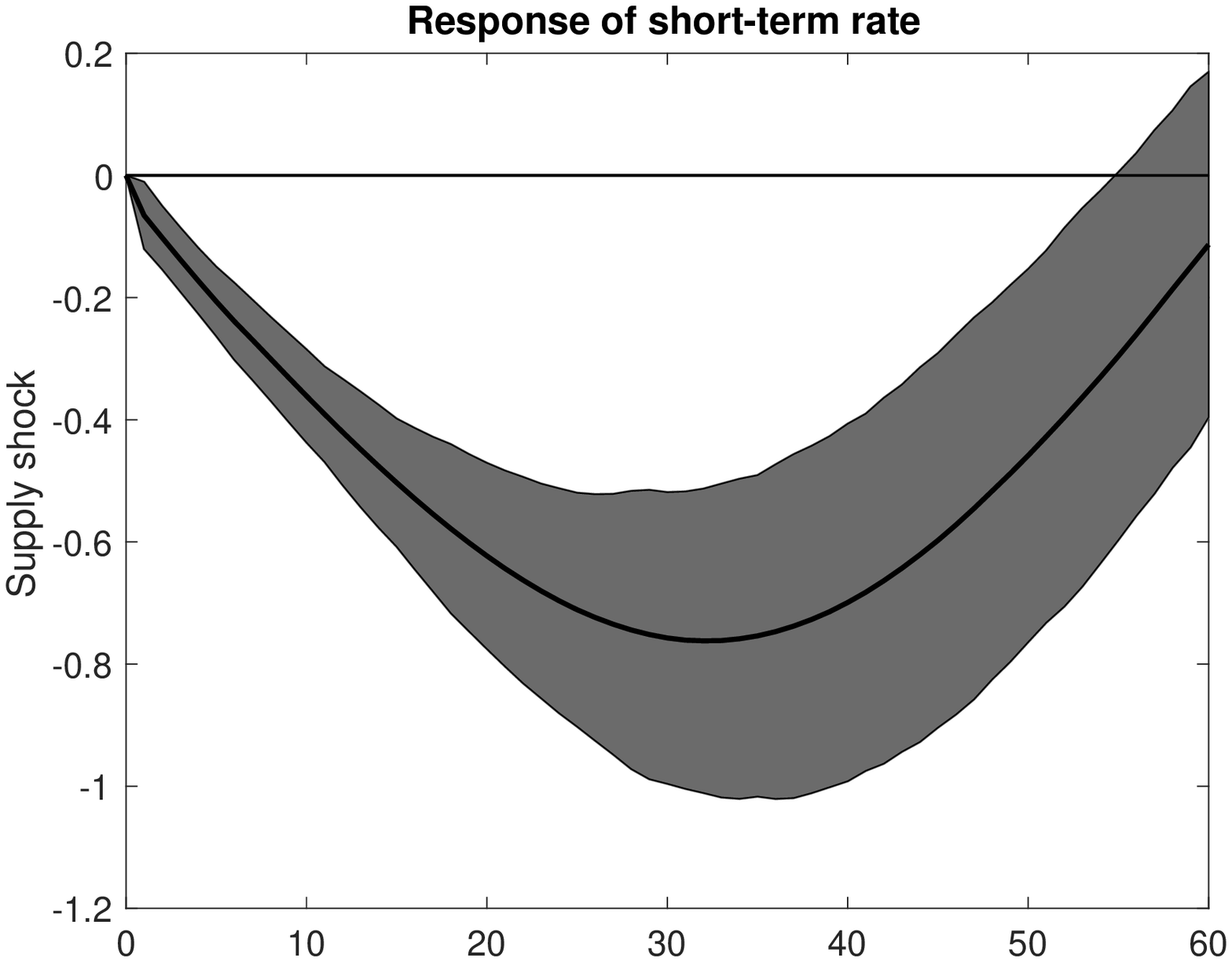}\\
~\\
\includegraphics[scale=0.29]{./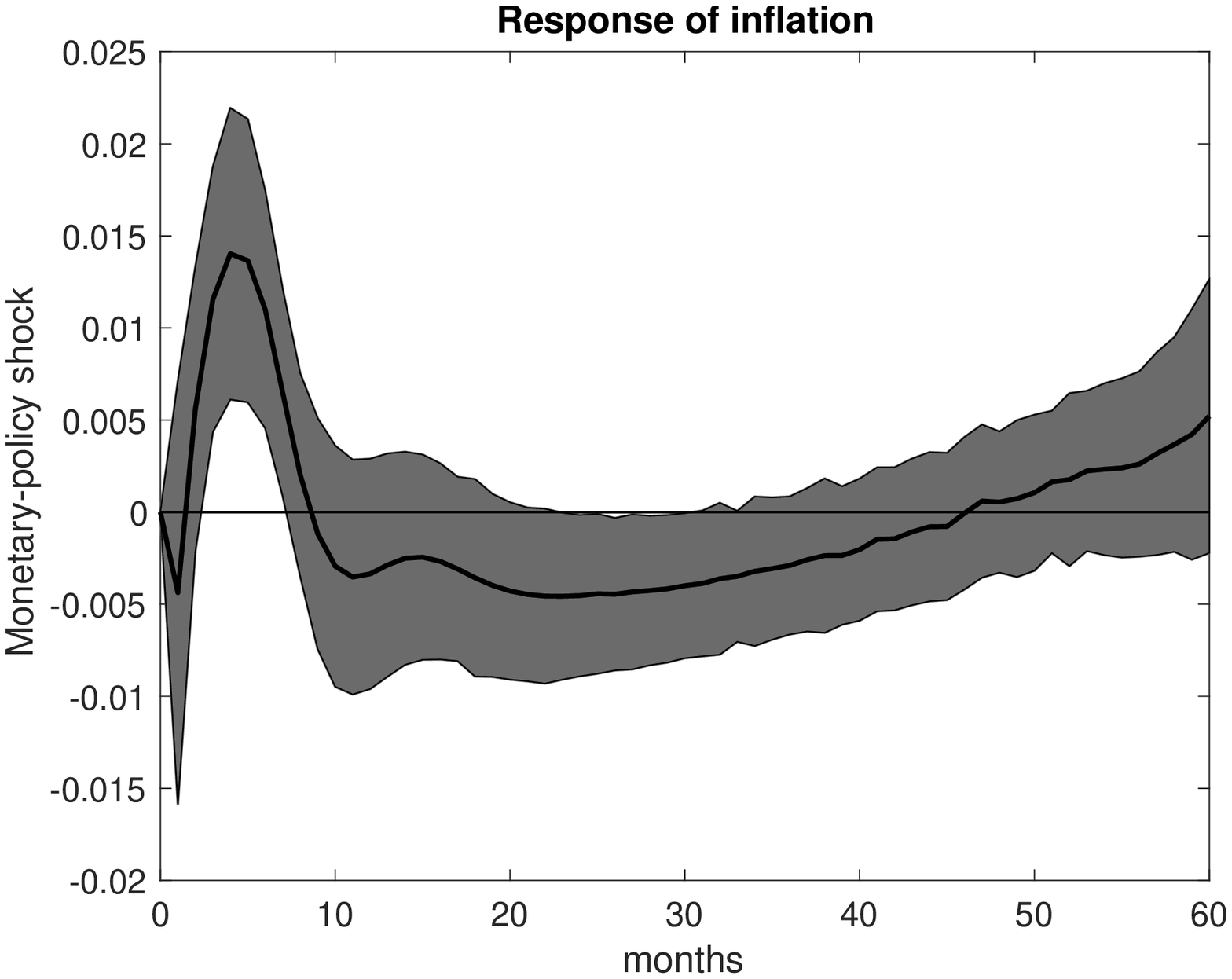}
\includegraphics[scale=0.29]{./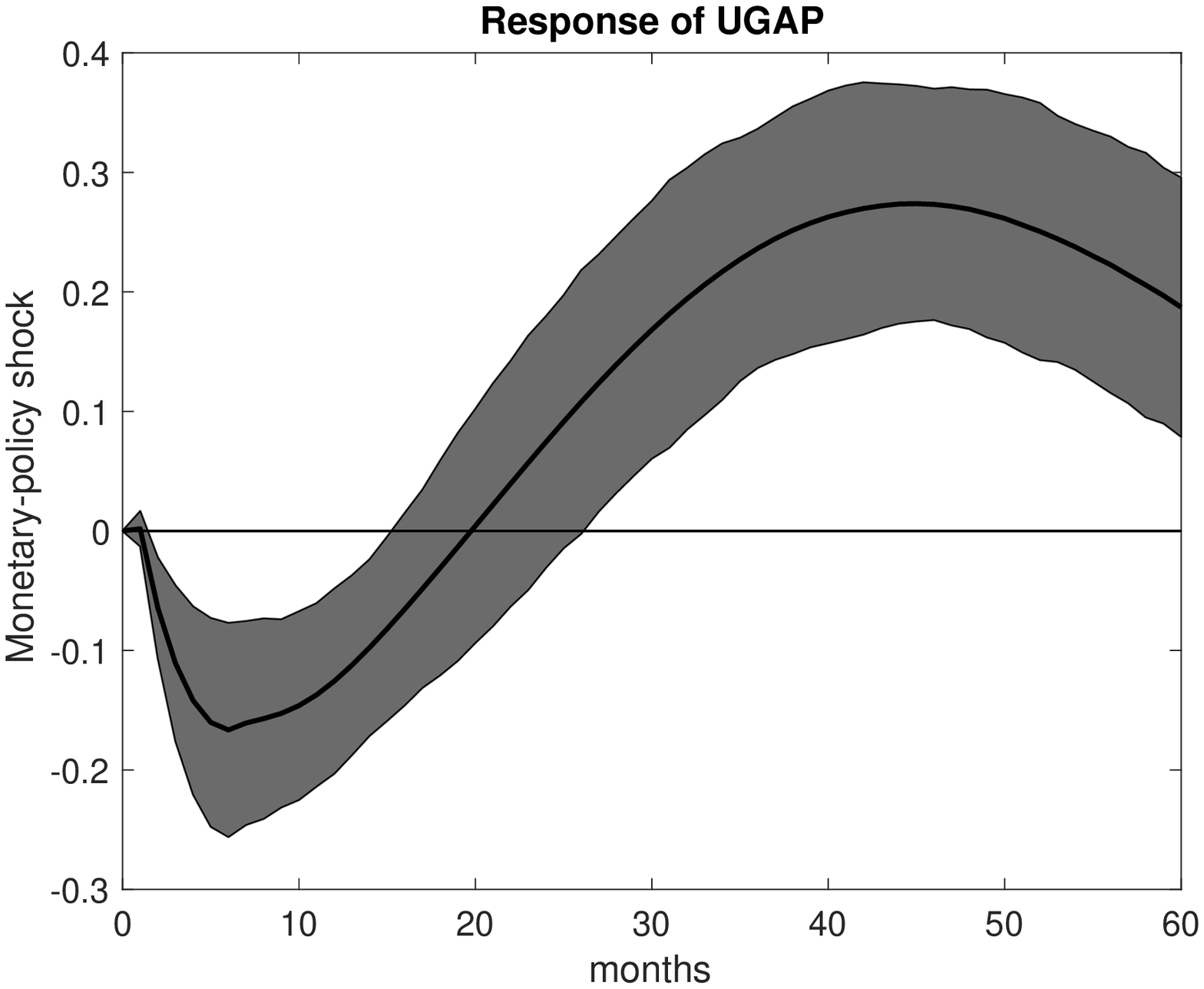}
\includegraphics[scale=0.29]{./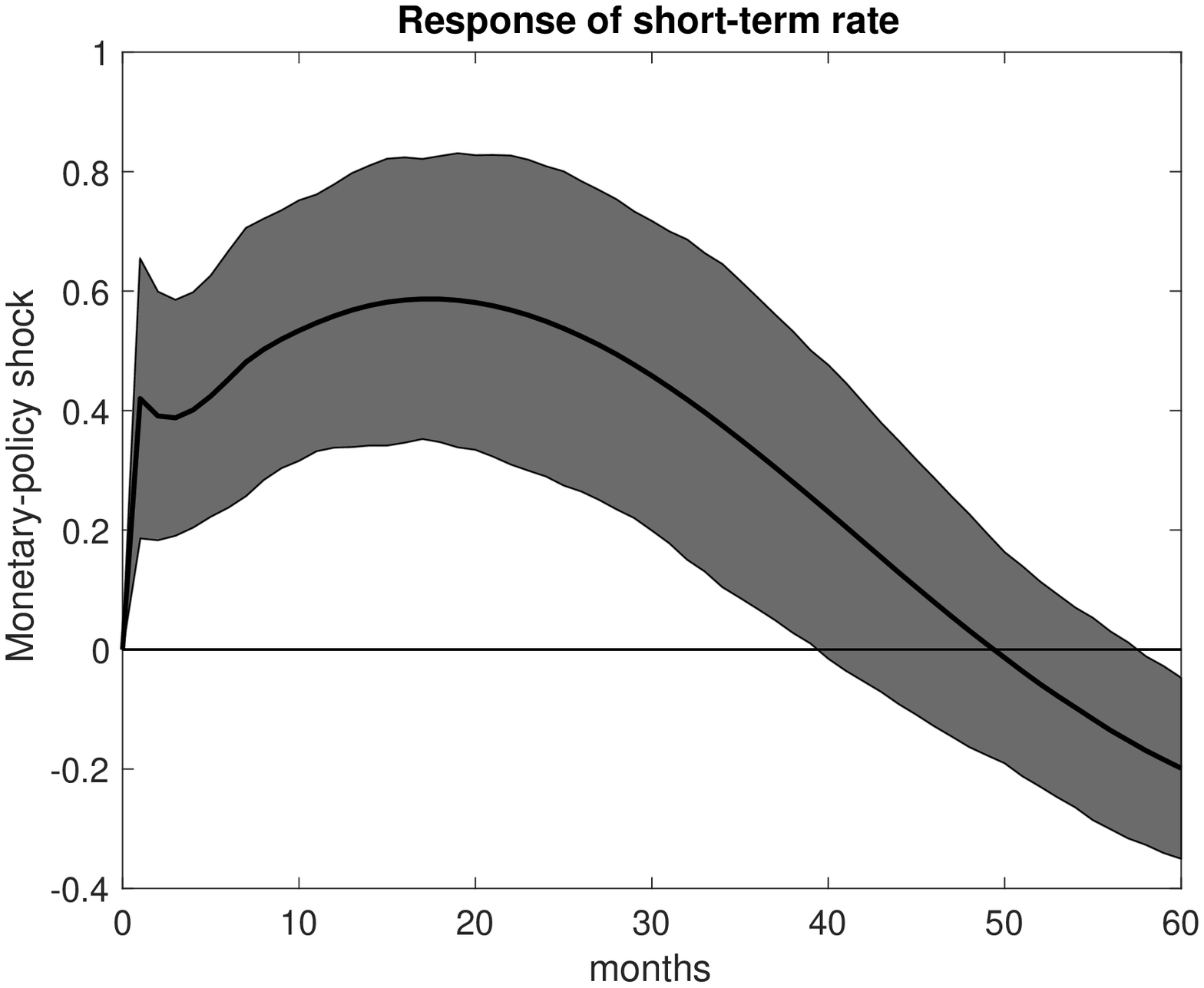}\\
\caption{Impulse response functions conditional on the filtered information retrieved on December 2019. Bold lines: average values computed from 20 000 Monte Carlo samples. Shaded regions: 68\% bands accounting for both the numerical error of the Monte Carlo estimator and the parameter and filtering uncertainties.}
\label{fig:IRFs}
\end{figure}

\section{Conclusions}\label{section:conclusions}

In this paper we propose, following the prescriptions of Lucas (1976), an observation-driven SVAR model where structural shocks drive both the macro variables and the vector of time-varying parameters. Our approach builds on the positive prescription of Lucas on how economic models should be built. In Lucas words the challenge for the Econometrician is to estimate the dynamic reaction function of the parameters to the policy shocks. In order to pursue a reduced form econometric approach, we propose to exploit the score-driven approach to filter, at least approximately, the unobserved dynamics of time-varying parameters.

In order to identify the structural shocks in a time-varying parameter VAR without relying on ad hoc identification restrictions, we extend the independent component analysis approach of~\cite{gourieroux2017statistical} and \cite{lanne2017identification} to this dynamic case. We assume for the pseudo-densities of the shocks a skew-Student's $t$ specification and recover the unobserved time-varying parameters exploiting the score-driven approach of~\cite{GAS1,Harvey_2013}. We formally derive the set of recursive equations for the approximate filtering recursions and prove that, concerning the estimation of the static parameters, no identification issues arise. 

We present an applications to a data sample of US macro time-series which includes inflation, economic activity, and interest rates on a monthly basis. Confirming previous evidence from different streams of literature, we observe a significant heteroscedasticity of the structural shocks covariance and time-variation of the auto-regressive coefficients However, our approach does not require the assumption of any identification restriction. Then, crucially, all our conclusions are purely data driven. We show in an unprecedented way that the orthogonal matrix, which possibly mixes the shocks, does not vary with time. Moreover, at monthly frequency, it is not statistically distinguishable from the identity matrix. We also report the conditional impulse response functions. Their computation is performed following a standard Monte Carlo approach. What is essentially new is that, in our framework, a future structural shock will change both the evolution of the macro variables and of the time-varying parameters. The shape of the impulse response functions then reflects, by construction, both effects thus allowing to employ these type of VAR models to conduct policy evaluations compliant with the Lucas Critique. 
% to analyze the impact of policy interventions.

\clearpage{
\bibliographystyle{Chicago}

}

\newpage

\appendix

\section{Targeting skewness and kurtosis from non-efficient PML}\label{ap:appendix1}

The asymmetry parameter $\delta_i$ and tail exponent $\nu_i$ of the skew Student's $t$ PDF are estimated by solving numerically the following non-linear system of equations~\cite{azzalini2003distributions}
\[
\begin{cases}
&m\left(\frac{\nu_i(3-\delta_i^2)}{\nu_i-3}-\frac{3\nu_i}{\nu_i-2}+2m^2\right)\left(\frac{\nu_i}{\nu_i-2}-m^2\right)^{-3/2}=\zeta_{\epsilon_i}\\
&\left(\frac{3\nu_i^2}{(\nu_i-2)(\nu_i-4)}-\frac{4m^2\nu_i(3-\delta_i^2)}{\nu_i-3}+\frac{6m^2\nu_i}{\nu_i-2}-3m^4\right)\left(\frac{\nu_i}{\nu_i-2}-m^2\right)^{-2}=\kappa_{\epsilon_i}\,,
\end{cases}
\]
where $\zeta_{\epsilon_i}$ and $\kappa_{\epsilon_i}$ are the skewness and kurtosis of the residuals from the non-efficient PML. For sake of readability, in the previous formulas we dropped the dependence of $m$ on $\delta_i$ and $\nu_i$.

\section{Proof of Theorem \ref{th:score}}\label{ap:appendix2}
We sketch the main steps of the proof. Remaining computational details can be easily derived. The starting point to work out the closed form expression for the scores is the computation of the partial derivative for a generic time-varying parameter $\theta_t$
\begin{equation*}
\frac{\partial \log\ell(y_t;\mathcal{F}_{t-1},\theta_t,\delta,\nu)}{\partial \theta_t}=\sum_{i=1}^n \frac{1}{p_{\epsilon_i}(\epsilon_{i,t};0,1,\delta_i,\nu_i)}\frac{\partial \epsilon_{i,t}}{\partial \theta_t} \frac{\partial }{\partial \epsilon_{i,t}}p_{\epsilon_i}(\epsilon_{i,t};0,1,\delta_i,\nu_i)- \frac{\partial}{\partial \theta_t} \mathrm{tr} S_t\,.
\end{equation*}
The computation of $\partial \mathrm{tr} S_t/\partial \theta_t$ is straightforward. Concerning the partial derivatives of $\epsilon_{i,t}$, it holds that
\begin{align}
\frac{\partial \epsilon_{i,t}}{\partial S_{ij,t}}=&e_i^\intercal O_t^\intercal\frac{\partial\mathrm{e}^{-S_t}}{\partial S_{ij,t}}(y_t-\sum_{\ell=1}^p\Phi^\ell_{t}y_{t-\ell})\,,\notag\\
\frac{\partial \epsilon_{i,t}}{\partial A_{ij,t}}=&e_i^\intercal \frac{\partial O_t(A_t)^\intercal}{\partial A_{ij,t}} \mathrm{e}^{-S_t}(y_t-\sum_{\ell=1}^p\Phi^\ell_{t}y_{t-\ell})\,,\notag\\
=& -e_i^\intercal \left(O_t^\intercal \frac{\partial A_t}{\partial A_{ij,t}} (\mathbb{I}+A_t)^{-1}+\frac{\partial A_t}{\partial A_{ij,t}} (\mathbb{I}-A_t)^{-1} O_t^\intercal\right)\mathrm{e}^{-S_t}(y_t-\sum_{\ell=1}^p\Phi^\ell_{t}y_{t-\ell})\,,\label{eq:partialMatrixInverse}\\
\frac{\partial \epsilon_{i,t}}{\partial \Phi^\ell_{ij,t}}=&-e_i^\intercal O_t^\intercal\mathrm{e}^{-S_t}\frac{\partial \Phi^\ell_{t}}{\partial \Phi^\ell_{ij,t}}~y_{t-\ell}\,.\notag
\end{align}
The second equality in~(\ref{eq:partialMatrixInverse}) follows from relation~(25) in~\cite{magnus2019matrix} at page 168. Eventually, the following relation can be readily verified
\[
 \frac{\partial }{\partial \epsilon_{i,t}}p_{\epsilon_i}(\epsilon_{i,t};0,1,\delta_i,\nu_i) = p_{\epsilon_i}(\epsilon_{i,t};0,1,\delta_i,\nu_i)G(\epsilon_{i,t};\delta_i,\nu_i)\,.
\]

\section{Proof of Theorem \ref{th:identif}}\label{ap:appendix3}
To prove the theorem, we show that two filtered time series which are o.e. imply $\tilde{\Theta}=\bar{\Theta}$. We only consider the case of the time-varying parameter $S_{11,t}$ and $\Theta=\{\omega_{S_{11}},\beta_{S_{11}},\alpha_{S_{11}}\}$. From the proof it will be readily clear how to conclude for all remaining time-varying parameters. By construction, two filtered time-series $\{\tilde{S}_{11,t}\}_{t=1,\ldots,T}$ and $\{\hat{S}_{11,t}\}_{t=1,\ldots,T}$ satisfy the recursive equations
\begin{equation*}
\tilde{S}_{11,t+1} = \tilde{\omega}_{S_{11}} + \tilde{\beta}_{S_{11}}\tilde{S}_{11,t} + \tilde{\alpha}_{S_{11}} \nabla_{\tilde{S}_{11,t}}\,,
\end{equation*}
and
\begin{equation*}
\bar{S}_{11,t+1} = \bar{\omega}_{S_{11}} + \bar{\beta}_{S_{11}}\bar{S}_{11,t} + \bar{\alpha}_{S_{11}} \nabla_{\bar{S}_{11,t}}\,.
\end{equation*}
The results from~\cite{comon1994independent,eriksson2004identifiability,gourieroux2017statistical} ensure that if the two time-series are o.e. then they are necessarily identical. Naming $S_{11,t+1}$ the common value of $\tilde{S}_{11,t+1}$ and $\bar{S}_{11,t+1}$ for $t=1,\ldots,T-1$, the following condition must be satisfied
\[
\tilde{\omega}_{S_{11}} + \tilde{\beta}_{S_{11}}S_{11,t} + \tilde{\alpha}_{S_{11}} \nabla_{S_{11,t}}=\bar{\omega}_{S_{11}} + \bar{\beta}_{S_{11}}S_{11,t} + \bar{\alpha}_{S_{11}} \nabla_{S_{11,t}}\,.
\]
Since the equality holds identically for each $S_{11,t}$ and $\nabla_{S_{11,t}}$, we conclude that $\tilde{\omega}_{S_{11}}=\bar{\omega}_{S_{11}}$, $\tilde{\beta}_{S_{11}}=\bar{\beta}_{S_{11}}$, and $\tilde{\alpha}_{S_{11}}=\bar{\alpha}_{S_{11}}$, i.e. $\tilde{\Theta}=\bar{\Theta}$.

\section{Penalized PML}\label{ap:appendix4}

In this appendix, we provide the details concerning the penalization of the pseudo-likelihood. Given an $N\times N$ square matrix $M$, we recall that the spectral radius $\rho(M)$ is defined as:
\begin{equation*}
   \rho(A)\doteq\max \{|\lambda_1|,\ldots,|\lambda_N|\}\,,
\end{equation*}
where $|\lambda_i|$ is the norm of the $i$-th eigenvalue of $M$. Writing the VAR model in equation~(\ref{eq:redVAR_het}) in companion form, the $6n\times 6n$  matrix of auto-regressive coefficients reads
\[
	\Phi_t\doteq \begin{pmatrix}
	\Phi^{(m)}_t & \Phi^{(s)}_t & \Phi^{(s)}_t & \Phi^{(s)}_t & \Phi^{(s)}_t & \Phi^{(s)}_t\\
	           &      & \mathbb{I}_{5n\times 5n}     &      &      & 0_{5n\times n}
	\end{pmatrix}\,.
\]
To enforce the stability of the time-varying VAR model~(\ref{eq:redVAR_het}), one has to impose the condition
\[
	\rho(\Phi_t) < 1\,.
\]
Consistently, we modify the expression for the pseudo-likelihood by adding a penalization when the stability condition is violated, i.e.
\[
	\log \ell_\text{penalized}(y_t;\mathcal{F}_{t-1},\theta_t,\delta,\nu,k) \doteq \log \ell(y_t;\mathcal{F}_{t-1},\theta_t,\delta,\nu) - k(1+\rho(\Phi_t))1_{\rho(\Phi_t)\geq 1}\,,
\]
where the expression for $\log \ell(y_t;\mathcal{F}_{t-1},\theta_t,\delta,\nu)$ is given in~(\ref{eq:obs_loglike}). When the score drives the time-varying parameters $\Phi^{(m)}_t$ and $\Phi^{(s)}_t$ in a region where the process~(\ref{eq:redVAR_het}) becomes unstable, the pseudo-likelihood drops by a quantity which grows linearly with the spectral radius. The linear coefficient $k$ determines the severity of the penalization. The latter acts, through the scores $\nabla_{\Phi^{(m)}_t}$ and $\nabla_{\Phi^{(s)}_t}$, as a driving force which pushes the trajectory back to the stable region. The value of $k$ has to be sufficiently large so that the cumulative drop of the likelihood from any violation of the stability condition dominates the cumulative increase of $\log \ell$ when the stability is violated. For the dataset considered in the empirical analysis, we experimented several values and found that whenever $k$ is larger than a hundred the algorithm works very well and no violations are detected. Then, the indicator function $1_{\rho(\Phi_t)\geq 1}$ is zero for all $t=1,\ldots,T$ and thus $\ell_\text{penalized}(y_t;\mathcal{F}_{t-1},\theta_t,\delta,\nu,k)=\ell(y_t;\mathcal{F}_{t-1},\theta_t,\delta,\nu)$. The scores $\nabla_{S_t}$ and $\nabla_{A_t}$ are not affected by the penalization. The remaining scores need to be corrected whenever a violation of the stability is detected. In particular, we obtain
\begin{align*}
	\nabla^\text{penalized}_{\Phi^{(m)}_{ij,t}} & = \nabla_{\Phi^{(m)}_{ij,t}} - k \frac{\partial \rho(\Phi_t)}{\partial \Phi^{(m)}_{ij,t}}1_{\rho(\Phi_t)\geq 1}\,,\\
	\nabla^\text{penalized}_{\Phi^{(s)}_{ij,t}} & = \nabla_{\Phi^{(s)}_{ij,t}} - k \frac{\partial \rho(\Phi_t)}{\partial \Phi^{(s)}_{ij,t}}1_{\rho(\Phi_t)\geq 1}\,,
\end{align*}
for $i,j=1,\ldots,n$. To compute the partial derivative of the spectral radius w.r.t.  the components of $\Phi^{(m)}$ and $\Phi^{(s)}$, we proceed as follows. First, we recall a result known as Gelfand's \textit{spectral radius formula} which states that
\[
	\rho(\Phi_t) = \lim_{r\rightarrow +\infty} ||\Phi_t^r||^{1/r}\,,
\] 
for any matrix norm $||\cdot||$. We specify the previous formula for the operator norm $||\cdot||_2$, which is identical to the largest singular value of a matrix. Remember that, given a square matrix $A$, its singular value decomposition (SVD) reads
\[
	A = U \sigma V^\intercal\,,
\]
where $U$ and $V$ are orthogonal matrices; $\sigma$ is a diagonal matrix whose decreasing non-negative elements on the diagonal are named singular values. The largest singular value is often denoted as $\sigma_{\max}$. Then, $||A||_2=\sigma_{\max}$. We set $r=2^q$ and compute the SVD 
\[
    \Phi_t^{2^q} = U_t \sigma_t V_t^\intercal\,. 
\]
For sufficiently large $q$, we obtain
\[
	\rho(\Phi_t) \simeq \sigma_{\max,t}^{1/2^q}\,. 
\]
The partial derivative we are interested in can be approximated as
\[
	\frac{\partial \rho(\Phi_t)}{\partial \Phi^{(m)}_{ij,t}}
	\simeq \frac{\partial \sigma_{\max,t}^{1/2^q}}{\partial \Phi^{(m)}_{ij,t}}
	=\frac{1}{2^q} \sigma_{\max,t}^{1/2^q-1}\frac{\partial \sigma_{\max,t}}{\partial \Phi^{(m)}_{ij,t}}
	=\frac{1}{2^q} \sigma_{\max,t}^{1/2^q-1}\left(U_t^\intercal\frac{\partial \Phi_t^{2^q}}{\partial \Phi^{(m)}_{ij,t}}V_t\right)_{11}\,,
\]
where the last equality follows from a known property of the partial derivatives of the singular values (please refer to~\cite{giles2008collected} for details). A similar approximation holds for the derivatives w.r.t.  $\Phi^{(s)}_{ij,t}$. In the empirical application in Section~\ref{sec:realdata} we set $q=10$, which corresponds to $r=1024$. As a final comment, the computation of $\partial \Phi_t^{2^q}/\partial \Phi^{(m)}_{ij,t}$ can be performed efficiently exploiting the recursive relation
\[
    \frac{\partial \Phi_t^{2^q}}{\partial \Phi^{(m)}_{ij,t}} = \frac{\partial (\Phi_t^{2^{q-1}}\Phi_t^{2^{q-1}})}{\partial \Phi^{(m)}_{ij,t}}
    = \frac{\partial \Phi_t^{2^{q-1}}}{\partial \Phi^{(m)}_{ij,t}}\Phi_t^{2^{q-1}} + \Phi_t^{2^{q-1}} \frac{\partial \Phi_t^{2^{q-1}}}{\partial \Phi^{(m)}_{ij,t}}\,,
\]
which motivates the choice of $r$ as a power of 2.
\end{document}